\numberwithin{equation}{section}
\theoremstyle{plain}
\newtheorem{thm}{Theorem}%[section]
\newtheorem{prop}{Proposition}[section]
\newtheorem{lem}{Lemma}[section]
\newtheorem*{thm*}{Theorem}
\newtheorem{cor}{Corollary}
\newtheorem*{conj*}{Conjecture}
\theoremstyle{definition}
\newtheorem{defn}{Definition}[section]
\theoremstyle{remark}
\newtheorem{rk}{Remark}[section]
\newcounter{parentnumber}
\newenvironment{subtheorem}[1]{%
  \counterwithin*{thm}{parentnumber}
  \def\subtheoremcounter{#1}%
  \refstepcounter{#1}%
  \protected@edef\theparentnumber{\csname the#1\endcsname}%
  \setcounter{parentnumber}{\value{#1}}%
  \setcounter{#1}{0}%
  \expandafter\def\csname the#1\endcsname{\theparentnumber\Alph{#1}}%
  \ignorespaces
}{%
  \setcounter{\subtheoremcounter}{\value{parentnumber}}%
  \counterwithout*{thm}{parentnumber} % kick it from the reset list
  \ignorespacesafterend
}
\NewDocumentEnvironment{manual}{O{thm}m}
 {%
  \begin{#1}
 }
 {\end{#1}}
\crefname{thm}{Theorem}{Theorems}
\renewcommand{\Bbb}{\mathbb}
\newcommand{\ve}{\varepsilon}
\newcommand{\les}{\lesssim}
\newcommand{\Diff}{\operatorname{Diff}}
\newcommand{\grd}{\operatorname{grad}}
\newcommand{\Ric}{\mathrm{Ric}}
\newcommand{\R}{\Bbb R}
\renewcommand{\Re}{\mathrm{Re}}
\renewcommand{\Im}{\mathrm{Im}}
\begin{document}

\title{Gravitational~collapse~to~extremal~black~holes\\ and the third law of black hole thermodynamics}

\author[1]{Christoph~Kehle\thanks{christoph.kehle@eth-its.ethz.ch}}
\author[2]{Ryan Unger\thanks{runger@math.princeton.edu}}
\affil[1]{\small  Institute~for~Theoretical~Studies \& Department of Mathematics,~ETH~Zürich,

Clausiusstrasse~47,~8092~Zürich,~Switzerland \vskip.1pc \ 
}
\affil[2]{\small  Department of Mathematics, Princeton University, 
	Washington~Road,~Princeton~NJ~08544,~United~States~of~America \vskip.1pc \  
	}

 \date{February 15, 2024}
\maketitle

\begin{abstract}
We construct examples of black hole formation from regular, one-ended asymptotically flat Cauchy data for the Einstein--Maxwell-charged scalar field system in spherical symmetry which are exactly isometric to extremal Reissner--Nordstr\"om after a finite advanced time along the event horizon.  
Moreover, in each of these examples the apparent horizon of the black hole coincides with that of a Schwarzschild solution at earlier advanced times.
In particular, our result can be viewed as a definitive \emph{disproof} of the ``third law of black hole thermodynamics.''
  
The main step in the construction is a novel $C^k$ characteristic gluing procedure, which interpolates between a light cone in Minkowski space and a Reissner--Nordström event horizon with specified charge to mass ratio $e/M$. Our setup is inspired by the recent work of Aretakis--Czimek--Rodnianski on perturbative characteristic gluing for the Einstein vacuum equations. However, our construction is fundamentally nonperturbative and is based on a finite collection of scalar field pulses which are modulated by the Borsuk--Ulam theorem.
\end{abstract}

\thispagestyle{empty}
\newpage 
\tableofcontents
\enlargethispage{1in}
\thispagestyle{empty}
\newpage

\section{Introduction} \label{sec:introduction}

Following pioneering work of Christodoulou \cite{DC-thesis} and Hawking \cite{Hawking-PRL-71} on energy extraction from rotating black holes,  Bardeen, Carter, and Hawking \cite{BCH} proposed---via analogy to classical thermodynamics---the celebrated \emph{four laws of black hole thermodynamics}. In particular, letting the surface gravity $\kappa$ of the black hole take the role of its temperature, an identification later vindicated by the discovery of Hawking radiation \cite{Hawking-radiation}, they proposed a \emph{third law}  in analogy to ``Nernst's theorem'' in classical thermodynamics. 
 
 \begin{conj*}[The third law of black hole thermodynamics]
 A subextremal black hole cannot become extremal in finite time by any continuous process, no matter how idealized, in which the spacetime and matter fields remain regular and obey the weak energy condition. 
\end{conj*}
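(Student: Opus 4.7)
The plan is to \emph{disprove} the stated conjecture by constructing a regular, one-ended asymptotically flat Cauchy datum for the Einstein--Maxwell--charged scalar field system whose maximal future development contains an event horizon that is exactly isometric, after finite advanced time, to the event horizon of extremal Reissner--Nordstr\"om. Rather than evolving data forward and trying to achieve extremality as a limit, I would prescribe the geometry along the eventual event horizon and solve a characteristic initial value problem backwards from there.

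Concretely, I would formulate this as a characteristic gluing problem along a null hypersurface $\mathcal{H}$ destined to become the future event horizon, partitioned into three segments: an initial segment issued from a regular center of symmetry (hence isometric to a Minkowski light cone), an intermediate transition segment supporting the charged scalar field, and a final segment isometric to an extremal Reissner--Nordstr\"om event horizon with $e=M$. On the transition segment I would parametrize the charged scalar field as a finite sum of localized pulses with adjustable amplitudes, supports, and phases. The constraint equations on $\mathcal{H}$---Raychaudhuri together with the Maxwell and scalar transport equations---then reduce to an ODE system relating these parameters to the Hawking mass $m$ and the Maxwell charge $q$. Gluing amounts to a finite number of algebraic conditions: the fluxes must carry $(m,q)$ from $(0,0)$ to $(M,M)$ and the higher-order $C^k$ matching conditions must vanish at both junctions. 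Since $M$ is macroscopic, this is intrinsically nonperturbative.

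To solve the resulting system I would arrange the pulse family to contain one more parameter than the number of gluing defects, producing a continuous map $\Phi$ from a sphere into a Euclidean space, and exhibit an odd symmetry on the parameters---antipodal phase reversals of charged pulses are a natural candidate---under which $\Phi$ becomes equivariant. Borsuk--Ulam then forces a zero, delivering parameters that achieve exact gluing. In parallel, on the interior side of $\mathcal{H}$, one must arrange that the apparent horizon coincide at earlier advanced times with that of a Schwarzschild solution of smaller mass $M'<M$; this is a condition on where the ingoing null expansion of the symmetry spheres first vanishes, and can be enforced by suitable choice of pulse profiles. Having constructed the data on $\mathcal{H}$, I would solve the characteristic problem against an auxiliary ingoing cone to fill in a spacetime neighborhood, and then pull back to produce smooth one-ended asymptotically flat Cauchy data on a complete spacelike slice through the regular center.

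The main obstacle is the genuine nonperturbativity of the gluing step. Perturbative linearized techniques in the style of Aretakis--Czimek--Rodnianski cannot apply directly, because the charge-to-mass ratio $e/M$ must jump from $0$ to $1$, a macroscopic change rather than a small deformation. The full nonlinearity of the Einstein--Maxwell--charged scalar system must be engaged head-on, and existence of gluing parameters must come from a topological argument (Borsuk--Ulam) rather than an implicit-function or contraction-mapping argument. A subsidiary challenge is verifying the weak energy condition and $C^k$ regularity throughout the construction; this constrains the admissible pulse profiles and is essential for the counterexample to be a valid disproof of the conjecture as stated.
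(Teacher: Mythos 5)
Your proposal is correct and follows essentially the same route as the paper: a nonperturbative characteristic gluing of a Minkowski cone (issued from a regular center) to an extremal Reissner--Nordstr\"om event horizon, with the scalar field built from finitely many pulses whose amplitudes are fixed by a Borsuk--Ulam argument exploiting the odd symmetry $\phi\mapsto-\phi$, followed by backwards solving to obtain one-ended Cauchy data and the insertion of an exact Schwarzschild apparent-horizon piece at earlier advanced time. The only cosmetic difference is bookkeeping: the paper handles the charge condition by restricting to a starshaped hypersurface $\mathfrak Q^{2k}\subset\Bbb R^{2k+1}$ diffeomorphic to $S^{2k}$ rather than by adding an extra parameter, and the energy condition is automatic since the Einstein--Maxwell-charged scalar field model manifestly satisfies the dominant energy condition.
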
 

This version is distilled from the literature, particularly from the work of Israel \cite{Israel-third-law, Israel-pdf} who added explicit mention of regularity and the weak energy condition to avoid previously known examples \cite{Delacruz-Israel, Kuchar, Boulware, Farrugia-Hajicek, Sullivan-Israel, Proszynski} which would otherwise violate the third law. In this paper, we show that the third law is fundamentally flawed in a manner that does not appear to be salvageable by further reformulation. Indeed, we construct counterexamples in the Einstein--Maxwell-charged scalar field model in spherical symmetry, a model which satisfies the dominant energy condition, arising from arbitrarily regular initial data on a one-ended asymptotically flat hypersurface.

\begin{thm}\label{thm-main-page}
Subextremal black holes can become extremal in finite time, evolving from \ul{regular} initial data. In fact, there exist regular one-ended Cauchy data for the Einstein--Maxwell-charged scalar field system 
which undergo gravitational collapse and form an exactly Schwarzschild apparent horizon, only for the spacetime to form an exactly extremal Reissner--Nordstr\"om event horizon at a later advanced time. 

In particular, \ul{the ``third law of black hole thermodynamics'' is false}.

\end{thm}

\begin{figure}[ht]
 \centering{
 \def\svgwidth{14pc}
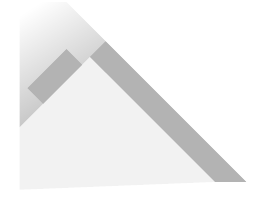
}
\caption{Penrose diagram of our counterexample to the third law arising from regular initial data on $\Sigma$. The northwest edge of the Schwarzschild region is exactly isometric to a section of the $r=2M$ hypersurface in Schwarzschild. The outermost apparent horizon $\mathcal A'$ is initially indistinguishable from Schwarzschild and then jumps out in finite time to be exactly isometric to the event horizon of extremal Reissner--Nordstr\"om. For speculations about the future boundary of the interior, see already \cref{figure-1-comment}. The behavior of our solutions can be modified to be subextremal near $i^0$, see already \cref{rk:Leonhard}.}
\label{fig:first-page}
\end{figure}

Our result also clarifies some issues raised by Israel in \cite{Israel-third-law, Israel-pdf} who seemingly associated a disconnected outermost apparent horizon with a severe lack of regularity of the spacetime metric and/or matter fields. We stress that \ul{our examples are regular despite the disconnectedness of the apparent horizon}. We note moreover that Israel seemed to associate extremization with the black hole ``losing its trapped surfaces.'' This confusion appears to be related to his implicit assumption that the apparent horizon is connected. Since the Einstein--Maxwell-charged scalar field matter manifestly obeys the dominant energy condition,  \ul{trapped surfaces are not lost in any sense, nonetheless, the black hole becomes extremal in finite time}.
 In the examples we construct, there exists an open set of trapped spheres inside the black hole region, which persist for all advanced time until they encounter the Cauchy horizon or a curvature singularity inside the black hole. However, there is a neighborhood of the event horizon which does not contain any (strictly) trapped surfaces. For an extended discussion of these issues, see already \cref{retiring}.

\begin{rk}\label{rk:Leonhard}
Note that in discussions of the third law, the focus is typically on dynamics near the event horizon and apparent horizon, in late advanced time. Our counterexamples depicted in \cref{fig:first-page} are isometric to extremal Reissner--Nordstr\"om for all sufficiently late advanced times and all retarded times to the past of the event horizon, in particular near spatial infinity $i^0$. However, by using a scattering argument as in \cite{KehrbergerI}, one can easily modify our examples so as to be subextremal in a neighborhood of $i^0$, if desired. 
\end{rk}

Our falsification of the third law (\cref{cor:trapped-surfaces-ERN}) is preceded by our following more general result. We construct regular one-ended Cauchy data for the Einstein--Maxwell-charged scalar field system in spherical symmetry whose black hole exterior evolves (in fact is eventually isometric) to a  Schwarzschild black hole with prescribed mass $M>0$ or to a \emph{subextremal} or \emph{extremal} Reissner--Nordström black hole with prescribed mass $M>0$ and prescribed charge to mass ratio $\mathfrak q \doteq e/M \in [-1,1]$.  
The Einstein--Maxwell-charged scalar field  (EMCSF) system reads
\begin{align}\label{eq:Einstein-Maxwell-1}
    R_{\mu\nu}(g)-\tfrac 12 R(g)g_{\mu\nu}&= 2\left(  T^\mathrm{EM}_{\mu\nu}+  T^\mathrm{CSF}_{\mu\nu}\right),\\
    \nabla^\mu F_{\mu\nu}&= 2\mathfrak e\, \Im(\phi\overline{D_\nu \phi})\label{eqn:Maxwell-intro},\\
    g^{\mu\nu}D_\mu D_\nu \phi&=0,\label{eq:Einstein-Maxwell-3}
\end{align}
 for a quintuplet $(\mathcal M,g,F,A,\phi)$,  where $(\mathcal M,g)$ is a  (3+1)-dimensional Lorentzian manifold, $\phi$ is a complex-valued scalar field, $A$ is a   real-valued 1-form, $F=dA$ is a   real-valued 2-form, $D=d+i\mathfrak eA$ is the gauge covariant derivative, $ \mathfrak e \in \mathbb R\setminus \{0\}$ is a fixed coupling constant  representing  the charge of the scalar field, and the energy momentum tensors are defined by 
 \begin{align}
T^\mathrm{EM}_{\mu \nu}&\doteq g^{\alpha \beta}F _{\alpha \nu}F_{\beta \mu }-\tfrac{1}{4}F^{\alpha \beta}F_{\alpha \beta}g_{\mu \nu},\\
  T^\mathrm{CSF}_{\mu \nu}&\doteq  \mathrm{Re}(D _{\mu}\phi \overline{D _{\nu}\phi}) -\tfrac{1}{2}g_{\mu \nu} g^{\alpha \beta} D _{\alpha}\phi \overline{D _{\beta}\phi}.\label{TCSF}
 \end{align}
 We refer to \cref{sec:prelims} for the form of the EMCSF system in spherical symmetry.
\begin{rk}
     All of the results in this paper also hold for the \emph{Einstein--Maxwell--charged Klein--Gordon} system in which the  wave equation \eqref{eq:Einstein-Maxwell-3} is replaced by the Klein--Gordon equation
     \begin{equation*}
         g^{\mu\nu}D_\mu D_\nu \phi=\mathfrak m^2 \phi,
     \end{equation*}
     where $\mathfrak m\in \Bbb R_{>0}$ represents the mass of the scalar field and satisfies $\mathfrak m M  \ll \mathfrak e M$. Here $M$ denotes the mass of the black holes, see already \cref{thm-informal-statement} below.
 \end{rk}

We emphasize that not only are our data in the above examples regular, but the spacetimes arise from gravitational collapse, i.e., the initial data surface is one-ended, has a regular center, lies entirely in the domain of outer communication, and the black hole forms strictly to the future of initial data.  In particular, in contrast to what has been suggested numerically \cite{Alcubierre2014, Pani2020}, there is no upper bound (strictly less than unity) on the charge to mass ratio of a black hole which can be achieved in gravitational collapse for this model.

The key step toward the construction of one-ended Cauchy data evolving to black holes with prescribed mass and charge is a novel \emph{characteristic}/\emph{null gluing} result. The study of the characteristic gluing problem for the  Einstein vacuum equations (outside of spherical symmetry) was recently initiated by Aretakis, Czimek, and Rodnianski \cite{ACR1,ACR2,ACR3} in the perturbative regime around Minkowski space. Our setup is directly inspired by their work. In contrast, however, our null gluing construction (while in spherical symmetry) necessarily exploits the \emph{large data regime} in order to glue a cone of Minkowski space to a black hole event horizon along a null hypersurface within the EMCSF model. The construction of Cauchy data on $\Sigma \cong \R^3$ collapsing to an extremal or subextremal event horizon will then follow from \cref{thm-informal-statement}  as \cref{main-corollary} presented in \cref{subsec:cauchy-data-construction}.

 On the basis of our spherically symmetric horizon gluing construction in \cref{thm-informal-statement}, the results and framework introduced in \cite{ACR1,ACR2,ACR3, Czimek2022-cl,DHR19}, and \cref{softness} below, we formulate the following
\begin{conj*}
There exist regular one-ended Cauchy data for the  Einstein vacuum equations
\begin{equation*}\label{eq:VEE}
    \Ric(g) =0
\end{equation*}
which undergo gravitational collapse and form an exactly Schwarzschild apparent horizon, only for the spacetime to form an exactly extremal Kerr event horizon at a later advanced time. 
In particular,  \ul{already in vacuum},    the ``third law of black hole thermodynamics'' is false.
\end{conj*}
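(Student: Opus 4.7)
The plan is to mirror the proof of the charged scalar field counterexample, substituting \emph{gravitational radiation} for the charged scalar field and \emph{extremal Kerr} for extremal Reissner--Nordström. Birkhoff's theorem forces the construction out of spherical symmetry, so the natural minimal symmetry class is axisymmetry, in which the analogue of the parameter $\mathfrak{q}=e/M\in[-1,1]$ is the Kerr spin $a/M\in[-1,1]$. As in the charged case, the construction would proceed in three stages: (i) regular collapse to an exactly Schwarzschild exterior of prescribed mass $M$, (ii) a large-data characteristic gluing along an outgoing null hypersurface emanating from the Schwarzschild horizon which terminates on an extremal Kerr horizon, and (iii) solution of the resulting characteristic initial value problem to build the extremal Kerr exterior, with the full spacetime assembled by pasting these three regions.

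For Stage (i), I would use a Christodoulou short-pulse focusing argument to evolve regular one-ended Cauchy data into an exact Schwarzschild region of mass $M$, whose outer boundary matches a finite section of the Schwarzschild event horizon. For Stage (ii), along the outgoing null hypersurface to the future of this section, one would carry out a nonperturbative characteristic gluing to a piece of the extremal Kerr horizon of the same mass. This is the vacuum analogue of the EMCSF horizon gluing theorem of the present paper: prescribe a finite collection of localized Teukolsky-type shear pulses, supported on a compact annular region of the outgoing null hypersurface, and adjust their parameters so that the $10$ gauge-invariant conservation laws at the gluing sphere identified in \cite{ACR1,ACR2} — the Hawking mass, the three components of angular momentum, and the remaining charges — are matched to the target Kerr data. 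Stage (iii) is then solved by an application of the characteristic IVP theory for Einstein vacuum, as in \cite{Czimek2022-cl,DHR19}.

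The \emph{main obstacle} is Stage (ii). The ACR gluing is perturbative around Minkowski, whereas here the null hypersurface traverses a genuinely large region of the moduli space of sphere data, starting on a Schwarzschild horizon and ending on an extremal Kerr horizon; no implicit function theorem around a fixed background can bridge this gap. Two difficulties must be overcome. First, the null structure equations must be controlled in a regime where pulse amplitudes are comparable to the background curvature; this should be tractable by combining the scattering framework of \cite{DHR19} with the nonperturbative pulse mechanism developed in the present paper. Second, the $10$-dimensional obstruction of ACR precludes any linear matching of the conserved charges, and I expect this to be resolved by a Borsuk--Ulam modulation argument applied to a finite-dimensional parameter family of gravitational pulses, exactly in analogy with the scalar field modulation used here. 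Under axisymmetry the obstruction collapses to just the mass and axial angular momentum, which makes axisymmetry the natural first setting in which to attack the conjecture; the fully general case would presumably require modulating over a $10$-parameter family of pulses.
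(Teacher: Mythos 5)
The statement you are addressing is stated in the paper as a \emph{conjecture}, and the paper offers no proof of it: the authors explicitly formulate it ``on the basis of'' their spherically symmetric gluing construction and the framework of \cite{ACR1,ACR2,ACR3,Czimek2022-cl,DHR19}, and they note in a remark that their derivation of \cref{main-corollary} from \cref{thm-informal-statement} is soft and would transfer to vacuum \emph{if} a vacuum analogue of the gluing theorem were known. Your proposal is essentially that same roadmap, and it is not a proof: the ``main obstacle'' you identify in Stage (ii) --- a nonperturbative characteristic gluing of a Schwarzschild exterior sphere to an extremal Kerr event horizon in vacuum --- is precisely the open content of the conjecture, and you do not supply it. Announcing that it ``should be tractable'' by combining \cite{DHR19} with the pulse mechanism, and that the obstruction ``is expected'' to be resolved by Borsuk--Ulam, restates the difficulty rather than resolving it.

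Two further concrete gaps. First, the Borsuk--Ulam step in the paper is not a generic modulation trick: it rests on the exact $\mathbb{Z}_2$ symmetry $\phi\mapsto-\phi$ of the EMCSF system (\cref{lem:odd}, \cref{CSF-odd}), under which all metric and electromagnetic quantities are \emph{even} and all scalar field quantities are \emph{odd} in the seed parameter $\alpha$, so that the gluing map $F$ is odd and \cref{BorsukUlam} applies. For vacuum seed data (the conformal class of the induced metric, or the shear, along the outgoing cone) no such global sign symmetry of the full null structure system is identified, and without oddness of the obstruction map the topological argument collapses; moreover, the ten ACR charges are conserved only for the \emph{linearization around Minkowski}, so there is no reason the obstruction space in the large-data Schwarzschild-to-extremal-Kerr regime is ten-dimensional, or finite-dimensional in any controlled sense. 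Second, Stage (i) as you describe it does not produce what the conjecture requires: the Christodoulou short-pulse method yields trapped surface formation and (with Li--Yu/Li--Mei type refinements) regions \emph{close} to Schwarzschild with slowly rotating Kerr exteriors selected by Corvino--Schoen, not an \emph{exactly} Schwarzschild apparent horizon section. In the paper that exactness is achieved by the Minkowski-to-Schwarzschild gluing \cref{main-thm-Schw}, which lives in spherical symmetry with a scalar field; by Birkhoff there is no spherically symmetric vacuum analogue, so even Stage (i) requires a new, non-spherically-symmetric large-data vacuum gluing theorem that is itself open.
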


\subsection{Event horizon gluing}
\label{subsec:event-horizon-gluing}
We will now state the rough version of our main null gluing theorem, which concerns gluing a null cone in Minkowski space to a Reissner--Nordstr\"om event horizon.

\begin{thm}[Rough version]\label{thm-informal-statement}
Let $k\in \Bbb N$ be a regularity index, $\mathfrak q\in [-1,1]$  a charge to mass ratio,  and  $\mathfrak e\in \Bbb R\setminus\{0\}$ a fixed coupling constant. For any $M$ sufficiently large depending on $k$, $\mathfrak q$, and $\mathfrak e$, there exist spherically symmetric characteristic data for the Einstein--Maxwell-charged scalar field system with coupling constant $\mathfrak e$ gluing a Minkowski null cone of radius $\tfrac 12M$ to a Reissner--Nordstr\"om event horizon with mass $M$ and charge $e= \mathfrak q M$ up to order $k$. 
\end{thm}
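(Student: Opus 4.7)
The plan is to work in spherically symmetric double null coordinates $(u,v)$, with metric $g = -\Omega^2\,du\,dv + r^2\gamma_{S^2}$, and take as the gluing hypersurface a single outgoing null cone $\underline C = \{u=0,\, 0\le v\le V\}$. On $\underline C$ the EMCSF system reduces to a system of ODEs in $v$ for $r,\Omega,Q,\phi$ whose only freely prescribable component, given seed data at $v=0$, is the complex scalar field profile $D_v\phi(v)$; the Raychaudhuri equation, the Maxwell charge transport $\partial_v Q = \mathfrak e\, r^2\, \Im(\phi\,\overline{D_v\phi})$, and the wave equation for $\phi$ then determine everything else along $\underline C$. The notion of $C^k$ gluing is to match, at $v=V$, the transverse $j$-jet (for $0\le j\le k$) of the solution to that of the Reissner--Nordstr\"om event horizon with parameters $(M,\mathfrak q M)$, after having imposed at $v=0$ that the data agrees with the corresponding jet of a Minkowski past null cone at area radius $\tfrac12 M$.

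The construction proceeds in two layers. First, I would choose an ansatz in which $D_v\phi$ is a finite superposition of compactly supported pulses
\[
D_v\phi(v) = \sum_{i=1}^{N+1} a_i\, \psi_i(v),
\]
each supported on a disjoint $v$-interval. Each pulse carries two effects: its modulus drives the Raychaudhuri equation (raising the Hawking mass and decreasing $\partial_v r$), while its phase drives the charge transport (pumping net electric charge into the system). Because the supports are disjoint, the pulses can be designed stage by stage so that, by the end of the cone, the Hawking mass, charge, and area radius arrive at the target values $M$, $\mathfrak q M$, and $r_+(M,\mathfrak q M)$; taking $M$ sufficiently large keeps $r$ comparable to $M$ throughout, so the pulses can be chosen of bounded amplitude and long support rather than being concentrated, which is what makes the nonperturbative ODE estimates on $\underline C$ tractable.

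The second layer, which addresses the $k$ higher-order matching conditions, is where the Borsuk--Ulam theorem enters. Each transverse derivative condition at $v=V$ gives a continuous real functional $\mathcal F_j(a_1,\dots,a_{N+1})$ on parameter space, and the aim is to solve $\mathcal F_1=\dots=\mathcal F_N=0$ simultaneously (with $N$ chosen in terms of $k$). An inverse-function-theorem argument is unavailable in this large-data regime because there is no linearizable background interpolating between a flat cone and a horizon. Instead, I would arrange the pulse amplitudes to vary over a sphere $S^N$ carrying a natural antipodal action (for instance, simultaneous phase inversion $a_i\mapsto -a_i$, or pairwise swaps of complementary pulses) under which the reduced matching map, after subtracting its symmetric part, is odd. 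The Borsuk--Ulam theorem then produces a common zero, giving parameters for which all $k$ matching conditions are simultaneously met.

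The principal obstacle, as I see it, is reconciling the large scalar-field amplitudes needed to drive a flat cone all the way to an event horizon with the continuity and symmetry hypotheses required by Borsuk--Ulam: one must ensure that the $v$-evolution on $\underline C$ stays regular uniformly over the full parameter sphere, so that the matching map is globally defined and genuinely odd modulo controllable remainders. The assumption that $M$ is large depending on $k$, $\mathfrak q$, and $\mathfrak e$ is what buys this: it forces $r$ to remain comparable to $M$ along the whole cone, keeps $\Omega$ and the higher transverse derivatives under control, and suppresses the nonlinear cross-coupling between successive pulses enough to execute the order-by-order refinement and the final topological step in tandem.
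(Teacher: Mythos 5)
Your overall architecture matches the paper's: free data is the scalar field on an outgoing cone in the gauge $\Omega^2\equiv 1$, chosen as finitely many disjointly supported pulses; the null constraints reduce to ODEs in $v$; and the $k$ transverse matching conditions are solved by Borsuk--Ulam. However, two central steps are missing or would fail as described. First, Borsuk--Ulam requires the matching map to be \emph{exactly} odd, and your plan to make it ``odd after subtracting its symmetric part'' or ``odd modulo controllable remainders'' does not close the argument. The paper's key observation is that the EMCSF system has an exact $\mathbb{Z}_2$ symmetry $\phi\mapsto-\phi$ under which $r$, $\Omega^2$, $Q$, $A_u$ and all their derivatives are even while $\phi$ and all its transverse derivatives are odd (\cref{lem:odd}, \cref{CSF-odd}); hence the map $\alpha\mapsto(\partial_u\phi(1;\alpha),\dots,\partial_u^k\phi(1;\alpha))$ is exactly odd with no remainder. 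Your alternative antipodal actions (e.g.\ pairwise swaps of pulses) would not produce this oddness.

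Second, you have not reconciled the zeroth-order conditions (final radius, mass, and charge) with the need to retain a full sphere's worth of parameters on which to run Borsuk--Ulam: if the pulse amplitudes are fixed ``stage by stage'' to hit $M$, $\mathfrak q M$, and $r_+$, nothing is left to vary. The paper resolves this by (i) solving Raychaudhuri \emph{backwards} from $r(1)=r_+$, $\partial_v r(1)=0$, so the final radius and expansion are automatic for every $\alpha$; (ii) recognizing the initial sphere as Minkowski purely from the vanishing of the Hawking mass and charge, which is arranged by \emph{initializing the transverse derivative} $\partial_u r(0)=-\tfrac14(\partial_v r(0))^{-1}$ rather than by any property of the pulses; and (iii) imposing the single remaining condition $Q(1;\alpha)=\mathfrak q M$ by a radial scaling $\beta_Q(\hat\alpha)$ on each ray, producing a starshaped, antipodally invariant hypersurface $\mathfrak Q^{2k}\cong S^{2k}$ in $\Bbb R^{2k+1}$ on which the odd map is then evaluated. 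Relatedly, your ``principal obstacle'' is misdiagnosed: the scalar field amplitudes are \emph{small} ($|\alpha|\approx\sqrt{\mathfrak q M}/(\sqrt{\mathfrak e}\,r_+)$), the mass being injected through the large negative initialization of $\partial_u r(0)$, and the largeness of $M$ is needed to make the charge condition $\mathfrak e M^2|\alpha|^2\approx\mathfrak q M$ compatible with $|\alpha|\lesssim 1$ (i.e.\ $|\mathfrak e|M\gtrsim|\mathfrak q|$) and to propagate $\partial_u r<0$, not to control large-amplitude cross-coupling between pulses.
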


We also refer to  \cref{fig:main-thms} for an illustration of our construction.
\begin{figure}[ht]
 \centering{
 \def\svgwidth{18pc}
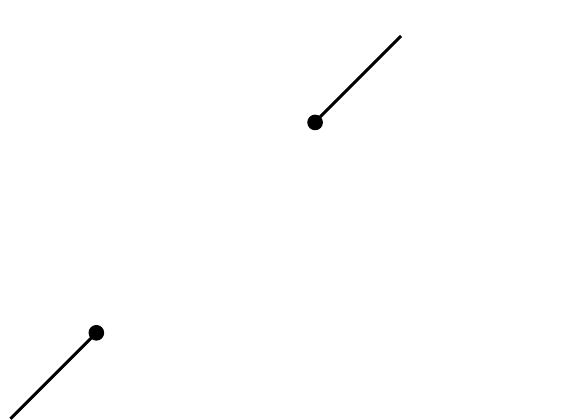
}
\caption{Setup of \cref{thm-informal-statement}.}
\label{fig:main-thms}
\end{figure}

For the precise version of \cref{thm-informal-statement} we refer to \cref{main-thm-Schw} and \cref{main-thm-RN} in  \cref{sec:precise-version-main-theorems}. In fact, more generally, we can replace the Minkowski sphere with certain Schwarzschild exterior spheres at $v=0$, which is important for constructing counterexamples to the third law of black hole thermodynamics (see already \cref{retiring}). Furthermore, when $\mathfrak q=0$ we may take the scalar field to be real-valued, in which case the EMCSF system collapses to the Einstein-scalar field system.

\begin{rk}
For the proofs of \cref{cor:Cauchy-horizon-RN} and \cref{cor:Cauchy-horizon-closes--off} below, we will use versions of \cref{thm-informal-statement} where the top sphere is not located on a horizon. See \cref{thm:interior-gluing} and \cref{thm:backwards-gluing}  in \cref{sec:precise-version-main-theorems} below.
\end{rk}

\begin{rk} 
With our methods one can also construct characteristic data which are exactly Minkowski initially and then settle down, but only asymptotically, to a Schwarzschild or (sub-)extremal Reissner--Nordström event horizon of prescribed mass and charge. The rate of decay can be chosen to be $ |\partial_v \phi| \approx v^{-p}$, $p>\frac 12$, in a standard Eddington--Finkelstein gauge for Schwarzschild or subextremal  Reissner--Nordström black holes. This provides examples of ``global'' characteristic data settling down at certain prescribed rates as assumed in \cite{VdM18,gajic-luk,K-VdM}.
\end{rk}

\subsection{Gravitational collapse from event horizon gluing}
 \label{subsec:cauchy-data-construction}

 For appropriate matter models, the  Einstein equations \begin{equation*}
     \Ric(g)-\tfrac 12 R(g)g=2T
 \end{equation*}  are well-posed (see \cite{MR53338,CBG69} for the vacuum case) as a Cauchy problem for suitable initial data posed on a $3$-manifold $\Sigma$, which will then be isometrically embedded as a spacelike hypersurface in a Lorentzian manifold $(\mathcal M,g)$.
The textbook explicit black hole solutions such as the Schwarzschild spacetime do not contain \emph{one-ended} Cauchy surfaces $\Sigma \cong \mathbb R^3$ but are instead foliated by \emph{two-ended} hypersurfaces $\Sigma\cong\mathbb R\times  S^2$. Thus, a natural and physically relevant problem is to construct regular asymptotically flat data on $\Sigma\cong\R^3$ which evolve to  a black hole spacetime. The first example of \emph{gravitational collapse}, that is a black hole spacetime containing a one-ended Cauchy surface which lies outside of the black hole region, was constructed by Oppenheimer and Snyder \cite{OS39} for the Einstein-massive dust model in spherical symmetry.

Using \cref{thm-informal-statement}, by solving the Einstein equations \emph{backwards}, we construct examples of gravitational collapse where the domain of outer communication is eventually exactly isometric to Reissner--Nordstr\"om with prescribed mass and charge.  
The proof of the following \cref{main-corollary} is given in \cref{subsec:grav-collapse-RN}.
 
 \begin{cor}[Exact Reissner--Nordstr\"om arising from gravitational collapse]\label{main-corollary}
 For any regularity index $k\in \Bbb N$ and charge to mass ratio $\mathfrak q\in[-1,1]$, there exist spherically symmetric, asymptotically flat Cauchy data for the Einstein--Maxwell-charged scalar field system, with $\Sigma\cong \Bbb R^3$ and a regular center, such that the maximal  future globally hyperbolic development $(\mathcal M^4,g)$ has the following properties: 
 \begin{itemize}
\item All dynamical quantities are at least $C^{k}$-regular.
     \item Null infinity $\mathcal I^+$ is complete.
     \item The black hole region is non-empty, $\mathcal B\mathcal H\doteq \mathcal M\setminus J^-(\mathcal I^+)\neq \emptyset$.
     \item The Cauchy surface $\Sigma$ lies in the causal past of future null infinity, $\Sigma\subset J^-(\mathcal I^+)$. In particular, $\Sigma$ does not intersect the event horizon $\mathcal H^+\doteq\partial(\mathcal{BH})$. Furthermore, $\Sigma$ contains no trapped or antitrapped surfaces.
     \item For sufficiently late advanced times $v\ge v_0$, the domain of outer communication, including the event horizon, is isometric to that of a Reissner--Nordstr\"om solution with charge to mass ratio $\mathfrak q$. For $v\ge v_0$, the event horizon of the spacetime can be identified with the event horizon of Reissner--Nordstr\"om. 
 \end{itemize}
 \end{cor}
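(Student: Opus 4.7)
The strategy is to use \cref{thm-informal-statement} to produce characteristic data on what will become the future event horizon of the target spacetime, and then to recover one-ended spacelike Cauchy data on $\Sigma$ by evolving the EMCSF equations \emph{backwards in time} in the exterior region.

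First, I apply \cref{thm-informal-statement} with the given regularity index $k$, charge-to-mass ratio $\mathfrak q \in [-1,1]$, and a sufficiently large mass $M = M(k,\mathfrak q,\mathfrak e)$, to obtain spherically symmetric $C^k$ characteristic data on an outgoing null segment parametrized by $v \in [0,1]$ interpolating between a Minkowski sphere of area radius $M/2$ at $v=0$ and the Reissner--Nordström event horizon sphere of area radius $r_+$ at $v=1$. I extend this data at either end to form a single outgoing null hypersurface $C$: for $v<0$, the data is that of the Minkowski null cone emanating from its vertex $p_0$ on a regular center; for $v \ge 1$, the data is that of the Reissner--Nordström event horizon with mass $M$ and charge $e = \mathfrak q M$, where $r \equiv r_+$, $\partial_v r \equiv 0$, and $\phi \equiv 0$. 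The scalar field is then compactly supported in $v \in [0,1]$ along $C$.

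Second, I construct the exterior spacetime by backward characteristic evolution. Concretely, I pose a characteristic initial value problem with outgoing data on $C$ and trivial data on an ingoing null hypersurface $\underline C = \{v=v_+\}$ for some $v_+>1$, the latter being that of an ingoing null hypersurface of the ambient Reissner--Nordström exterior of mass $M$ and charge $e$. Solving this IVP in the quadrant exterior to $C$ and to the past of $\underline C$, and invoking finite speed of propagation together with the vanishing of the scalar field outside $v \in [0,1]$, yields an exterior spacetime isometric to Minkowski for $v \le 0$ and to the Reissner--Nordström exterior for $v \ge 1$, with outwardly propagating scalar field pulses in the transitional band $v \in [0,1]$. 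The interior of $C$ is then filled in by forward characteristic evolution from $C$ with the regularity condition at the center. In the resulting spacetime, I choose a spacelike hypersurface $\Sigma \cong \mathbb R^3$ lying in the Minkowski portion of the exterior, to the past of $p_0$ and containing a regular center; the induced data on $\Sigma$ is smooth and asymptotically flat, in fact isometric to a Minkowski slice, and its maximal future globally hyperbolic development recovers the spacetime above. The listed bullet points are then read off from the construction: $C^k$ regularity from \cref{thm-informal-statement}; completeness of $\mathcal I^+$ and nonemptiness of $\mathcal{BH}$ from the RN asymptotic structure on $\{v\ge 1\}$; $\Sigma \subset J^-(\mathcal I^+)$ and the absence of trapped or antitrapped surfaces on $\Sigma$ because it is a Minkowski slice; and isometry of the domain of outer communication (including the event horizon) to Reissner--Nordström for $v \ge v_0 := 1$ by construction.

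The main obstacle is the second step: establishing global regularity of the backward evolution throughout the exterior, and in particular verifying that the scalar field pulses propagate outward in a controlled way rather than refocusing or producing an extra trapped region in $\{v \in [0,1]\}$. Because the scalar field data on $C$ is compactly supported in $v$ and no radiation enters from past null infinity, this reduces to standard monotonicity-based a priori estimates (e.g.\ on the Hawking mass) for spherically symmetric EMCSF in the exterior of a black hole.
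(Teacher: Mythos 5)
Your first step---extending the glued data on $C=\{u=-1\}$ by exact Minkowski cone data for $v<0$ and exact Reissner--Nordstr\"om horizon data for $v\ge 1$---matches the paper. The gap is in your second step. For the \emph{backward} characteristic problem with data on $C$ and on $\underline C=\{v=v_+\}$, the solution at a point $(u_0,v_0)$ with $u_0<-1$ is determined by the data on $J^+((u_0,v_0))\cap(C\cup\underline C)$, i.e.\ by the data on $C$ for \emph{all} $v\ge v_0$. A point with $v_0<0$ therefore sees the scalar field pulses on $\{0\le v\le 1\}$, and finite speed of propagation gives you nothing there: the region $\{u<-1,\,v<0\}$ is \emph{not} Minkowski (backscattering sends radiation toward $\mathcal I^-$), so your $\Sigma$ cannot be a Minkowski slice. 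Independently of this, a one-ended asymptotically flat $\Sigma$ must reach $i^0$, which sits at $v\to+\infty$, so it cannot be contained in $\{v\le 0\}$ in any case. Finally, the ``standard monotonicity'' you invoke for the backward evolution points the wrong way: Raychaudhuri's equation \eqref{eq:CSF-Raychaudhuri-u} only prevents antitrapped spheres from forming to the \emph{future}, so backward evolution can create them or break down before reaching $\mathcal I^-$; semi-global backward existence in the exterior is a genuine scattering problem (compare the paper's remark that arranging trivial or subextremal data near $i^0$ requires an argument as in \cite{KehrbergerI}), not a routine a priori estimate.

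The paper sidesteps all of this. Via \cref{prop:spacetime-gluing} it attaches an \emph{exact, explicitly known} Reissner--Nordstr\"om domain of outer communication along $\{u=-1,\,v\ge1\}$---so a full neighborhood of $i^0$ is exactly RN, not Minkowski---together with the explicit Minkowski diamond below $v=0$; the only PDE solving required is local existence (\cref{local-existence-ss}) in thin slabs off the cone plus a spherically symmetric strip down to the center. The Cauchy surface $\Sigma$ then runs from $i^0$ inside the exact RN region, hugs the cone $u=-1$ through the gluing band, and ends at the regular center inside the Minkowski diamond; absence of antitrapped spheres on $\Sigma$ is arranged by this choice and propagated forward by Raychaudhuri, with \cref{no-trapped} and \cref{no-antitrapped} handling non-symmetric (anti)trapped surfaces. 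One further detail you omit: solving down to the center loses a derivative, which is why the paper applies \cref{main-thm-RN} with regularity index $k+1$ to obtain a globally $C^k$ solution.
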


 \begin{figure}[ht]
\centering{
\def\svgwidth{20pc}
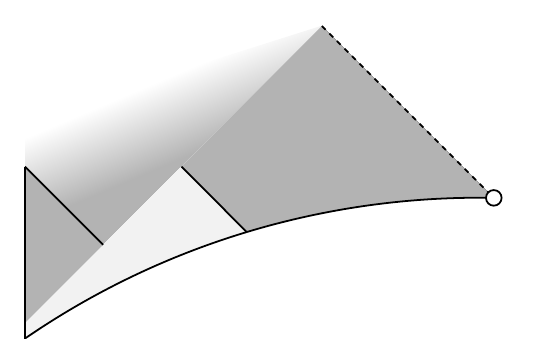}
\caption{Penrose diagram for \cref{main-corollary}. The textured line segment is where the data constructed in \cref{thm-informal-statement} live.}
\label{main-corollary-diagram}
\end{figure}

Note that in the case $|\mathfrak q|=1$, this does not yet furnish a counterexample to the third law of black hole thermodynamics, as the spacetime does not necessarily contain a subextremal apparent horizon. For the counterexample we must defer to \cref{cor:trapped-surfaces-ERN} in 
 \cref{sec:disproof} below. 
 
 However, in our proof of \cref{main-corollary}, forming an extremal black hole with $|\mathfrak q|=1$ is no different from any subextremal charge to mass ratio $|\mathfrak q|<1$ (see already \cref{subsubsec:exceptionality-of-third-law-violations}). In particular, in contrast with what has been suggested by numerical simulations \cite{Alcubierre2014,Pani2020}, there is no universal upper bound (strictly less than unity) for $|\mathfrak q|$. Given that we have now proved that extremal Reissner--Nordstr\"om can arise in gravitational collapse, it would be interesting to rethink the numerical approach to this problem and develop a scheme to construct such solutions numerically. Because our construction is fundamentally teleological (see already \cref{MFGHD}), it might be challenging to directly find suitable data on $\Sigma$ by trial and error.

The formation of black holes is a very well studied problem in spherical symmetry. We mention here only the Einstein-scalar field model, for which Christodoulou \cite{C-91} first showed that concentration of the scalar field can lead to formation of a black hole. This result played a decisive role in Christodoulou's proof of weak cosmic censorship in spherical symmetry \cite{C-WCC}. Dafermos constructed solutions of the Einstein-scalar field system which collapse to the future but are complete and regular to the past \cite{Dafermos-complete-regular-past}. For work on other matter models, see for example \cite{Hakan, AnLim22}.
 
 Outside of spherical symmetry (for the Einstein vacuum  equations), formation of black holes was studied by Christodoulou in the seminal monograph \cite{Christo09}. Christodoulou constructed \emph{characteristic data} for the Einstein vacuum equations containing no trapped surfaces, but whose evolution contains trapped surfaces in the future. Li and Yu \cite{LiYu} showed how to combine Christodoulou's construction with the spacelike gluing technique of Corvino and Schoen \cite{CorvinoSchoen} to construct asymptotically flat \emph{Cauchy data} containing no trapped surfaces, but whose evolution contains trapped surfaces in the future. Later, Li and Mei \cite{LiMei} observed that the Corvino--Schoen gluing can be done ``behind the event horizon,'' which yields a genuine construction of gravitational collapse in vacuum arising from one-ended asymptotically flat Cauchy data. 
 
 However, the constructions of the above type rely on the observation that if an additional restriction is imposed on the seed data in \cite{Christo09}, then the resulting spacetime has a region of controlled size which is close to Schwarzschild. The Corvino--Schoen gluing then selects \emph{very slowly rotating} Kerr parameters for the exterior region. 
 
 We emphasize that our gluing approach yields collapsing spacetimes with exactly specified (in particular, extremal, if desired) parameters, but is so far limited to spherical symmetry. 
 
\begin{rk}\label{softness}
Our derivation of \cref{main-corollary} from \cref{thm-informal-statement} is completely soft and does not make use of spherical symmetry. Therefore, if versions of the main gluing theorems were known for the Einstein vacuum equations (for example, gluing a Minkowski cone to an extremal Kerr event horizon, or more generally a Schwarzschild exterior sphere to an extremal Kerr event horizon), then our procedure would yield vacuum spacetimes arising from gravitational collapse which are eventually isometric to extremal Kerr. Furthermore, such a construction would also yield a disproof of the third law in vacuum. 
 \end{rk}

 \begin{rk}
By the very nature of our gluing procedure, the constructions in this paper have finite regularity ($C^k$ for arbitrarily large $k$). It would be mathematically interesting to create such examples with $C^\infty$ regularity. See already \cref{rk:smooth}.
 \end{rk}

\begin{rk}
The existence of dynamical spacetimes satisfying the dominant energy condition which are extremal at spacelike infinity $i^0$ does not contradict the positive mass theorem ``with charge'' \cite{GHHP, CRT06} because the matter itself carries charge. Concretely, condition (27) in \cite{GHHP} is false for various charged matter models, in particular the Einstein--Maxwell-charged scalar field model with small (or zero) mass. 
\end{rk}

\subsection{Characteristic gluing setup and proof}

\subsubsection{Previous work on characteristic gluing}
 
The gluing problem along characteristic hypersurfaces for hyperbolic equations and associated null constraints already appears for the linear wave equation on Minkowski space. On $\Bbb R^{3+1}$, let $u=\tfrac 12 (t-r)$, $v=\tfrac 12 (t+r)$, and let $\phi$ be a spherically symmetric solution to the wave equation, i.e.,
\begin{equation}
    \partial_u\partial_v(r\phi)=0.\label{mink-wave}
\end{equation}
Let $C\cup \underline C$ be a spherically symmetric bifurcate null hypersurface, that is, $C=\{u=u_0\}\cap\{v\ge v_0\}$ and $\underline C= \{u_1\ge u\ge u_0\}\cap \{v=v_0\}$. The wave equation \eqref{mink-wave} implies that $\partial_u(r\phi)$ \emph{is conserved along the outgoing cone} $C$. This implies that $\partial_u\phi$ cannot be freely prescribed along $C$, but is in fact determined by $\partial_u\phi$ on the bifurcation sphere $C\cap \underline C$. Indeed, the characteristic initial value problem is well posed with just $\phi$ itself prescribed along $C\cup \underline C$---the full 1-jet of $\phi$ can then be recovered from \eqref{mink-wave}. For general spacetimes, the question of null gluing for the linear wave equation was studied by Aretakis \cite{AretakisLinear}. 

For a general wave equation, ingoing derivatives satisfy transport equations along outgoing null cones. The general $C^k$ characteristic gluing problem is to be given two spheres $S_1$ and $S_2$ along an outgoing null cone $C$, and $k$ ingoing and outgoing derivatives of $\phi$ at $S_1$ and $S_2$. One then seeks to prescribe $\phi$ along the part of $C$ between $S_1$ and $S_2$ so that the outgoing derivatives agree with the given ones and the solutions of the transport equations for the ingoing derivatives have the specified initial and final values. In general, the linear characteristic gluing problem is obstructed due to the presence of \emph{conserved charges} stemming from \emph{conservation laws} along $C$.
\begin{rk}\label{rk:smooth}
    Even in the absence of conservation laws at any order, $C^\infty$ gluing of transverse derivatives may be obstructed in linear theory. This can be seen already for the $(1+1)$-dimensional wave equation $\partial_u \partial_v \phi = f(u,v) \phi$ for generic   $f\in C^\infty(\mathbb R^{1+1})$. For such an $f$, there are no conservation laws at any order and by imposing trivial data at $S_1$ and very rapidly growing (in $k$) $\partial_u^k$-derivatives at $S_2$, one can show that $C^\infty$ gluing cannot be achieved. Note that in $1+1$ dimensions, $S_1$ and $S_2$ are points. 
\end{rk}

The null gluing problem for the Einstein vacuum equations was recently initiated by Aretakis, Czimek, and Rodnianski in a fundamental series of papers \cite{ACR1, ACR2, ACR3}. Their proof uses the inverse function theorem to reduce the nonlinear problem to a linear characteristic gluing problem for the linearized Einstein equations in double null gauge around Minkowski space, in the formalism of Dafermos--Holzegel--Rodnianski \cite{DHR19}. This linearized problem is carefully analyzed and the authors identify \emph{infinitely many} conserved charges which are obstructions to the linear gluing problem. All but ten of the charges turn out to be related to linearized gauge transformations (cf. the ``pure gauge solutions'' of \cite{DHR19}). An inverse function theorem argument then gives nonlinear gluing close to Minkowski space, provided that, \emph{a posteriori}, the 10 transported gauge-invariant charges at  $S_2$ agree with the prescribed charges on $S_2$ (which is in fact also perturbed to deal with the gauge-dependent charges).
Very recently,  Czimek and Rodnianski \cite{Czimek2022-cl} carefully exploited the \emph{nonlinear} structure of the null constraint equations to nonlinearly compensate for failure of matching of the linearly conserved charges. In this way, the authors prove obstruction-free gluing for characteristic and Cauchy data near Minkowski space.

In the present paper, we are however interested in a different regime of gluing. We wish to glue two specific null cones: a light cone in Minkowski space and a Reissner--Nordstr\"om event horizon, as a solution of the EMCSF null constraint system. On the one hand, this is a genuine ``large data'' gluing problem, as these cones are very dissimilar in a gauge invariant sense and there is no known spacetime around which one could reasonably linearize the equations. On the other hand, we study our problem in spherical symmetry, which makes it considerably more tractable. We refer to \cref{subsec:defn-sphere-data} below for a precise definition of characteristic gluing in spherical symmetry.

\subsubsection{Outline of the proof of \texorpdfstring{\cref{thm-informal-statement}}{Theorem 2}}
In the Einstein--Maxwell-charged scalar field model in spherical symmetry, the spacetime metric is written in double null gauge as
 \begin{equation*}
    g = -\Omega^2dudv +r^2g_{S^2},
\end{equation*}
where $\Omega^2$ is the lapse and $r$ the area-radius. We also have a complex-valued scalar field $\phi$ and a real-valued charge $Q$, which is related to the only nonzero component of the electromagnetic tensor $F$. We choose an electromagnetic gauge in which $A=A_u\,du$, where $A$ is a gauge potential for $F$. The dynamical variables to be glued along an outgoing cone (which we will call $C_{-1}\doteq\{u=-1\}$) are  $(r,\Omega^2,\phi,Q,A_u)$.
The charge $Q$ solves first order equations in $u$ and $v$, $A_u$ is computed from $Q$ via $F=dA$, and the variables $r$, $\Omega^2$, and $\phi$ solve coupled nonlinear wave equations involving also $Q$ and $A_u$. See already equations \eqref{eq:CSF-wave-equation}--\eqref{eq:Au-transport}. Since the value of $\Omega^2$ along any given null cone (or bifurcate null hypersurface) can be adjusted by reparametrizing the double null gauge, we impose that $\Omega^2\equiv 1$. 

We first consider \emph{Raychaudhuri's equation} (see already \eqref{eq:CSF-Raychaudhuri-v}), which reads in the gauge $\Omega^2\equiv 1$
\begin{equation}
    \partial_v^2 r = -r|\partial_v\phi|^2.\label{eq:Ray-simple}
\end{equation}
This equation gives a nonlinear constraint on $C_{-1}$ and completely determines $r$ on $C_{-1}$ given $r$ and $\partial_v r$ at one point of $C_{-1}$ and $\phi$ along $C_{-1}$. Thus, in the gauge  $\Omega^2\equiv  1$ along $C_{-1}$, up to specifying the dynamical quantities at a sphere, the free data in this problem is exactly $\phi$ on $C_{-1}$: All other dynamical quantities and their derivatives (both in the $u$ and $v$ coordinates) along $C_{-1}$ can be obtained from $\phi$ and the equations \eqref{eq:CSF-wave-equation}--\eqref{eq:CSF-Raychaudhuri-v}.

We will choose $\phi$ to be compactly supported on the textured segment in \cref{fig:main-thms} and set \[\partial_u\phi(0)=\cdots = \partial_u^k\phi(0)=0,\] where $k$ is the order at which we wish to glue. A first attempt to solve the gluing problem would be to set $(r,\Omega^2,Q,A_u)$ and derivatives to have their ``Minkowski values'' at the sphere $v=0$ and then prescribe $\phi(v)$ so that the dynamical variables reach their ``Reissner--Nordstr\"om values'' at $v=1$. However, specifying a ``Minkowski value'' for $\partial_vr$ is essentially another gauge choice, and the gauge invariance of the equations enables a much more convenient strategy. 

Given that $\phi$ vanishes to order $k$ at $v=0$, to know that the sphere $v=0$ is a sphere in Minkowski space to order $k$, we merely need to know that $r(0)>0$ and that the charge $Q$ and the Hawking mass (see already \eqref{eq:Hawking-mass}) both vanish. See already \cref{lem:gauge-Mink}. This reduces to the statement that in the gauge $\Omega^2\equiv1$,
\begin{equation*}
    \partial_ur(0)\partial_vr(0)=-\tfrac 14.
\end{equation*}
 Since $r$ solves a wave equation (see already \eqref{eq:CSF-equation-for-r}), $\partial_u r$ solves a first order equation in $v$, so it is determined on $C_{-1}$ by $\partial_u r(0)$ alone. Given $\phi$, we solve Raychaudhuri's equation \eqref{eq:Ray-simple} \emph{backwards}, i.e., we teleologically normalize $r$ at the final sphere by setting
\begin{align*}
r(1)&=r_+\doteq \left(1+\sqrt{1-\mathfrak q^2}\right)M\\
\partial_v r(1)&=0
\end{align*}
and then set 
\begin{equation*}
    \partial_ur(0)\doteq \frac{-\tfrac 14}{\partial_v r(0)}.
\end{equation*}
Therefore the only ``constraint'' is that $\partial_v r(0)>0$, which will be automatically satisfied by the monotonicity property of Raychaudhuri's equation as long as $r>0$.

The charge $Q$ is determined by \emph{Maxwell's equation} (see already \eqref{eq:Maxwell-v})
\begin{equation*}
    \label{eq:Maxwell-intro} \partial_v Q= \mathfrak e r^2 \Im(\phi\overline{\partial_v \phi}).
\end{equation*}
Integrating this forwards in $v$ yields the charge condition 
\begin{equation}
    \int_0^{1}\mathfrak e r^2 \Im(\phi\overline{\partial_v\phi})\,dv = \mathfrak q M.\label{eqn:intro-charge-integral}
\end{equation}
At this point we note that if $r(0)\ge \tfrac 12 M$, then the left-hand side of this equation is $\approx M^2\int \Im (\phi\overline{\partial_v \phi})$. So by modulating $\int \Im (\phi\overline{\partial_v\phi})$, we can hope to satisfy this equation just on the basis of scaling $\phi$ itself.

\begin{figure}[ht]
\centering{
\def\svgwidth{10pc}
%% Creator: Inkscape 1.2.1 (9c6d41e, 2022-07-14), www.inkscape.org
%% PDF/EPS/PS + LaTeX output extension by Johan Engelen, 2010
%% Accompanies image file '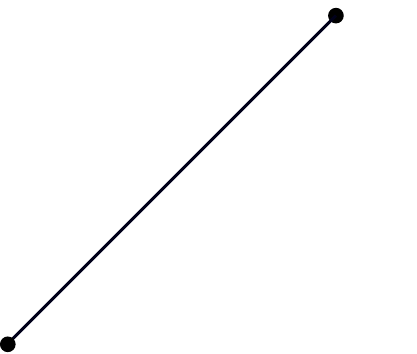' (pdf, eps, ps)
%%
%% To include the image in your LaTeX document, write
%%   \input{<filename>.pdf_tex}
%%  instead of
%%   \includegraphics{<filename>.pdf}
%% To scale the image, write
%%   \def\svgwidth{<desired width>}
%%   \input{<filename>.pdf_tex}
%%  instead of
%%   \includegraphics[width=<desired width>]{<filename>.pdf}
%%
%% Images with a different path to the parent latex file can
%% be accessed with the `import' package (which may need to be
%% installed) using
%%   \usepackage{import}
%% in the preamble, and then including the image with
%%   \import{<path to file>}{<filename>.pdf_tex}
%% Alternatively, one can specify
%%   \graphicspath{{<path to file>/}}
%% 
%% For more information, please see info/svg-inkscape on CTAN:
%%   http://tug.ctan.org/tex-archive/info/svg-inkscape
%%
\begingroup%
  \makeatletter%
  \providecommand\color[2][]{%
    \errmessage{(Inkscape) Color is used for the text in Inkscape, but the package 'color.sty' is not loaded}%
    \renewcommand\color[2][]{}%
  }%
  \providecommand\transparent[1]{%
    \errmessage{(Inkscape) Transparency is used (non-zero) for the text in Inkscape, but the package 'transparent.sty' is not loaded}%
    \renewcommand\transparent[1]{}%
  }%
  \providecommand\rotatebox[2]{#2}%
  \newcommand*\fsize{\dimexpr\f@size pt\relax}%
  \newcommand*\lineheight[1]{\fontsize{\fsize}{#1\fsize}\selectfont}%
  \ifx\svgwidth\undefined%
    \setlength{\unitlength}{200.01462441bp}%
    \ifx\svgscale\undefined%
      \relax%
    \else%
      \setlength{\unitlength}{\unitlength * \real{\svgscale}}%
    \fi%
  \else%
    \setlength{\unitlength}{\svgwidth}%
  \fi%
  \global\let\svgwidth\undefined%
  \global\let\svgscale\undefined%
  \makeatother%
  \begin{picture}(1,0.84943003)%
    \lineheight{1}%
    \setlength\tabcolsep{0pt}%
    \put(0,0){\includegraphics[width=\unitlength,page=1]{three-bumps.pdf}}%
    \put(0.00780951,0.37101765){\color[rgb]{0,0,0}\makebox(0,0)[lt]{\lineheight{1.25}\smash{\begin{tabular}[t]{l}$\alpha_1$\\\end{tabular}}}}%
    \put(0.27779716,0.60856881){\color[rgb]{0,0,0}\makebox(0,0)[lt]{\lineheight{1.25}\smash{\begin{tabular}[t]{l}$\alpha_2$\\\end{tabular}}}}%
    \put(0.17955727,0.00155271){\color[rgb]{0,0,0}\makebox(0,0)[lt]{\lineheight{1.25}\smash{\begin{tabular}[t]{l}Minkowski sphere\\\end{tabular}}}}%
    \put(0.76866685,0.59444863){\color[rgb]{0,0,0}\makebox(0,0)[lt]{\lineheight{1.25}\smash{\begin{tabular}[t]{l}RN horizon\\sphere\\\end{tabular}}}}%
    \put(0,0){\includegraphics[width=\unitlength,page=2]{three-bumps.pdf}}%
    \put(0.5215293,0.81480366){\color[rgb]{0,0,0}\makebox(0,0)[lt]{\lineheight{1.25}\smash{\begin{tabular}[t]{l}$\alpha_3$\\\end{tabular}}}}%
    \put(0,0){\includegraphics[width=\unitlength,page=3]{three-bumps.pdf}}%
  \end{picture}%
\endgroup%
}
\caption{Schematic illustration of the pulses.}
\label{Schw-gluing-diagram}
\end{figure}
Our ansatz for the scalar field will be 
\begin{equation*}
    \phi_\alpha=\sum_{1\le j\le 2k+1}\alpha_j\phi_j,
\end{equation*}
where $\alpha=(\alpha_1,\dotsc,\alpha_{2k+1})\in \Bbb R^{2k+1}$ and the $\phi_j$'s are smooth compactly supported complex-valued functions with disjoint supports. We assume $\mathfrak q\ne 0$ now, the $\mathfrak q=0$ case being in fact much easier. The charge condition \eqref{eqn:intro-charge-integral} is examined on every ray $\Bbb R_+\hat\alpha \in \Bbb R^{2k+1}$, $\hat\alpha\in S^{2k}$. We show that for a given choice of baseline profiles $\phi_j$, there is a smooth starshaped hypersurface $\mathfrak Q^{2k}\subset \Bbb R^{2k+1}$ which is isotopic to the unit sphere $S^{2k}$ and invariant under the antipodal map $\alpha\mapsto -\alpha$ such that \eqref{eqn:intro-charge-integral} holds for every $\alpha\in\mathfrak Q^{2k}$.

The condition that $M$ is large depending on $k$, $\mathfrak q$, and $\mathfrak e$ in \cref{thm-informal-statement} comes from natural conditions that arise when attempting to construct the hypersurface $\mathfrak Q^{2k}$. The charge condition \eqref{eqn:intro-charge-integral} implies $|\mathfrak e| M^2 |\alpha|^2\approx|\mathfrak q| M$ on $\mathfrak Q^{2k}$. However, to keep $r\ge \tfrac 12 M$ on $C_{-1}$, we find the condition $|\alpha|\les 1$, see already \cref{lem:radius-boostrap}. These conditions are consistent only if $|\mathfrak e|M\gtrsim |\mathfrak q|$. Furthermore, this condition is crucially used to propagate the condition $\partial_u r <0$, see already \cref{lem:no-antitrapped}.

The remaining equations ($2k$ real equations since the scalar field is complex) 
\begin{equation}
    \partial_u^i \phi_\alpha(1)=0\quad 1\le i\le k\label{phi-gluing-intro}
\end{equation}
can naturally be viewed as \emph{odd} equations as a function of $\alpha$. So when restricted to $\alpha\in \mathfrak Q^{2k}$, we can use the  classical Borsuk--Ulam theorem to find a simultaneous solution. Once we have an $\alpha\in\mathfrak Q^{2k}$ such that \eqref{phi-gluing-intro} is satisfied, $\phi_\alpha$ will glue all relevant quantities to $k$-th order, as desired. 

\begin{thm}[Borsuk--Ulam \cite{Borsuk1933}]\label{BorsukUlam}
If $f:S^k\to \Bbb R^k$ is a continuous odd function, i.e., $f(-x)=-f(x)$ for every $x\in S^k$, then $f$ has a root. 
\end{thm}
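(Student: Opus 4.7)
The plan is to reduce the statement to the equivalent assertion that no continuous odd map $h : S^k \to S^{k-1}$ exists, and then to rule out such an $h$ by a cohomological obstruction. For the reduction, if $f : S^k \to \Bbb R^k$ were continuous, odd, and nowhere vanishing, then $h(x) \doteq f(x)/|f(x)|$ would be exactly such a map, so it suffices to exhibit an obstruction to $h$. The cases $k=0,1$ can be handled separately: $k=0$ is trivial, and $k=1$ follows from the intermediate value theorem applied to $f(1) = -f(-1)$.

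The first step for $k \ge 2$ is to pass to quotients by the antipodal $\Bbb Z/2$-actions. Equivariance of $h$ yields a continuous map $\bar h : \mathbb{RP}^k \to \mathbb{RP}^{k-1}$ fitting into a commutative square with the standard double covers $\pi_n : S^n \to \mathbb{RP}^n$. The pullback double cover $\bar h^* \pi_{k-1}$ has total space naturally identified with $S^k$ via $h$, so it coincides with the nontrivial cover $\pi_k$; hence by naturality of characteristic classes, $\bar h^*$ must send the first Stiefel--Whitney class of $\pi_{k-1}$ to that of $\pi_k$.

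The decisive step is a cohomological computation with $\Bbb Z/2$-coefficients. One has the classical ring isomorphism $H^*(\mathbb{RP}^n;\Bbb Z/2) \cong \Bbb Z/2[\alpha_n]/(\alpha_n^{n+1})$, where $\alpha_n \in H^1(\mathbb{RP}^n;\Bbb Z/2)$ is the first Stiefel--Whitney class of $\pi_n$. The previous step gives $\bar h^* \alpha_{k-1} = \alpha_k$, so taking $k$-fold cup products yields
\[
\alpha_k^k \;=\; \bar h^*(\alpha_{k-1}^k) \;=\; \bar h^*(0) \;=\; 0,
\]
contradicting the fact that $\alpha_k^k$ is the nonzero top generator of $H^k(\mathbb{RP}^k;\Bbb Z/2)$.

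The main obstacle is the computation of the $\Bbb Z/2$-cohomology ring of $\mathbb{RP}^n$, which is standard but requires genuine input — typically via the CW structure with one cell in each dimension, or by induction using the Gysin sequence of $\pi_n$. Everything else in the argument is a short equivariance/naturality diagram chase, and the outcome is robust: any attempt to produce a nowhere-zero continuous odd $f$ would force a ring map contradicting the cup-length of $\mathbb{RP}^k$.
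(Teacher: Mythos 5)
Your proof is correct, but note that the paper does not actually prove this statement: it is quoted as a classical theorem with a citation to Borsuk, and the reader is pointed to Nirenberg's lecture notes for a proof ``using only basic degree theory and transversality arguments.'' That standard argument shows (via a mod-2 count of preimages of a regular value) that a continuous odd map $S^{k-1}\to S^{k-1}$ has odd degree, whereas a nowhere-vanishing odd $f:S^k\to\Bbb R^k$ would produce an odd map $S^{k-1}\to S^{k-1}$ (restriction of $f/|f|$ to an equator, suitably homotoped) that extends over a hemisphere and hence has degree zero --- a contradiction. Your route instead passes to the quotients $\mathbb{RP}^k\to\mathbb{RP}^{k-1}$, identifies the pullback of the antipodal double cover as the nontrivial cover via the equivariance of $h$, and derives a contradiction from the cup-length of $H^*(\mathbb{RP}^k;\Bbb Z/2)\cong \Bbb Z/2[\alpha_k]/(\alpha_k^{k+1})$, since $\bar h^*\alpha_{k-1}=\alpha_k$ would force $\alpha_k^k=\bar h^*(\alpha_{k-1}^k)=0$. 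All steps check out, including the low-dimensional cases and the covering-space identification (a fiberwise-injective map of double covers over the identity is an isomorphism, and connectedness of $S^k$ forces the pullback to be the nontrivial cover). The trade-off is the expected one: the degree-theoretic proof requires only smooth approximation and Sard's theorem, while yours requires the computation of the $\Bbb Z/2$-cohomology ring of projective space; both are standard and either suffices for the modulation argument in which the theorem is used here.
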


For a nice proof using only basic degree theory and transversality arguments, see Nirenberg's lecture notes \cite{Nirenberg}.

 \subsection{Retiring the third law of black hole thermodynamics}\label{retiring}
 
 In this section we give more details on the background and history of the third law of black hole thermodynamics put forth by Bardeen, Carter, and Hawking in \cite{BCH}, and how our present work fits into the picture. 
 
 While the \emph{zeroth}, \emph{first}, and \emph{second} laws of black hole thermodynamics are by now well understood  in the literature (see e.g.\ \cite{Wald-Living-reviews}), the validity of the third law has been a source of debate up until today.  
In the original form of Bardeen--Carter--Hawking (BCH), in analogy to  Nernst's version of the third law of classical thermodynamics \cite{Nernst}\footnote{For a discussion of various versions of the third law of classical thermodynamics, see \cite{Wald-Nernst}.}, it reads: 

\begin{center}
 \emph{It is impossible by any procedure, no matter how idealized, to reduce $\kappa$ to zero by a finite sequence of operations.}
\end{center}

A number of arguably pathological (e.g.\ singular or energy condition violating) examples of extremal black hole formation  were put forth in \cite{Kuchar, Delacruz-Israel, Proszynski,  Boulware, Farrugia-Hajicek,  Sullivan-Israel}, which Israel \cite{Israel-third-law, Israel-pdf}   took into account to make the third law more precise: 
\begin{center}
    \emph{A nonextremal black hole cannot become extremal
(i.e., lose its trapped surfaces) at a finite advanced time in any continuous process in which the stress-energy tensor of accreted matter stays bounded and satisfies the weak energy condition in a neighborhood of the outer apparent horizon.}
\end{center}

The parenthetical comment ``(i.e., lose its trapped surfaces)'' is an extra source of confusion which will be specifically addressed in \cref{sec:connectedness-a}.
We will now discuss the papers \cite{Kuchar, Delacruz-Israel, Proszynski,  Boulware, Farrugia-Hajicek,  Sullivan-Israel, Israel-third-law, Israel-pdf} and where the issues lie. 

\subsubsection{The singular massive dust shell model}\label{sec:singular-dust} It has been known since the 60's that an extremal black hole can be formed instantly by collapsing an infinitesimally thin shell of charged massive dust \cite{Kuchar, Delacruz-Israel, Proszynski,  Boulware}. Later, Farrugia and Hajicek \cite{Farrugia-Hajicek} showed how to ``turn a subextremal Reissner--Nordstr\"om spacetime into an extremal one'' by firing an appropriately charged singular massive shell into the black hole. The resulting spacetime metric is not $C^2$-regular. The Penrose diagram of the spacetime they construct is similar to our \cref{fig:first-page} (see \cite[p.~296 Fig.~2]{Farrugia-Hajicek}). In particular, we note the presence of a disconnected outermost apparent horizon in their example. Israel seemed to associate the disconnectedness of the apparent horizon with a singularity of the matter and/or spacetime: ``Violations can also be produced by any process that induces discontinuous behavior of the apparent horizon---for example, absorption of an infinitely thin massive shell, which will force this horizon to jump outward.'' See already \cref{sec:connectedness-a}.
On the basis of this, he dismissed this example in his formulation of the third law by explicitly requiring regularity. We note, however, that Farrugia and Hajicek suggest that their construction can in principle be desingularized---we do not know if this point was ever addressed again, because if true, it would seem to provide an alternative route to constructing a counterexample apart from our own.

\subsubsection{The charged null dust model} An interesting example motivating explicit mention of the weak energy condition in the third law was provided by Sullivan and Israel \cite{Sullivan-Israel} in spherical symmetry, with the charged null dust matter model. This matter model allows for \emph{dynamical} violations of the weak energy condition---even if the initial data satisfies the weak energy condition, the solution might violate it in the future. Sullivan and Israel showed that extremization is impossible in this model without such a violation, which can also be seen from Penrose diagrams. They interpreted this result as further evidence that the third law holds as long as the weak energy condition is demanded near the apparent horizon. We note, however, that Ori has proposed a different interpretation of the model studied by Sullivan and Israel which does not violate the weak energy condition \cite{Ori91}.

\subsubsection{``Losing trapped surfaces'' and connectedness of the outermost apparent horizon} 
\label{sec:connectedness-a}

We will now  clarify the issue of ``losing trapped surfaces'' appearing prominently in \cite{Israel-third-law, Israel-pdf} and the implicit assumption of connectedness of the outermost apparent horizon.

The black hole region in a subextremal Reissner--Nordstr\"om or Kerr spacetime is foliated by trapped spheres. Conversely, extremal Reissner--Nordstr\"om and Kerr black holes have no trapped surfaces, but the event horizon is a marginally trapped tube in both cases. As $|\mathfrak q|\to 1$ (where we take $\mathfrak q\doteq e/M$ for Reissner--Nordstr\"om and $\mathfrak q\doteq a/M$ for Kerr), $r_-\to r_+$, and one might be inclined to think that extremizing involves ``squeezing'' away the trapped region inside the black hole.  
However, it is an immediate consequence of Raychaudhuri's equation \cite{HE73, Wald-GR} that trapped surfaces persist in evolution as long as the spacetime satisfies the weak energy condition. 
 Since the typical explicit extremal black holes have no trapped surfaces (in particular none near the event horizon),
  one might wonder if Raychaudhuri's equation alone could be used to ``prove'' the third law. 
  
  This is what Israel attempted to do in \cite{Israel-third-law, Israel-pdf}. We will formalize his observation in \cref{def:becoming-extremal} and \cref{Israel-observation} below.
  However, as should be clear from our main theorem, this does not in fact capture the intended meaning of the third law.

In order to reconstruct Israel's argument mathematically, let us formulate the following definition. For precise definitions relating to spherical symmetry, see already \cref{sec:prelims}.

\begin{defn}\label{def:becoming-extremal}
Let $H$ be a connected dynamical apparent horizon, i.e., a connected, achronal curve in the $(1+1)$-dimensional reduction $(\mathcal Q,g_\mathcal Q)$ of a spherically symmetric spacetime $(\mathcal M,g)$, along which $\partial_vr $ vanishes identically. We say that $H$ \emph{becomes extremal in finite time in the sense of Israel} if
\begin{enumerate}
    \item $H$ is not completely contained in a null cone. 
    \item Let $\tau\mapsto H(\tau)$ be a parametrization of $H$. Then there exists a $\tau_0\in\Bbb R$ so that for all $\tau\ge \tau_0$, $\tau\mapsto H(\tau)$ is a future-directed constant $u$ curve. 
    \item There exists a $\tau_1>\tau_0$ and a neighborhood $\mathcal N$ of $H_{\tau\ge \tau_1}$ such that $\mathcal N\setminus H_{\tau\ge \tau_0}$ contains only strictly untrapped spheres ($\partial_v r>0$). 
\end{enumerate}
\end{defn}

\begin{rk}
The outermost apparent horizon $\mathcal A'$ (see already \cref{MFGHD}), if connected, is an example of a connected dynamical apparent horizon. 
\end{rk}

As a simple consequence of Raychaudhuri's equation in a spacetime satisfying the weak energy condition \cite{HE73, Wald-GR}, we have

\begin{prop}[Israel's observation]\label{Israel-observation}
Let $(\mathcal M,g)$ be a spherically symmetric black hole spacetime. If the spacetime satisfies the weak energy condition, has a nonempty trapped region, and a \ul{connected} outermost apparent horizon $\mathcal A'$ as defined in \cite{Kommemi13}, then the outermost apparent horizon $\mathcal A'$ does not become extremal in finite time in the sense of Israel. 
\end{prop}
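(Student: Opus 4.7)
The plan is to argue by contradiction: suppose the outermost apparent horizon $H=\mathcal{A}'$ satisfies \cref{def:becoming-extremal}, with constants $\tau_0$, $\tau_1$ and neighborhood $\mathcal{N}$ as in the definition. By (2), $H$ coincides with the null ray $\{u=u_*\}$ for $\tau\ge\tau_0$, while by (1) and the achronality of $H$ it has a spacelike portion for $\tau$ immediately below $\tau_0$. The strategy is to invoke Raychaudhuri-based persistence of trapped surfaces, seeded by spheres immediately past the spacelike part of $\mathcal{A}'$, to exhibit a trapped sphere in $\mathcal{N}\setminus H_{\tau\ge\tau_0}$, in direct contradiction with (3).

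Quantitatively, pick spacelike parameters $\tau_n\nearrow\tau_0$ and set $(u_n,v_n)\doteq H(\tau_n)$, so that $u_n\nearrow u_*$ and $v_n\nearrow v_{\tau_0}$. Because $\mathcal{A}'$ is the outermost apparent horizon of a spacetime with $\mathcal{T}\neq\emptyset$, the framework of \cite{Kommemi13} identifies $\mathcal{A}'$ with the past-$u$ boundary of $\mathcal{T}$, so for each $n$ there exists $\epsilon_n>0$ such that $q_n\doteq(u_n+\epsilon_n,v_n)\in\mathcal{T}$. The weak energy condition gives the Raychaudhuri inequality $\partial_v(\Omega^{-2}\partial_v r)\le 0$, so along the outgoing null ray $\{u=u_n+\epsilon_n\}$ one has $\partial_v r(u_n+\epsilon_n,\tilde v)\le \partial_v r(q_n)<0$ for all $\tilde v\ge v_n$ in the domain; meanwhile the standard regular-center assumption in Kommemi's setup ensures $\partial_u r<0$ is preserved throughout, so the propagated point $q_n(\tilde v)\doteq(u_n+\epsilon_n,\tilde v)$ remains in $\mathcal{T}$.

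Finally, fix any $\tilde v>v_{\tau_1}$ and take $n$ sufficiently large that $u_n+\epsilon_n$ is within the $u$-width of $\mathcal{N}$ around $\{u=u_*\}$ at height $\tilde v$, with $u_n+\epsilon_n\neq u_*$. Then $q_n(\tilde v)\in\mathcal{N}\setminus H_{\tau\ge\tau_0}$ while $\partial_v r(q_n(\tilde v))<0$, contradicting (3). The main point requiring care is the Kommemi-style adjacency claim that trapped spheres lie immediately on the $+u$ side of each spacelike point of the outermost apparent horizon; this follows from the definition of $\mathcal{A}'$ as the past-$u$ boundary of $\mathcal{T}$ together with the fact that $\partial_v r$ is a simple zero of the ingoing null derivative along the spacelike portion of $\mathcal{A}'$. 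A parallel and slightly slicker route, which I would mention but not pursue, observes that (3) forces $u=u_*$ to be a local minimum of $\partial_v r$ along each ingoing null ray through $H_{\tau\ge\tau_1}$, hence $\partial_u\partial_v r|_H\ge 0$; but the reduced Einstein--Maxwell cross equation gives $\partial_u\partial_v r|_H = -\Omega^2(r^2-Q^2)/(4r^3)\le 0$, forcing extremality ($r^2=Q^2$) everywhere on $H_{\tau\ge\tau_1}$, which is then reconciled with the nonempty trapped region via the same Raychaudhuri-propagation step above.
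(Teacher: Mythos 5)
The paper states this proposition without proof, as a direct consequence of Raychaudhuri's equation, and your overall strategy --- use connectedness to produce points of $\mathcal A'$ whose $u$-coordinate tends to the value $u_*$ of the eventual null portion, then propagate non-positivity of $\partial_v r$ forward in $v$ via $\partial_v(\Omega^{-2}\partial_v r)\le 0$ into the neighborhood $\mathcal N$ --- is exactly the intended mechanism. One cosmetic correction first: achronality of $H$ relative to the null ray $H_{\tau\ge\tau_0}=\{u=u_*\}\cap\{v\ge v_{\tau_0}\}$ forces the portion $\tau<\tau_0$ to lie in $\{u\ge u_*\}\cap\{v\le v_{\tau_0}\}$, so $u_n\searrow u_*$ from above rather than $u_n\nearrow u_*$; the trapped side is at larger $u$, which is at least consistent with your choice $q_n=(u_n+\epsilon_n,v_n)$.

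The substantive issue is the step you yourself flag: the claim that \emph{strictly} trapped spheres sit immediately on the $+u$ side of each point $H(\tau_n)$. The outermost apparent horizon is \emph{not} ``the past-$u$ boundary of $\mathcal T$'' in Kommemi's framework --- its definition constrains only the past ingoing segment, and in the paper's own counterexample the component $\mathcal A_2'$ lies on an extremal horizon with no trapped spheres anywhere nearby --- and the set $\mathcal A=\{\partial_v r=0\}$ may have nonempty interior, so ``$\partial_v r$ has a simple zero in $u$ across the spacelike part of $\mathcal A'$'' is an unwarranted nondegeneracy assumption (establishing any sign for $\partial_u\partial_v r$ on $\mathcal A$ would in any case require the model-specific wave equation for $r$, not just the weak energy condition). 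Fortunately the step is unnecessary: condition (3) of \cref{def:becoming-extremal} demands \emph{strictly} untrapped spheres, $\partial_v r>0$, on $\mathcal N\setminus H_{\tau\ge\tau_0}$, so it suffices to propagate the \emph{marginally} trapped spheres $H(\tau_n)$ themselves. Since $\partial_v r(u_n,v_n)=0$ and $\partial_v(\Omega^{-2}\partial_v r)\le 0$, one gets $\partial_v r(u_n,\tilde v)\le 0$ for any $\tilde v>v_{\tau_1}$, and for $n$ large $(u_n,\tilde v)\in\mathcal N\setminus H_{\tau\ge\tau_0}$ because $u_n\ne u_*$; this already contradicts (3), with no $\epsilon_n$ and no appeal to $\mathcal T\ne\emptyset$. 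Your ``slicker'' alternative via the sign of $\partial_u\partial_v r|_H$ has the same defect of importing the Einstein--Maxwell field equations into a statement that assumes only the weak energy condition, so I would drop it.
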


However, it is clear that in view of our main theorem, the correct reading of this proposition is the \emph{contrapositive}, namely that  violations of the third law necessarily have a disconnected apparent horizon. This effect has nothing to do with singularities of spacetime or the matter model (and there was never actually any \emph{a priori} reason to believe that the outermost apparent horizon was connected). This situation is depicted in \cref{fig:disconnected}.  

 \begin{figure}[ht]
\centering{
\def\svgwidth{12pc}
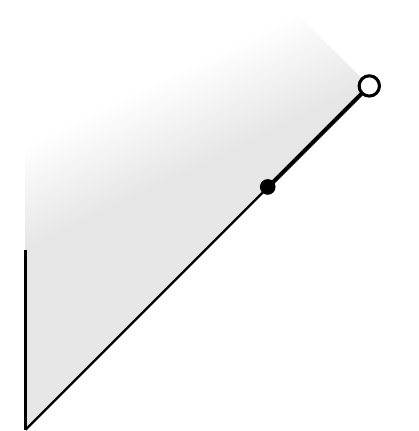}
\caption{Illustration of the contrapositive of \cref{Israel-observation}. The outermost apparent horizon $\mathcal A'=\mathcal A'_1\cup \mathcal A_2'$ becomes disconnected when a black hole with trapped surfaces ``becomes extremal,'' while the spacetime and matter fields remain regular. The trapped region begins to the north of $\mathcal A_1'$ and persists for all advanced time.}
\label{fig:disconnected}
\end{figure}

\subsubsection{Disproving the third law}\label{sec:disproof}

With this discussion out of the way, we present now a detailed version of our counterexample to the third law. It is essentially a corollary of the more general version of our main gluing result \cref{thm-informal-statement} with a Schwarzschild exterior sphere in place of a Minkowski sphere (see already \cref{sec:precise-version-main-theorems}) and will be given in \cref{proof-trapped-surfaces-ERN}. For an illustration of the spacetime, we refer the reader back to \cref{fig:first-page}.
\setcounter{thm}{0}
\begin{thm}[Gravitational collapse to ERN with a Schwarzschild piece]\label{cor:trapped-surfaces-ERN}
For any regularity index $k\in \Bbb N$, there exist spherically symmetric, asymptotically flat Cauchy data for the Einstein--Maxwell-charged scalar field system, with $\Sigma\cong \Bbb R^3$ and a regular center, such that the maximal  future globally hyperbolic development $(\mathcal M^4,g)$ has the following properties: 
 \begin{itemize}
 \item The spacetime satisfies all the conclusions of \cref{main-corollary} with $\mathfrak q=1$, including $C^k$-regularity of  all dynamical quantities. 
 
     \item The black hole region contains an isometrically embedded portion of a Schwarzschild exterior horizon neighborhood. In particular, there is a portion of a null cone behind the event horizon of $(\mathcal M,g)$ which can be identified with a portion of the apparent horizon of Schwarzschild.
     
     \item The ``Schwarzschild horizon'' piece is a part of the outermost apparent horizon $\mathcal A'$ of the spacetime. The set $\mathcal A'$ is disconnected and agrees with the event horizon $\mathcal H^+$ to the future of the first marginally trapped sphere on the event horizon.
     
     \item There is a neighborhood of the event horizon that contains no trapped surfaces. Nonetheless, the black hole region contains trapped surfaces. In fact, there are trapped surfaces at arbitrarily late advanced time in the interior of the black hole. 
 \end{itemize}
 \end{thm}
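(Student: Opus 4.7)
The plan is to derive \cref{cor:trapped-surfaces-ERN} as a corollary of the more general gluing theorem \cref{main-thm-Schw} (the Schwarzschild-exterior-sphere-to-extremal-RN version of \cref{thm-informal-statement}), combined with the backwards-evolution machinery of \cref{main-corollary}. The strategy is to construct a Schwarzschild collapse spacetime and then replace its evolution to the future of a suitably chosen outgoing null cone by an extremal RN exterior, leaving a portion of the Schwarzschild $r = 2 M_{\mathrm{Schw}}$ horizon trapped behind the new event horizon as a disconnected component of the outermost apparent horizon.

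First I would apply \cref{main-corollary} with charge-to-mass ratio $\mathfrak q = 0$ to produce regular one-ended Cauchy data on $\Sigma \cong \R^3$ whose MGFHD contains, at sufficiently late advanced times, an exact Schwarzschild exterior region of some mass $M_{\mathrm{Schw}}$, including a portion of the $r = 2 M_{\mathrm{Schw}}$ apparent horizon. Next I would pick an outgoing null cone $C_{-1}$ in this Schwarzschild exterior, starting at a sphere $S_0$ of area-radius $r_0$ with $2 M_{\mathrm{Schw}} < r_0 < M$, where $M$ (the target extremal RN mass) is taken sufficiently large so that all hypotheses of the generalized gluing theorem hold. Applying \cref{main-thm-Schw} with $\mathfrak q = 1$ yields $C^k$ spherically symmetric characteristic data along $C_{-1}$ from $v = 0$ to $v = 1$ gluing $S_0$ to an extremal RN event horizon sphere of mass $M$; the inequality $r_0 < M$ is dictated by Raychaudhuri's equation, which forces $\partial_v r$ to decrease monotonically from a positive value at $v = 0$ to $0$ at $v = 1$.

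I would then replace the future of $C_{-1}$ in the Schwarzschild spacetime by the EMCSF solution on $C_{-1} \cup \underline C$, where $\underline C$ is an ingoing null cone from the final sphere carrying extremal RN data. Since the new data agrees with the Schwarzschild geometry at its $v = 0$ endpoint (where $\phi \equiv 0$ so the gluing data matches Schwarzschild), and with extremal RN at its $v = 1$ endpoint, the splicing produces a global spacetime $(\mathcal M, g)$ with the same $\Sigma \cong \R^3$ as before but with an extremal RN exterior at late advanced times.

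Finally I would verify the four bulleted properties. The first is inherited directly from \cref{main-corollary}. For the second and third, the Schwarzschild horizon $r = 2 M_{\mathrm{Schw}}$ lies strictly behind the event horizon $\mathcal H^+$, which by construction coincides with the extremal RN horizon at $r = M > 2 M_{\mathrm{Schw}}$ at late advanced times; it is marginally trapped with untrapped spheres to its exterior, hence forms a component of the outermost apparent horizon $\mathcal A'$ in the sense of \cite{Kommemi13}, and the extremal RN event horizon constitutes a second, disjoint component of $\mathcal A'$. For the fourth, a neighborhood of $\mathcal H^+$ in the extremal RN geometry contains no trapped surfaces (by direct computation), whereas the trapped region $\{r < 2 M_{\mathrm{Schw}}\}$ in the Schwarzschild piece persists along its outgoing null generators for all future advanced time by Raychaudhuri's equation and the dominant energy condition satisfied by EMCSF matter. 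The substantive input is absorbed entirely into the generalized gluing theorem \cref{main-thm-Schw}; the rest is a soft assembly directly analogous to the proof of \cref{main-corollary}.
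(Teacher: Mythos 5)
Your overall strategy is the same as the paper's: first produce a Schwarzschild collapse spacetime containing an exact piece of the $r=2M_{\mathrm{Schw}}$ horizon, then glue a Schwarzschild exterior sphere on a later outgoing cone to an extremal Reissner--Nordstr\"om event horizon, so that a finite portion of the Schwarzschild apparent horizon survives behind the new event horizon as a disconnected component of $\mathcal A'$. (Two bookkeeping corrections: the second gluing is \cref{main-thm-RN}, not \cref{main-thm-Schw}, and its hypotheses require $M_{\mathrm{Schw}}\le \tfrac18 M_f$ and $R_i\le\tfrac12 M_f$, which is why the paper first extends the cone with trivial scalar field until $r$ is comparable to $\tfrac12 M_f$ before gluing; your condition $r_0<M$ alone is not sufficient to invoke the theorem.)

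There is, however, a genuine gap in your treatment of the fourth bullet. You assert that a neighborhood of $\mathcal H^+$ contains no trapped surfaces because such a neighborhood lies ``in the extremal RN geometry.'' It does not: only the domain of outer communication (the exterior side of $\mathcal H^+$) is glued in isometrically; the region just \emph{behind} the event horizon at late advanced times is obtained by solving the characteristic initial value problem with data whose causal past contains the nontrivial scalar field pulses, so it is not isometric to extremal Reissner--Nordstr\"om. Moreover, the absence of trapped surfaces immediately inside an extremal horizon is not automatic even under the dominant energy condition --- this is precisely the point of \cref{prop:isolated-extremal-horizon} in \cref{app:A}, where an extremal Killing horizon with trapped surfaces just behind it is constructed. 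The paper closes this gap by a transversal computation at any horizon sphere $p$ after the final gluing sphere: using the induced ERN horizon data ($r=Q=M_f$, $\partial_v r=0$, $\phi=0$) together with \eqref{eq:CSF-equation-for-r} and \eqref{eq:Maxwell-u}, one finds $\partial_u\partial_v r(p)=0$ and $\partial_u^2\partial_v r(p)=-2\partial_u r(p)/M_f^2>0$, so $\partial_v r$ becomes strictly positive just inside the horizon; this is then propagated to earlier $v$ by Raychaudhuri's equation \eqref{eq:CSF-Raychaudhuri-v} and upgraded from symmetry spheres to arbitrary closed surfaces via \cref{no-trapped} in \cref{app-B}. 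A related smaller omission: ``trapped surfaces at arbitrarily late advanced time'' does not follow from Raychaudhuri alone, since the outgoing cone through a trapped sphere may terminate on the $r=0$ singularity at finite advanced time; the paper handles this with the boundary characterization of \cite{Kommemi13}.
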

 
To reiterate, the scalar field collapses to form an \emph{exact} Schwarzschild spacetime, including the horizon, only to collapse further to form an \emph{exact} extremal Reissner--Norstr\"om for all late advanced time. The spacetime is regular (for any fixed $k\ge1$, one can construct an example which is $C^k$) and the matter model satisfies the dominant energy condition.

\subsubsection{Exceptionality and stability of third law violating solutions}
\label{subsubsec:exceptionality-of-third-law-violations}

The third law is manifestly concerned with exceptional behavior, which is why the phrases ``no matter how idealized'' \cite{BCH} or ``in any continuous process'' \cite{Israel-third-law} are specifically included in formulations of the third law. 
Indeed, keeping a horizon at exactly constant temperature (or equivalently constant surface gravity), any temperature, is of course exceptional. (Exactly stationary behavior on the horizon for all late advanced times is itself an infinite codimension phenomenon in the moduli space of solutions.) In view of our construction, the case of gravitational collapse to zero temperature in finite time is no more exceptional than any other \underline{fixed} temperature. 

We would also like to address the interesting question of whether creating \emph{asymptotically} extremal black holes should be viewed any differently from the subextremal case. Indeed, any mechanism which forms a black hole with \emph{exactly specified} parameters is inherently unstable, because a small perturbation can just change the parameters. As an example of this, we note the codimension-3 nonlinear stability of the Schwarzschild family by Dafermos--Holzegel--Rodnianski--Taylor \cite{DHRT}. In order to preserve the final black hole parameters, only a codimension-3 submanifold of the moduli space of data is admissible in their theorem.

The stability problem for extremal black holes is exceptional because they suffer from a linear instability known as the \emph{Aretakis instability} \cite{Aretakis-instability-1, Aretakis-instability-2, Aretakis-instability-3,apetroaie}. This instability is weak, and a restricted form of nonlinear stability is nevertheless conjectured to hold with the same codimensionality as in the subextremal case. See \cite[Section IV.2]{DHRT} for conjectures about stability of extremal black holes, \cite{A16, AAG20} for stability results on a nonlinear model problem, and numerical work \cite{Reall-numerical,LMR13} which is consistent with the above conjecture. The Aretakis instability should not be thought of as a manifestation of the third law and understanding its ramifications in the full nonlinear theory is a fundamental open problem in general relativity.

Therefore, asymptotic stability for any \underline{fixed} parameter ratio (up to and including extremality) should be formulated as a positive codimension statement. In our spherically symmetric setting, we are led to conjecture that for every solution constructed in \cref{main-corollary}, there exists a codimension-1 family of perturbations which asymptote to a Reissner--Nordstr\"om black hole with the same final parameter ratio. Since the conjectured codimension is the same for every ratio, we are then led to conclude that asymptotically extremal black holes are not qualitatively rarer than any fixed positive temperature.

    We hope that our construction demystifies the scenario of matter collapsing to exactly extremal black holes. More generally, the considerations in this paper open up a new window to studying critical behavior in gravitational collapse, which is fundamentally different from the regime studied in \cite{choptuik1993universality,Gundlach2007}.

\subsubsection{Aside: Extremal horizons with nearby trapped surfaces}
\label{rk:no-trapped-surfaces}
Though not directly relevant for the considerations of the present paper, we would like to point out that there is another issue with the attempt to characterize extremality by the lack of trapped surfaces near the horizon, i.e., by the third property of \cref{def:becoming-extremal}. In fact, it would appear that the property of having no trapped surfaces in the interior near the horizon is actually stronger than being extremal.

For a spacetime $(\mathcal M,g)$ with Killing field $K$, a Killing horizon $H$ is said to be \emph{extremal} if the surface gravity $\kappa$, defined by $\nabla_K K=\kappa K$ on $H$, vanishes identically. Equivalently, extremality means that $g(K,K)$ vanishes to at least second order along null geodesics crossing $H$ transversely. If $K$ is timelike to the past of $H$ and $g(K,K)$ vanishes to an \emph{even} order on $H$, then $K$ passes from timelike, to null, then back to timelike across $H$, and there are no strictly trapped surfaces near the horizon. This is precisely the situation for extremal Reissner--Nordstr\"om and Kerr black holes, where $g(K,K)$ vanishes to second order on the event horizon. 

However, there exist spacetimes for which $g(K,K)$ vanishes to an \emph{odd} order (at least three), in which case there may be trapped surfaces just behind the horizon. Indeed, in \cref{prop:isolated-extremal-horizon} of \cref{app:A} we construct an example of a stationary spacetime containing an extremal Killing horizon, with trapped surfaces just behind the horizon, and \ul{satisfying the dominant energy condition}. In this case $g(K,K)$ is exactly cubic in an ingoing null coordinate system. It would be interesting to construct such a spacetime with a specific matter model, or an extremal black hole with this behavior. 

While extremal Kerr, Reissner--Nordstr\"om, and other known examples are extremal in the sense of \cref{def:becoming-extremal}, it is far from obvious that all \emph{hairy} (i.e., carrying non-EM matter fields) extremal black holes should be free of trapped surfaces. In view of our example in \cref{app:A}, any mechanism which enforces this must necessarily be global in nature and/or depend on particular properties of the matter model in question. 

One could define the notion of a \emph{nondegenerate} extremal Killing horizon, i.e., the Killing field $K$ has the property that $g(K,K)$ vanishes only to second order, which would then be compatible with \cref{def:becoming-extremal}. See already \cref{rk:App1}. 

 For more discussion about possible definitions of extremality, see for instance \cite{Booth08, Booth16, Reall-numerical}.

 \subsection{Future boundary of the interior and Cauchy horizon gluing}\label{sec:interior}
The future boundary of the black hole region of dynamical black holes formed from gravitational collapse in the EMCSF system is known to be intricate (see e.g.\ \cite{MR2031855,Kommemi13,VdM18}). We refer to \cite{Kommemi13} for a detailed description of the most general possible structure of the interior, but see already \cref{Kommemi} for a summary of the most salient features. In this subsection we will first discuss the future boundary of the black hole interior in \cref{cor:trapped-surfaces-ERN}. Further, we will present
additional corollaries of our characteristic gluing method which provide examples of gravitational collapse to black holes with a piece of null boundary (a ``Cauchy horizon'') and a construction of spacetimes for which a Cauchy horizon closes off the interior region.

\subsubsection{Future boundary of the interior in \texorpdfstring{\cref{cor:trapped-surfaces-ERN}}{Theorem 1}}\label{figure-1-comment}
 For our main counterexample to the third law in \cref{cor:trapped-surfaces-ERN}, we obtain that the regular center $\Gamma$ extends into the black hole region. Regarding the future boundary of the spacetime, we do not know whether there exists a piece of possibly singular null boundary emanating from $i^+$ as in the subextremal case \cite{MR2031855,VdM18} or whether a  spacelike singularity emanates from $i^+$. Note that the result of  \cite{gajic-luk}, which shows the existence of a  Cauchy horizon emanating from $i^+$, does not apply directly since their analysis requires $|\mathfrak e |M\le 0.1$, whereas our construction requires $|\mathfrak e |M$ large. Nevertheless, one may speculate that a piece of Cauchy horizon occurs (for which the linear analysis of \cite{gajic1,gajic2} would be relevant), which could eventually turn into a spacelike singularity. (Note that one can readily set up the data such that the future boundary of the interior in \cref{cor:trapped-surfaces-ERN} has a piece of spacelike singularity. See however already \cref{subsubsec:BHinterior-CH-closes-off}.)

\subsubsection{Gravitational collapse with a piece of smooth Cauchy horizon} 
Another corollary of our method is the construction of regular one-ended Cauchy data which evolve to a subextremal or extremal black hole for which there exists a piece of Cauchy horizon emanating from $i^+$. We refer to \cref{trapped-surface-diagram} for the Penrose diagram of the spacetime constructed in \cref{cor:Cauchy-horizon-RN}. The proof of \cref{cor:Cauchy-horizon-RN} is given in \cref{subsec:collapse-with-CH}.

\begin{figure}[ht]
\centering{
\def\svgwidth{20pc}
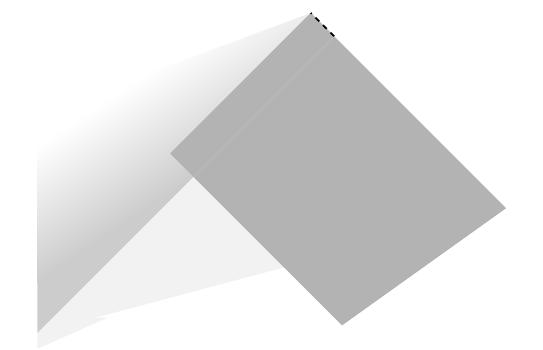}
\caption{Penrose diagram depicting \cref{cor:Cauchy-horizon-RN}: Gravitational collapse to Reissner--Nordström with nonempty piece of Cauchy horizon $\mathcal{CH}^+$.}
\label{trapped-surface-diagram}
\end{figure}

\begin{cor}[Gravitational collapse to RN with a smooth Cauchy horizon]\label{cor:Cauchy-horizon-RN}
For any regularity index $k\in \Bbb N$ and nonzero charge to mass ratio $\mathfrak q\in [-1,1]\setminus\{0\}$, there exist spherically symmetric, asymptotically flat Cauchy data for the Einstein--Maxwell-charged scalar field system in spherical symmetry, with $\Sigma\cong \Bbb R^3$ and a regular center, such that the maximal  future globally hyperbolic development $(\mathcal M^4,g)$ has the following properties: 
 \begin{itemize}
 \item The spacetime satisfies all the conclusions of \cref{main-corollary} with $\mathfrak q\ne 0$, including $C^k$-regularity of  all dynamical quantities. 
 \item The black hole region contains an isometrically embedded portion of a Reissner--Nordström Cauchy horizon neighborhood with charge to mass ratio $\mathfrak q$.
 \end{itemize}
 \end{cor}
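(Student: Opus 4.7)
The plan is to extend the construction of \cref{main-corollary} with a second characteristic gluing, now along an ingoing null cone $\underline C$ inside the black hole region, so that the interior contains an exactly isometric portion of Reissner--Nordstr\"om including a Cauchy horizon neighborhood. By \cref{main-corollary}, there exist Cauchy data on $\Sigma\cong \mathbb R^3$ whose MFGHD has exterior and event horizon exactly isometric to RN with $\mathfrak q \ne 0$ for all $v\ge v_0$; in particular $\phi\equiv 0$, $Q\equiv \mathfrak q M$, and $r\equiv r_+$ along $\mathcal H^+$ for $v\ge v_0$. Pick a sphere $S$ on $\mathcal H^+$ at some $v=v_1\ge v_0$ and let $\underline C$ be the ingoing null cone emanating from $S$ into the interior; in the reference RN spacetime $\underline C$ reaches the Cauchy horizon $\mathcal{CH}$.

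The goal is to prescribe the data on $\underline C$ to be exactly that of RN. By uniqueness for the $C^k$ characteristic Cauchy problem, this forces the region bounded by $\mathcal H^+|_{v\ge v_1}$ and $\underline C$ to be exactly isometric to a portion of the RN interior, and for $\mathfrak q\ne 0$ this contains an open neighborhood of a piece of $\mathcal{CH}$. To arrange the RN data on $\underline C$, I would run the same pulse-and-Borsuk--Ulam argument of \cref{thm-informal-statement} along $\underline C$, with the roles of $u$ and $v$ swapped. The free data is a finite linear combination $\phi_\alpha = \sum_j \alpha_j \phi_j$ of compactly supported pulses along $\underline C$; the $\partial_u$-Raychaudhuri equation in the gauge $\Omega^2\equiv 1$ determines $r$ on $\underline C$ as a nonlinear ODE, solved \emph{backward} from the endpoint sphere at which we impose exact RN values just inside $\mathcal{CH}$; the Maxwell equation along $\underline C$ reduces to a single charge condition (now requiring \emph{vanishing} integrated charge contribution, since $Q=\mathfrak q M$ at both endpoints); and the remaining $k$-jet matching conditions for $\partial_v^i\phi$ at the endpoint become $2k$ odd equations in $\alpha$, solved by Borsuk--Ulam on an invariant hypersurface $\mathfrak Q^{2k}\subset \mathbb R^{2k+1}$. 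Because the local geometry of the RN Cauchy horizon mirrors that of the event horizon (both being spherically symmetric null Killing horizons at constant area radius), the estimates of \cref{thm-informal-statement} carry over essentially verbatim, with $r_+$ replaced by $r_-$ and the ``Minkowski sphere at $v=0$'' replaced by the exact RN sphere at $S$.

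Once the gluing on $\underline C$ is complete, we combine it with the $C_{-1}$ data from \cref{thm-informal-statement} and evolve backward to produce the Cauchy data on $\Sigma$; the exterior of the MFGHD is unaffected, so the conclusions of \cref{main-corollary} are preserved and the interior now contains the desired exact RN Cauchy horizon neighborhood. I expect the main obstacle to be twofold. First, the Maxwell charge condition is now homogeneous of degree two with potentially indefinite coefficients, so $\mathfrak Q^{2k}$ is a quadric rather than an ellipsoid, and one must choose the baseline pulse profiles so that $\mathfrak Q^{2k}$ is still isotopic to $S^{2k}$ and antipodally symmetric; this will require a balanced mixed-sign choice of the $Q_j$'s so that the quadric has the right topology for Borsuk--Ulam to apply. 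Second, the $r$-bootstrap must be adapted to the black hole interior: inside the event horizon $r$ decreases in both null directions, so one must keep $r>r_-$ throughout the interior gluing in order not to reach $\mathcal{CH}$ prematurely, the analogue of the $r>0$ bootstrap in the main proof. Both issues are controlled by Raychaudhuri's monotonicity and the large-$M$ hypothesis already present in \cref{thm-informal-statement}, with $r_-$ playing the role that $0$ played there.
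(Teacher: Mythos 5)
Your route is far more elaborate than what is actually needed, and the extra machinery you invoke does not work as stated. The key observation you miss is that once \cref{main-corollary} is in hand, \emph{no further characteristic gluing is required}: for $v\ge v_0$ the data induced on the event horizon cone is \emph{exactly} that of a Reissner--Nordstr\"om event horizon with $\phi\equiv 0$, so one simply attaches the exact Reissner--Nordstr\"om solution in a neighborhood of the event horizon that penetrates \emph{both} the exterior and the interior, reaching a neighborhood of the Cauchy horizon. The only analytic input is the same local existence/Cauchy stability argument already used in the proof of \cref{corollary-proof}, now applied on a slab straddling $\mathcal H^+$. This is the paper's entire proof. Your own setup already contains this fact (you note that $\phi\equiv 0$, $Q\equiv\mathfrak q M$, $r\equiv r_+$ on $\mathcal H^+$ for late $v$), but you then contradict it: if you insert nontrivial scalar field pulses on the ingoing cone $\underline C$, the region bounded by $\mathcal H^+$ and $\underline C$ is \emph{not} ``forced by uniqueness to be exactly isometric to a portion of the RN interior''---uniqueness gives exact RN only when the data on both bounding cones are exactly RN, i.e.\ only for the trivial seed $\phi\equiv 0$, in which case there is nothing to glue.

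Beyond being unnecessary, the adapted Borsuk--Ulam argument you sketch has a genuine gap that you flag but do not resolve. The charge condition in \cref{thm-informal-statement} is solvable by a radial graph $\beta_Q(\hat\alpha)\hat\alpha$ precisely because $Q$ must \emph{increase} from $0$ to $\mathfrak q M_f$, so that $Q(1;\beta\hat\alpha)\approx \mathfrak e\beta^2 r_+^2$ is strictly monotone in $\beta$ and crosses the target value exactly once for every direction $\hat\alpha$. In your setting both endpoints carry the same charge $\mathfrak q M$, so the condition becomes $\int \mathfrak e r^2\,\Im(\phi_\alpha\overline{D\phi_\alpha})=0$, a homogeneous degree-two condition whose solution set is a cone through the origin rather than a starshaped hypersurface isotopic to $S^{2k}$: for directions $\hat\alpha$ where the (mixed-sign) integrand does not cancel there is no admissible $\beta>0$ at all, and where it does cancel every $\beta$ works. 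The set $\mathfrak Q^{2k}$ therefore degenerates and the Borsuk--Ulam step does not go through as in \cref{sec:gluing-rn}. (Note that the paper's genuine ingoing-cone gluing, \cref{thm-cauchy-horizon}, glues the Cauchy horizon to a \emph{Minkowski} sphere---endpoints with different charges---and is used for \cref{cor:Cauchy-horizon-closes--off}, not for the present corollary.)
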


\begin{rk}
When $|\mathfrak q|=1$, the spacetime constructed in \cref{cor:Cauchy-horizon-RN} does not contain trapped symmetry spheres in the dark shaded region in \cref{trapped-surface-diagram}. By a slight modification of the argument in \cref{no-trapped} below, this implies no trapped surfaces \emph{intersect} the dark shaded region. In particular, the trapped region (in the sense of \cite[p.~319]{HE73}) of the spacetime (if nonempty) avoids a whole double null neighborhood of the event horizon. Nevertheless, the event horizon agrees with the outermost apparent horizon for late advanced times. 
\end{rk}

\subsubsection{Black hole interiors for which the Cauchy horizon closes off spacetime} \label{subsubsec:BHinterior-CH-closes-off}
Our horizon gluing method can also be extended to glue  Reissner--Nordström interior spheres to a regular center along an ingoing cone, see already  \cref{thm:backwards-gluing}.  Using this, we construct asymptotically flat Cauchy data for which the future boundary of the black hole region $\mathcal{BH}$ is a Cauchy horizon $\mathcal{CH}^+$ which closes off spacetime.  We refer to \cref{spacetime-closed-off} for the Penrose diagram of the spacetime constructed in \cref{cor:Cauchy-horizon-closes--off}. The proof of \cref{cor:Cauchy-horizon-closes--off} is given in \cref{sec:closes-off-spacetime}.

  \begin{figure}[ht]
\centering{
\def\svgwidth{15pc}
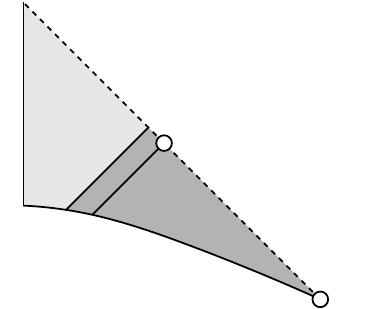}
\caption{Penrose diagram depicting \cref{cor:Cauchy-horizon-closes--off}: The Cauchy horizon is regular and closes off the spacetime in a regular fashion.}
\label{spacetime-closed-off}
\end{figure}

\begin{cor}[Cauchy horizon that closes off the spacetime]\label{cor:Cauchy-horizon-closes--off}
For any regularity index $k\in\Bbb N$, and nonzero charge to mass ratio $\mathfrak q\in [-1,1]\setminus\{0\}$, there exist spherically symmetric, asymptotically flat Cauchy data for the Einstein--Maxwell-charged scalar field system, with $\Sigma\cong \Bbb R^3$ and a regular center, such that the maximal  future globally hyperbolic development $(\mathcal M^4,g)$ has the following properties: 
  \begin{itemize}
\item All dynamical quantities are at least $C^{k}$-regular.
      \item The black hole region is non-empty, $\mathcal B\mathcal H\doteq \mathcal M\setminus J^-(\mathcal I^+)\neq \emptyset$.
      \item The future boundary of $\mathcal{BH}$ is a $C^k$-regular Cauchy horizon  $\mathcal{CH}^+$ which closes off spacetime. 
     \item  The black hole exterior is isometric to a Reissner--Nordström exterior with charge to mass ratio $\mathfrak q$. In particular, null infinity $\mathcal I^+$ is complete. 
     \item The spacetime does not contain antitrapped surfaces. 
     \item When $|\mathfrak q|=1$, the spacetime does not contain trapped surfaces. 
   \end{itemize}
 \end{cor}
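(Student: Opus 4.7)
The plan is to concatenate the event horizon gluing of \cref{thm-informal-statement} with an analogous construction, indicated in \cref{subsec:gluing-RN-CH}, that glues a regular center $\Gamma$ to a Reissner--Nordstr\"om Cauchy horizon of specified charge to mass ratio $\mathfrak q\ne 0$. Together these specify characteristic data on a bifurcate null hypersurface inside the future black hole: the outgoing branch is the event horizon, glued to an RN event horizon as in \cref{thm-informal-statement}, while the ingoing branch $\underline C$ runs from $\Gamma$ (where $r=0$) to a sphere on the RN Cauchy horizon (where $r=r_-$, $\partial_v r=0$, $Q=\mathfrak q M$). Solving the EMCSF Cauchy problem backwards from these data, as in the derivation of \cref{main-corollary}, then produces the required one-ended Cauchy data on $\Sigma\cong\R^3$ whose maximal future globally hyperbolic development has an exterior exactly isometric to RN and an interior bounded to the future by a Cauchy horizon closing off at $\Gamma$.

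The central new step is the regular-center-to-Cauchy-horizon gluing on $\underline C$. Fix the double-null gauge $\Omega^2\equiv 1$ along $\underline C$ and impose the regular center conditions $r=0$, $Q=0$ at one endpoint and the RN Cauchy horizon conditions at the other. Then by Raychaudhuri, Maxwell, and the wave equations, all geometric quantities on $\underline C$ are determined by the scalar field profile $\phi$. Using the pulse ansatz $\phi_\alpha=\sum_{1\le j\le 2k+1}\alpha_j\phi_j$ of the proof of \cref{thm-informal-statement}, integration of Maxwell's equation yields a charge condition which defines a smooth starshaped antipodally symmetric hypersurface $\mathfrak Q^{2k}\subset\R^{2k+1}$, provided $|\mathfrak e|M$ is sufficiently large. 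The remaining $2k$ real gluing equations (vanishing of the transverse derivatives of $\phi$ at the Cauchy horizon sphere through order $k$) are odd in $\alpha$ and solved on $\mathfrak Q^{2k}$ by the Borsuk--Ulam theorem (\cref{BorsukUlam}). The absence of antitrapped surfaces in the constructed spacetime follows from propagating $\partial_u r<0$ off the initial cones by Raychaudhuri's equation, analogous to \cref{lem:no-antitrapped}; when $|\mathfrak q|=1$, the corresponding propagation of $\partial_v r\ge 0$ from $\mathcal H^+$ and $\underline C$ precludes trapped surfaces throughout the interior.

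The main obstacle is the regular-center-to-Cauchy-horizon gluing itself. In contrast to the Minkowski-to-event-horizon setup of \cref{thm-informal-statement}, the area-radius $r$ degenerates at one endpoint of $\underline C$. Consequently, the radius bootstrap of \cref{lem:radius-boostrap} must be adapted to track $r$ growing monotonically from $0$ up to $r_-$ along $\underline C$, while maintaining $k$-order regularity of the metric and matter fields at $\Gamma$ under the pulse modulation. This requires placing the supports of the $\phi_j$ away from $\Gamma$, together with uniform lower bounds on $r$ over these supports; these conditions are consistent precisely when $|\mathfrak e|M$ is sufficiently large. Once the adapted quantitative estimates are in hand, the Borsuk--Ulam step and the verification of the matching conditions carry over from the proof of \cref{thm-informal-statement}.
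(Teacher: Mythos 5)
Your overall instinct---that the new ingredient is an ingoing gluing to a Reissner--Nordstr\"om Cauchy horizon sphere, modulated by the same pulse ansatz and Borsuk--Ulam argument---matches the paper, which proves exactly such a statement (\cref{thm-cauchy-horizon}, obtained from the event horizon gluing by the substitution $u\mapsto -v$, $v\mapsto -u$) and then assembles the spacetime as in \cref{corollary-proof}. However, your proposal deviates from the paper's proof in two places, one of which is a genuine gap. First, the paper does \emph{not} glue from the regular center: \cref{thm-cauchy-horizon} glues the Cauchy horizon sphere ($r=r_-$) to a \emph{Minkowski sphere of radius $\tfrac12 M$}, so that $r$ stays comparable to $M$ throughout the gluing region (this is precisely why $\tilde M_0(k,\mathfrak q,\mathfrak e)\to\infty$ as $\mathfrak q\to 0$, since $r_-/M\to 0$); the center is then reached by an exact Minkowski cone and the standard spherically symmetric local existence up to $r=0$. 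Your ``adaptation'' (supports of the $\phi_j$ away from $\Gamma$, uniform lower bounds on $r$ there) in fact collapses to this, but you should note that a trivial scalar field and vanishing charge near the center force the data to be exactly Minkowski there by the Birkhoff-type \cref{lem:gauge-Mink}---the sphere data formalism itself requires $\varrho>0$ and cannot be run from $r=0$.

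The genuine gap is your plan to \emph{also} perform the event horizon gluing of \cref{thm-informal-statement} and to treat the glued event horizon together with the center-to-Cauchy-horizon cone as a bifurcate characteristic hypersurface. These two cones do not form such a hypersurface (in the Penrose diagram they meet only at $i^+$), and the black hole interior lies to the \emph{past} of the Cauchy horizon, not in the future domain of dependence of the event horizon union anything. If you evolve forward from a dynamically glued event horizon you have no control over the future boundary of the interior---the paper explicitly identifies this as open (\cref{figure-1-comment}); if you instead solve backwards from the Cauchy horizon cone, the event horizon data is determined by the evolution and cannot be independently prescribed. The paper's proof therefore drops the event horizon gluing entirely for this corollary: an exact Reissner--Nordstr\"om exterior is attached to the past of $\mathcal H^+$, at the price that $\Sigma$ intersects the black hole region and may contain trapped surfaces (which is why, unlike \cref{main-corollary}, the statement does not assert $\Sigma\subset J^-(\mathcal I^+)$, and why the remark following it flags genuine collapse with a closing-off Cauchy horizon as an open problem). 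As written, your construction would not assemble into a spacetime with the claimed properties.
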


 \begin{rk}
 In contrast to our previous constructions, the Cauchy surface $\Sigma$ in \cref{cor:Cauchy-horizon-closes--off}  could contain trapped surfaces and $\Sigma$ intersects the black hole region.  It would be interesting to construct a spacetime as in \cref{cor:Cauchy-horizon-closes--off} which depicts genuine gravitational collapse, i.e., for which $\Sigma\subset J^-(\mathcal I^+)$. 
 \end{rk}

In the subextremal case, the behavior exhibited by our construction can be seen as exceptional as one generically expects a Cauchy horizon which forms in gravitational collapse  to be a weak null singularity \cite{MR2031855,VdM18,LukOhI}. In particular, in the case where the Cauchy horizon $\mathcal{CH}^+$ is weakly singular,  Van~de~Moortel  \cite{Van_de_Moortel2019} showed that the Cauchy horizon $\mathcal{CH}^+$ cannot close off spacetime in the sense of \cref{spacetime-closed-off}. Thus, our construction in \cref{cor:Cauchy-horizon-closes--off} makes \cite{Van_de_Moortel2019} sharp in the sense that the singularity assumption of $\mathcal{CH}^+$ in  \cite{Van_de_Moortel2019} is needed.   
Restricted to the extremal case, however, on the basis of a more regular Cauchy horizon as in \cite{gajic-luk}, one may speculate that there exists a set of data (open as a subset of the positive codimension set of data settling down to ERN) for which the Cauchy horizon closes off spacetime as depicted in \cref{spacetime-closed-off}. 

 \FloatBarrier
 \subsection*{Acknowledgments}
The authors wish to express their gratitude to Mihalis Dafermos for suggesting the problem and many helpful discussions. We also thank Jonathan Luk, Hamed Masaood, Thomas Massoni, Georgios Moschidis, Harvey Reall, Igor Rodnianski, Jaydeep Singh, and Nina Zubrilina for helpful conversations.
C.K.~acknowledges support by a grant from the Institute for Advanced Study, by Dr.~Max Rössler, the Walter Haefner Foundation, and the ETH Zürich Foundation.  R.U.~acknowledges partial support from the grant NSF-1759835 and the hospitality of the Institute for Advanced Study, the Centro de Ciencias de Benasque, and the University of Cambridge.

\section{The characteristic initial value problem for the Einstein--Maxwell-charged scalar field system in spherical symmetry}
\label{sec:prelims}

In this section, we give a detailed explanation of the setup and characteristic initial value problem for the Einstein equations with charged scalar fields in spherical symmetry, with a view towards the characteristic gluing problem. See \cite{Kommemi13} for more details on the EMCSF system. 

\subsection{The Einstein--Maxwell-charged scalar field system in spherical symmetry}
\label{subsec:einstein-maxwell-sph-symm}
\subsubsection{Spherically symmetric spacetimes}

We say that a smooth, connected, time-oriented, four-dimensional Lorentzian manifold $(\mathcal M,g)$ is a \emph{spherically symmetric} spacetime with (possibly empty) center of symmetry $\Gamma \subset \mathcal M$ if $\mathcal M \setminus \Gamma $ splits diffeomorphically as $\mathcal M \setminus \Gamma \cong \mathring{\mathcal Q}\times S^2$ with metric 
\begin{align*}
    g = g_\mathcal{Q} + r^2 g_{S^2},
\end{align*}
where $(\mathcal Q,g_\mathcal Q)$ for $\mathcal  Q = \mathring{\mathcal Q} \cup \Gamma$ is a (1+1)-dimensional Lorentzian spacetime with (possibly empty) boundary $\Gamma$, $g_{ S^2}$ is the round metric on the sphere, and $r\colon\mathcal Q \to \R_{\geq 0}$ is a function which can be geometrically interpreted as the area radius of the orbits of the isometric $\mathrm{SO}(3)$ action on  $(\mathcal M,g)$. In mild abuse of notation, we denote with $\Gamma$ both the center of symmetry in $\mathcal M$ and its projection to $\mathcal Q$.  Moreover, if  $\Gamma$ is non-empty, we assume that the  $\mathrm{SO}(3)$ action fixes $\Gamma$ and that $\Gamma$ consists of one timelike geodesic along which $r=0$. 
We further assume that $(\mathcal Q,g_\mathcal Q)$ admits a global double-null foliation  (locally a double-null foliation always exists) with null coordinates $(u,v)$ such that the metric $g$ takes the form
\begin{align*}
g = -\Omega^2 du dv + r^2  g_{ S^2}
\end{align*}
for a nowhere vanishing function $\Omega^2= -2g_\mathcal{Q}(\partial_u,\partial_v)$ on $\mathcal Q$ and such that $\partial_u$ and $\partial_v$ are future-directed. We further assume that along the center $\Gamma$, the coordinate $v$ is outgoing and $u$ is ingoing, i.e.,  $\partial_v r \vert_\Gamma >0$, $\partial_u r\vert_\Gamma <0$.     While we introduced the above notions in the smooth category, we will also consider spacetimes which are less regular $(C^{k\geq 1})$. We note that all notions introduced above also apply in this less regular case. We will also make use of the Hawking mass $m \colon \mathcal M  \to \mathbb R$ defined as \begin{equation*}
    m \doteq \frac r 2 (1 - g(\nabla r, \nabla r))
\end{equation*}
which also can be viewed  as a function on $\mathcal Q$ as 
\begin{equation}\label{eq:Hawking-mass}
    m = \frac r2 \left( 1+  4\frac{\partial_ur\partial_vr}{\Omega^2}\right). 
\end{equation}

Finally, we note that the double null coordinates $(u,v)$ are not unique and for any smooth functions $U,V\colon \R \to \R$ with $U' , V' > 0$, we obtain new global double null coordinates $(\tilde u , \tilde v)   =(U(u), V(v))$ such that the metric $g= -\tilde \Omega^2 d \tilde u d \tilde v + r^2 g_{ S^2}$, where $\tilde \Omega^2(\tilde u , \tilde v) = (U' V')^{-1} \Omega^2(U^{-1}(\tilde u), V^{-1}(\tilde v)) $ and $r (\tilde u, \tilde v) = r(U^{-1} (\tilde u), V^{-1} (\tilde v))$.

\subsubsection{The Einstein--Maxwell-charged~scalar~field system}

For the Einstein--Maxwell-scalar field system \eqref{eq:Einstein-Maxwell-1}--\eqref{TCSF} in spherical symmetry, additionally to the spherically symmetric spacetime $(\mathcal M^4, g)$, we assume that the field $\phi$ is complex-valued and spherically symmetric, so that $\phi$ descends to a function $\mathcal Q\to \Bbb C$, and that   $F$ and $A$ are  spherically symmetric such that $F$ can be written as \begin{equation*}
    F =\frac{Q}{2r^2} \Omega^2 du \wedge dv \label{eq:Q-defn}
\end{equation*}
for charge  $Q : \mathcal Q \to \R$. The potential $1$-form reads
\begin{align}
    A= A_u du + A_v dv.\nonumber
\end{align} We also define the gauge covariant derivative operator $D=d+i\mathfrak e A$.
The Einstein--Maxwell-scalar field system is invariant with respect to the following gauge transformations 
\begin{equation}
    \phi \mapsto e^{-i \mathfrak e \chi} \phi, \quad A \mapsto A +d\chi \label{eq:EM-gauge-trafo}
\end{equation}
for real-valued functions $\chi=\chi(u,v)$, where $\mathfrak e$ is a dimensionful coupling constant representing the charge of the scalar field. More abstractly, the Einstein--Maxwell-scalar field system is a $U(1)$-gauge theory and we refer to \cite{Kommemi13} for more details. In order to break the symmetry  we will use the global electromagnetic gauge 
\begin{equation}
    A_v =0\label{EM-gauge}
\end{equation}
throughout the paper. In this gauge, the Einstein--Maxwell-scalar~field system \eqref{eq:Einstein-Maxwell-1}--\eqref{eq:Einstein-Maxwell-3} in spherical symmetry  reduces to the following set of equations. 

\vspace{3mm}

\noindent \textbf{Wave equations for scalar field and metric components:}
\begin{align}\label{eq:CSF-wave-equation}
&\partial_u \partial_v \phi = - \frac{\partial_u\phi\partial_v r}{r} - \frac{\partial_u r\partial_v\phi}{r} + \frac{i \mathfrak e \Omega^2 Q}{4 r^2} \phi - i \mathfrak e A_u \frac{\partial_v r}{r} \phi - i \mathfrak e A_u \partial_v \phi \\
\label{eq:CSF-equation-for-r}
&\partial_u \partial_v r = -\frac{\Omega^2}{4r} - \frac{\partial_ur \partial_vr}{r} + \frac{\Omega^2}{4 r^3} Q^2 \\
\label{eq:CSF-equation-for-Omega}
& \partial_u\partial_v\log(\Omega^2)= \frac{\Omega^2}{2r^2}+2\frac{\partial_u r\partial_v r}{r^2}   - \frac{\Omega^2}{r^4} Q^2 - 2 \Re(D_u \phi \overline{\partial_v \phi})
\end{align}
\textbf{Maxwell's equations:}
 \begin{align}
     &\partial_u Q = - \mathfrak e r^2 \Im(\phi \overline{D_u \phi})\label{eq:Maxwell-u}\\
     &\partial_v Q =  \mathfrak e r^2 \Im(\phi \overline{\partial_v \phi})\label{eq:Maxwell-v}\\
     &\partial_v A_u =  -\frac{Q \Omega^2}{2r^2}\label{eq:Au-transport}
 \end{align}
\textbf{Raychaudhuri's equations:}
\begin{align}
\partial_u\left(\frac{\partial_u r}{\Omega^2}\right)&=-\frac{r}{\Omega^2}|D_u\phi|^2 \label{eq:CSF-Raychaudhuri-u}\\
\label{eq:CSF-Raychaudhuri-v}
\partial_v\left(\frac{\partial_v r}{\Omega^2}\right)&=-\frac{r}{\Omega^2}|\partial_v\phi|^2
\end{align}
From these equations we easily derive
\begin{equation}\label{rdur}
    \partial_v(r\partial_u r)= -\frac{\Omega^2}{4}\left(1-\frac{Q^2}{r^2}\right)
\end{equation}
and
\begin{align}
    \partial_v \partial_u (r \phi) = -\frac{\Omega^2 m}{2 r^2}\phi +i \frac{\mathfrak e \Omega^2 Q}{4 r} \phi  +\frac{\Omega^2 Q^2}{4r^3}\phi  - i \mathfrak e A_u\partial_v (r \phi ),\label{rphi2}
\end{align}
as well as 
\begin{equation}\label{eq:equation-for-hawking-mass}
    \partial_v m = 2\Omega^{-2} r^2 (-\partial_u r) |\partial_v \phi|^2 +  \frac 12 \frac{Q^2}{r^2} \partial_v r 
\end{equation}
which will be useful later.

\subsection{The characteristic initial value problem}\label{subsubsec:char-data}

With the equations of the EMCSF system at hand, we can precisely define what we mean by a $C^k$ solution. We may for now restrict attention to solutions away from the center. 
\subsubsection{Bifurcate characteristic data}

\begin{defn}\label{defn:Ck-solution}
Let $k\in \Bbb N$. A \emph{$C^k$ solution} for the Einstein--Maxwell-charged scalar field system in the EM gauge \eqref{EM-gauge} consists of a domain $\mathcal Q\subset \Bbb R^{1+1}_{u,v}$ and functions $r\in C^{k+1}(\mathcal Q)$ and $\Omega^2,\phi,Q, A_u\in C^k(\mathcal Q)$, such that $r>0$, $\Omega^2>0$, $\phi$ is complex-valued, $\partial_v^{k+1}A_u\in C^0(\mathcal Q)$, and the functions satisfy\footnote{Note that the wave equations \eqref{eq:CSF-wave-equation} and \eqref{eq:CSF-equation-for-Omega} can readily be interpreted for $k=1$.} equations \eqref{eq:CSF-wave-equation}--\eqref{eq:CSF-Raychaudhuri-v}.
\end{defn}

Next, we formulate the characteristic initial value problem for this class of solutions. Let $\mathbb R^{1+1}_{u,v}$ denote the standard $(1+1)$-dimensional Minkowski space. We introduce the \emph{bifurcate null hypersurface} $C\cup\underline C\subset \mathbb R^{1+1}_{u,v}$, where 
\begin{align*}
    C&\doteq C_{-1} \doteq \{u=-1\}\cap \{v\ge 0\} \\
    \underline C&\doteq \underline C_0 \doteq \{v=0\}\cap \{u\ge -1\}.
\end{align*}
The special point $(-1,0)$ is called the \emph{bifurcation sphere}. We pose data for $\phi$, $Q$, $r$, $\Omega^2$ and $A_u$ for the Einstein--Maxwell-charged-scalar~field~system on $C\cup \underline C$. 

\begin{defn}\label{defn:ESF-char-data} Let $k\in\Bbb N$. 
A $C^k$ \emph{bifurcate characteristic initial data set} on $C\cup\underline C$ for the Einstein--Maxwell-charged scalar field system in the EM gauge \eqref{EM-gauge} consists of continuous functions $r>0$, $\Omega^2>0$, $\phi$ (complex-valued), $Q$, and $A_u$ on $C\cup\underline C$. It is required that $r\in C^{k+1}$, $\Omega^2\in C^k$, $\phi\in C^k$, $Q\in C^k$, and $A_u\in C^{k}$ on $C\cup\underline C$.\footnote{By ``$C^k$ on $C\cup\underline C$'' is meant that $v$ derivatives are continuous on $C$ and $u$ derivatives are continuous on $\underline C$.} Finally, the data are required to satisfy equations \eqref{eq:Maxwell-u}--\eqref{eq:CSF-Raychaudhuri-v}, which implies also $\partial_v^{k+1}A_u\in C^0(C)$.
\end{defn}

Given characteristic initial data on a portion of $C\cup\underline C$ containing the bifurcation sphere, we can solve in a full double null neighborhood to the future. The proof is a standard iteration argument.

\begin{prop}\label{local-existence-ss}
Given a $C^k$ bifurcate characteristic initial data set for the EMCSF system on 
\begin{equation*}
   \left( \{u=-1\}\times \{0\le v\le v_0\}\right)\cup\left(\{-1\le u\le u_0\}\times \{v=0\}\right)\subset C\cup\underline C,
\end{equation*}
where $u_0>-1$ and $v_0>0$, there exists a number $\delta>0$ and a unique spherically symmetric $C^k$ solution of the EMCSF system on 
\begin{equation*}
    \left(\{-1\le u \le -1+\delta\}\times \{0\le v\le v_0\}\right)\cup\left(\{-1\le u\le u_0\}\times \{0\le v\le \delta\}\right)
\end{equation*}
which extends the initial data on $C\cup\underline C$.
\end{prop}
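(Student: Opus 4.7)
The plan is a standard Picard iteration in double null coordinates. I would work on the rectangle $R_\delta \doteq \{-1\le u\le -1+\delta\}\times\{0\le v\le v_0\}$ abutting $C$; the analogous rectangle abutting $\underline C$ is handled by an identical argument, and the two local solutions glue consistently by uniqueness. First, I would recast the wave/transport subsystem as a coupled system of Volterra-type integral equations for the quintuple $(r,\Omega^2,\phi,Q,A_u)$: each of the three wave equations \eqref{eq:CSF-wave-equation}--\eqref{eq:CSF-equation-for-Omega} is integrated twice starting from the bifurcation sphere, so that the unknown equals its prescribed values on $C\cup\underline C$ plus a double integral over $[-1,u]\times[0,v]$ of the nonlinear right-hand side; the transport equations \eqref{eq:Maxwell-v} and \eqref{eq:Au-transport} are integrated in $v$ from their initial values on $\underline C$.

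Next, I would view this as a fixed-point problem on the Banach space of continuous quintuples on $R_\delta$ in the sup norm, and verify contraction on a suitable closed ball around a continuous extension of the data. Positivity $r>0$, $\Omega^2>0$ is preserved for $\delta$ small, so all nonlinearities in \eqref{eq:CSF-wave-equation}--\eqref{eq:Au-transport} are smooth inside the ball, and the contraction factor comes from the small length of the $u$-integration together with local Lipschitz bounds on the right-hand sides. The unique fixed point is a continuous solution on $R_\delta$ agreeing with the data on $C\cup\underline C$. To upgrade to $C^k$, I would differentiate the integral equations and bootstrap: since the data is $C^k$ and the nonlinearities are polynomial in the unknowns and in $r^{-1}$, $\Omega^{-2}$, the same contraction argument run at successively higher levels of regularity produces a $C^k$ fixed point. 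The assertion $\partial_v^{k+1}A_u\in C^0$ then follows by directly differentiating \eqref{eq:Au-transport} and using the established $C^k$-regularity of $r$, $Q$, $\Omega^2$.

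The main obstacle is constraint propagation. The iteration enforces the three wave equations and the two $v$-transport equations, but not the Raychaudhuri equations \eqref{eq:CSF-Raychaudhuri-u}--\eqref{eq:CSF-Raychaudhuri-v} or the $u$-Maxwell equation \eqref{eq:Maxwell-u}. The substantive calculation I would carry out is to verify that, for any solution of the wave/transport subsystem, the $v$-derivative of the residual of \eqref{eq:CSF-Raychaudhuri-u}, the $u$-derivative of the residual of \eqref{eq:CSF-Raychaudhuri-v}, and the $v$-derivative of the residual of \eqref{eq:Maxwell-u} each reduce to a linear combination of the residuals themselves, with coefficients that are bounded $C^0$ functions on $R_\delta$. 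This is a routine but somewhat tedious identity, ultimately reflecting the twice-contracted Bianchi identity and the conservation law underlying \eqref{eq:Maxwell-u}--\eqref{eq:Maxwell-v} in spherical symmetry. Since the residuals vanish on $C\cup\underline C$ by Definition \ref{defn:ESF-char-data}, Gronwall in the transverse direction then forces them to vanish identically on $R_\delta$, completing the proof.
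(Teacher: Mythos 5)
Your proposal is correct and takes the same route the paper intends: the paper's entire proof of this proposition is the single sentence that it is ``a standard iteration argument,'' and your Picard--Volterra iteration on the two thin slabs, followed by Gronwall propagation of the residuals of \eqref{eq:CSF-Raychaudhuri-u}, \eqref{eq:CSF-Raychaudhuri-v}, and \eqref{eq:Maxwell-u} off $C\cup\underline C$, is the standard way to carry that out. The one technical point to arrange with care is that on the slab abutting $C$ the $v$-interval has full length $v_0$, so the sup-norm contraction should be organized so that every dependence on the unknowns passes through at least one integration in the short $u$-direction (e.g.\ via a weighted norm or by contracting an iterate of the map); this is routine and does not affect the correctness of your outline.
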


\subsubsection{Determining transversal derivatives from tangential data}

Now that we know that the data on $C\cup\underline C$ extends to a solution of the system \eqref{eq:CSF-wave-equation}--\eqref{eq:CSF-Raychaudhuri-v}, we can use the equations to compute all the partial derivatives of the solution along $C\cup\underline C$. We describe a procedure for determining all $u$-derivatives on $C$ just in terms of $r,\Omega^2,\phi,Q$, and $A_u$ (as functions of $v$) and their $u$ derivatives at the bifurcation sphere.

\begin{prop}\label{prop:determining-everything}
Let $(r,\Omega^2,\phi,Q,A_u)$ be a $C^k$ \underline{bifurcate} characteristic initial data set as in \cref{defn:ESF-char-data}. Then the EMCSF system can be used to determine as many $u$-derivatives of $r,\Omega^2,\phi,Q$, and $A_u$ on $C$ as is consistent with \cref{defn:Ck-solution}, \ul{explicitly from the data on $C\cup\underline C$}.
\end{prop}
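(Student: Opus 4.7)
The plan is to induct on the order $j$ of $u$-differentiation, exploiting the bifurcate structure of the data together with the fact that each of the wave equations \eqref{eq:CSF-wave-equation}--\eqref{eq:CSF-equation-for-Omega} has the form $\partial_u\partial_v X = F[X,\ldots]$ and thus, when restricted to $C$, becomes a first-order linear transport equation in $v$ for the $u$-derivative $\partial_u X$. The ``seed'' for all higher $u$-derivatives is the bifurcation sphere $(u,v)=(-1,0)$, at which the $\underline C$ data determines $\partial_u^j X(-1,0)$ to every order permitted by \cref{defn:ESF-char-data}. These values serve as initial conditions for the $v$-transport equations propagated along $C$.

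First I would treat the case $j=1$, in the following order. Equation \eqref{eq:CSF-equation-for-r} (equivalently \eqref{rdur}) is a linear ODE in $v$ for $\partial_u r$ on $C$, whose coefficients depend only on $r,\Omega^2,Q$ already given by the data on $C$; integration from $v=0$ with initial value $\partial_u r(-1,0)$ supplied by the $\underline C$ data determines $\partial_u r$ on all of $C$. Next, \eqref{eq:CSF-wave-equation} becomes $\partial_v(\partial_u\phi)=(\cdots)$ whose right-hand side involves only $\partial_u r$ and $C$-data, giving $\partial_u\phi$ on $C$ by integration. The Maxwell constraint \eqref{eq:Maxwell-u} then yields $\partial_u Q$ on $C$ algebraically. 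A transport equation for $\partial_u\log\Omega^2$ on $C$ is extracted from \eqref{eq:CSF-equation-for-Omega} in the same way. Finally, I differentiate \eqref{eq:Au-transport} in $u$ to obtain
\[
\partial_v(\partial_u A_u)=-\partial_u\left(\frac{Q\Omega^2}{2r^2}\right),
\]
whose right-hand side is now known, and integrate in $v$ from $\partial_u A_u(-1,0)$.

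The inductive step is structurally identical: commuting the EMCSF system with $\partial_u^{j-1}$ produces, for each of $r,\phi,\log\Omega^2,Q,A_u$, an equation of the form $\partial_v(\partial_u^j X) = G_X^{(j-1)}$, where $G_X^{(j-1)}$ is a rational expression in the previously determined quantities $\partial_u^i X$, $i\le j-1$, on $C$ (with $r$ and $\Omega^2$ appearing only in denominators, bounded away from zero by the definition of a data set), together with $v$-derivatives of these. Carrying out the five determinations in the same order as the base case closes the induction at level $j$, up to the maximal order dictated by \cref{defn:ESF-char-data}. The only real task in a careful write-up is the bookkeeping: one must track how $\partial_u^{j-1}$ distributes over each right-hand side, but since $[\partial_u,\partial_v]=0$ and all nonlinearities are rational, this is a routine exercise. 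No genuine obstruction arises; the content of the proposition is simply that the EMCSF system is sufficiently overdetermined on a bifurcate null hypersurface that every $u$-derivative on $C$ is explicitly computable, in closed form, from the raw data on $C\cup\underline C$.
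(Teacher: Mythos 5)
Your proof is correct and follows essentially the same route as the paper: seed all transverse derivatives at the bifurcation sphere from the $\underline C$ data, then propagate them along $C$ via the $v$-transport equations obtained from \eqref{eq:CSF-wave-equation}--\eqref{eq:Au-transport} (in the same order, $\partial_u r$ first, and with $\partial_u Q$ read off from \eqref{eq:Maxwell-u}), and close the induction by commuting with $\partial_u$. The one detail you gloss over is the top-order derivative $\partial_u^{k+1}r$ (recall $r\in C^{k+1}$ while $\Omega^2,\phi\in C^k$), which the paper obtains algebraically by differentiating Raychaudhuri's equation \eqref{eq:CSF-Raychaudhuri-u} rather than by transporting the $k$-times-commuted wave equation, since the latter is not consistent with the regularity of the remaining dynamical variables.
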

\begin{proof}
Since $(r,\Omega^2,\phi,Q,A_u)$ are all given on $\underline C$, we can compute as many $u$-derivatives of these quantities at the bifurcation sphere $(-1,0)$ as the regularity $k$ allows. We describe an inductive procedure for computing $u$-derivatives of $(r,\Omega^2,\phi,Q,A_u)$ on $C$, starting with $\partial_u r$. Since $\partial_u r(-1,0)$ is known, and the wave equation \eqref{eq:CSF-equation-for-r} can be written as
\begin{equation}
    \left(\partial_v  + \frac{\partial_vr}{r}\right)\partial_ur= -\frac{\Omega^2}{4r}+\frac{\Omega^2}{4r^3}Q^2, \nonumber
\end{equation}
where everything on the right-hand side is already known, $\partial_ur(-1,v)$ can be found by solving this ODE. In the same manner, $\partial_u\phi(-1,v)$ and then $\partial_u\log(\Omega^2)(-1,v)$ can  be found. To find $\partial_u Q(-1,v)$, differentiate \eqref{eq:Maxwell-u} in $v$ and then integrate. (Alternatively, differentiate \eqref{eq:Maxwell-v} in $u$.) Finally, $\partial_u A_u(-1,v)$ is found by differentiating \eqref{eq:Au-transport} in $v$ and then integrating. 

Proceeding in this way, by commuting all the equations with $\partial_u^i$, every partial derivative of $(r,\Omega^2,\phi,Q,A_u)$ which is consistent with the initial $C^k$ regularity can be found. We finally note that $\partial_u^{k+1}r(-1,v)$ is found from differentiating \eqref{eq:CSF-Raychaudhuri-u} an appropriate number of times, since the wave equation it satisfies is not consistent with the level of regularity of the rest of the dynamical variables. 
\end{proof}

\begin{rk}
Both \cref{local-existence-ss} and \cref{prop:determining-everything} exploit the \emph{null condition} satisfied by the EMCSF system in double null gauge. For a general nonlinear wave equation, the solution may not exist in a full double null neighborhood of the initial bifurcate null hypersurface as in \cref{local-existence-ss}. Indeed, the null condition means the transport equations for transversal derivatives in \cref{prop:determining-everything} are linear and hence do not blow up in finite time.
\end{rk}

\subsection{Sphere data and cone data} 

\subsubsection{Sphere data}

In order to define a notion of characteristic gluing later, we introduce a notion of \emph{sphere data} inspired by \cite{ACR1, ACR2}. Given a $C^k$ solution of the EMCSF system in spherical symmetry, for every $(u_0,v_0)\in \mathcal Q$ one can extract a list of numbers corresponding to $r(u_0,v_0)$, $\Omega^2(u_0,v_0)$, $\phi(u_0,v_0)$, $Q(u_0,v_0)$, $\partial_ur(u_0,v_0)$ etc. Our definition of sphere data formalizes this (long) list of numbers and incorporates the constraints  \eqref{eq:Maxwell-u}--\eqref{eq:CSF-Raychaudhuri-v}, so we may refer to the data induced by a $C^k$ solution on a sphere without reference to an actual solution of the equations themselves.

\begin{defn}\label{sphere-data}
Let $k\ge 1$. A \emph{sphere data set} with \emph{regularity index $k$} for the Einstein--Maxwell-charged scalar field in the EM gauge \eqref{EM-gauge} is the following list of numbers\footnote{One should think that formally $r(u_0,v_0)=\varrho$, $\phi(u_0,v_0)=\varphi$, $\Omega^2(u_0,v_0)=\omega$, $\partial_v^ir(u_0,v_0)=\varrho_v^i$, etc.}:
\begin{enumerate}
    \item $\varrho>0$, $\varrho_u^{1},\dotsc, \varrho_u^{k+1},\varrho_v^{1},\dotsc,\varrho_v^{k+1}\in \Bbb R$
    
    \item $\omega>0$, $\omega_u^{1},\dotsc,\omega_u^{k},\omega_v^{1},\dotsc,\omega_v^{k}\in\Bbb R$
    
    \item $\varphi,\varphi_u^{1},\dotsc,\varphi_u^{k},\varphi_v^{1},\dotsc,\varphi_v^{k}\in \Bbb C$
    
    \item $q, q_u^1,\dotsc, q_u^k,q_v^1,\dotsc,q_v^k\in \Bbb R$
    
    \item $a,a_u^1,\dotsc,a_u^k,a_v^1,\dotsc, a_v^k, a_v^{k+1}\in\Bbb R$
\end{enumerate}
subject to the following conditions:
\begin{enumerate}[(i)]
\item $\varrho_u^{i+2}$ can be expressed as a rational function of $\varrho_u^{j+1}$, $\omega_u^{j+1}$, $\varphi_u^{j+1}$, and $a_u^j$ for $0\le j\le i$ by formally differentiating \eqref{eq:CSF-Raychaudhuri-u},
\item $\varrho_v^{i+2}$ can be expressed as a rational function of $\varrho_v^{j+1}$, $\omega_v^{j+1}$, and $\varphi_v^{j+1}$ for $0\le j\le i$ by formally differentiating \eqref{eq:CSF-Raychaudhuri-v},
\item $q_u^{i+1}$ can be expressed as a polynomial of $\varrho_u^j$, $\varphi_u^j$, and $a_u^j$ for $0\le j\le i$ by formally differentiating \eqref{eq:Maxwell-u},
\item $q_v^{i+1}$ can be expressed as a polynomial of $\varrho_u^j$, and $\varphi_u^j$ for $0\le j\le i$ by formally differentiating \eqref{eq:Maxwell-v}, and 
\item $a_v^{i+1}$ can be expressed as a rational function of $\varrho_v^j$, $\omega_v^j$, and $q_v^j$ for $0\le j\le i$ by formally differentiating \eqref{eq:Au-transport},
\end{enumerate}
where we have adopted the convention that $\varrho_u^0=\varrho$, etc. We denote by $\mathcal D_k$   the set of such sphere data sets with regularity index $k$. 
\end{defn}

Gauge freedom is a very important aspect of the study of the EMCSF system. Our next definition records the gauge freedom present in sphere data. We need to consider both double null gauge transformations 
\begin{equation}
    u=f(U),\quad v=g(V),\nonumber
\end{equation}
where $f$ and $g$ are increasing functions on $\Bbb R$ and EM gauge transformations \eqref{eq:EM-gauge-trafo}
\begin{equation}
    \phi\mapsto e^{-i\mathfrak e \chi}\phi,\quad A\mapsto A+d\chi,\nonumber
\end{equation}
where $\chi$ is a function of $u$ alone, i.e.~$\partial_v\chi=0$, in order to satisfy \eqref{EM-gauge}. 

\begin{defn}
We define the \emph{full gauge group} of the Einstein--Maxwell-charged scalar field system in spherically symmetric double null gauge with the EM gauge condition \eqref{EM-gauge} as
\begin{equation}
    \mathcal G\doteq  \{(f,g):f,g\in \Diff_+(\Bbb R), f(0)=g(0)=0\}\times C^\infty(\Bbb R), \nonumber
\end{equation}
with the group multiplication given by\footnote{One can view this as a left semidirect product.}
\begin{equation}
    ((f_2,g_2),\chi_2)\cdot ((f_1,g_1),\chi_1)= ((f_2\circ f_1,g_2\circ g_1),\chi_2\circ f_1^{-1}+\chi_1).\nonumber
\end{equation}
The gauge group defines an action on sphere data as follows. Given sphere data $D\in \mathcal D_k$, assign functions $r(u,v)$, $\Omega^2(u,v)$, $\phi(u,v)$, $Q(u,v)$, and $A_u(u,v)$ whose jets agree with the sphere data $D$. For $\tau=((f,g),\chi)\in\mathcal G$, let
\begin{align}
   \label{gauge-transformation-1} \tilde r(u,v)&=r(f(u),g(v))\\
   \label{gauge-transformation-2}  \tilde\Omega^2(u,v)&= f'(u)g'(v)\Omega^2(f(u),g(v))\\
   \label{gauge-transformation-3}  \tilde \phi(u,v)&= e^{-i\mathfrak e \chi(f(u))} \phi(f(u),g(v)) \\
  \label{gauge-transformation-4}   \tilde Q(u,v)&= Q(f(u),g(v))\\
   \label{gauge-transformation-5}  \tilde A_u(u,v) & = f'(u)A_u(f(u))+f'(u)\chi'(f(u)).
\end{align}
The components of $\tau D$ are then defined by formally differentiating equations \eqref{gauge-transformation-1}--\eqref{gauge-transformation-5} and evaluating at $u=v=0$. For example, $\tau(\varrho)=\varrho$, $\tau(\varrho_v^1)= g'(0)\varrho_v^1$, and $\tau(\varphi_u^1)=(1-i\mathfrak e \chi'(0))e^{-i\mathfrak e \chi(0)}\varphi$.
\end{defn}

If one is given a bifurcate characteristic initial data set $(r,\Omega^2,\phi,Q,A_u)$, the lapse $\Omega^2$ can be set to unity on $C\cup\underline C$ by reparametrizing $u$ and $v$. In the sphere data setting, we have an analogous notion: 

\begin{defn}
A sphere data set $D\in \mathcal D_k$ is said to be \emph{lapse normalized} if $\omega=1$ and $\omega_u^i=\omega_v^i=0$ for $1\le i\le k$. Every sphere data set is gauge equivalent to a lapse normalized sphere data set. 
\end{defn}

\subsubsection{Cone data and seed data}\label{seed-data}

In the previous subsection, we saw how a $C^k$ solution $(r,\Omega^2,\phi,Q,A_u)$ on $\mathcal Q$ gives rise to a continuous map $\mathcal Q\to \mathcal D_k$. For the purpose of characteristic gluing, it is convenient to consider one-parameter families of sphere data which are to be thought of as being induced by constant $u$ cones in $\mathcal Q$. 

More precisely, if we consider a null cone $C\subset\mathcal Q$, parametrized by $v\in [v_1,v_2]$, then a solution of the EMCSF system induces a continuous map $D:[v_1,v_2]\to \mathcal D_k$ by sending each $v$ to its associated sphere data $D(v)$. In fact, this map can be produced by knowing only $D(v_1)$ and the values of $(r,\Omega^2,\phi,Q,A_u)$ on $C$. Arguing as in  \cref{prop:determining-everything} with $D(v_1)$ taking the role of the bifurcation sphere gives:

\begin{prop}\label{prop:null-data-existence}
Let $k\in \Bbb N$, $v_1<v_2\in\Bbb R$, $r, A_u\in C^{k+1}([v_1,v_2])$, and  $\Omega^2,\phi,Q\in C^{k}([v_1,v_2])$ which satisfy the constraints \eqref{eq:Maxwell-v}, \eqref{eq:Au-transport}, and \eqref{eq:CSF-Raychaudhuri-v} on $[v_1,v_2]$. Let $D_1\in \mathcal D_k$ such that all $v$-components of $D_1$ agree with the corresponding $v$-derivatives of $(r,\Omega^2,\phi,Q,A_u)$ at $v_1$. Then there exists a unique continuous function $D:[v_1,v_2]\to \mathcal D_k$ such that $D(v_1)=D_1$ and upon identification of the formal symbols $\varrho(D(v))$, $\varrho_u^1(D(v))$, etc., with the dynamical variables $(r,\Omega^2,\phi,Q,A_u)$ and their $u$- and $v$-derivatives, satisfies the EMCSF system and agrees with $(r,\Omega^2,\phi,Q,A_u)$ in the $v$-components for every $v\in [v_1,v_2]$.
\end{prop}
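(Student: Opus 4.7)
The plan is to mimic the inductive procedure from the proof of \cref{prop:determining-everything}, with $D_1$ at $v=v_1$ playing the role of the bifurcation sphere and the given functions $(r,\Omega^2,\phi,Q,A_u)$ on $[v_1,v_2]$ playing the role of characteristic data along the outgoing cone. By assumption, the $v$-components of $D(v)$ are already determined for every $v\in[v_1,v_2]$ by direct differentiation of the given quintuplet along the cone, and the ``pure $v$'' compatibility conditions (ii) and (iv) of \cref{sphere-data} hold because $(r,\Omega^2,\phi,Q,A_u)$ satisfy \eqref{eq:Maxwell-v}, \eqref{eq:Au-transport}, and \eqref{eq:CSF-Raychaudhuri-v} on $[v_1,v_2]$. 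The task is to propagate the $u$-derivative components of $D$ from their prescribed values at $v=v_1$, supplied by $D_1$, to the whole interval $[v_1,v_2]$, and to verify the remaining compatibility conditions (i), (iii), (v) along the way.

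First I would construct the first-order $u$-derivative components. Treating \eqref{eq:CSF-equation-for-r} as a linear ODE in $v$ for $\partial_u r$, with continuous coefficients depending only on $(r,\Omega^2,Q,\partial_v r)$, and imposing the initial value $\varrho_u^1(D_1)$, yields $\partial_u r(v)$ on $[v_1,v_2]$. The analogous manipulations of \eqref{eq:CSF-wave-equation} and \eqref{eq:CSF-equation-for-Omega} produce $\partial_u\phi(v)$ and $\partial_u\log\Omega^2(v)$; $\partial_u Q(v)$ is obtained by integrating the $v$-derivative of \eqref{eq:Maxwell-u} (equivalently the $u$-derivative of \eqref{eq:Maxwell-v}); and $\partial_u A_u(v)$ by integrating the $v$-derivative of \eqref{eq:Au-transport}. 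Then I would iterate: assuming all $\partial_u^j$-components have been produced for $j\le i<k$, commuting the five equations with $\partial_u^i$ gives linear ODEs in $v$, with already-determined continuous coefficients, for the $\partial_u^{i+1}$-components, with initial data read off from $D_1$. Standard linear ODE theory at each stage then yields existence, uniqueness, and continuous dependence on $v$, so the resulting map $D\colon [v_1,v_2]\to\mathcal D_k$ is continuous.

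The one bookkeeping subtlety, inherited from \cref{prop:determining-everything}, is that the top $u$-derivative $\partial_u^{k+1}r$ cannot be extracted from \eqref{eq:CSF-equation-for-r} alone, since this would require an extra derivative of $\Omega^2$ or $Q$ unavailable at regularity index $k$; instead one uses appropriate $v$-derivatives of the Raychaudhuri equation \eqref{eq:CSF-Raychaudhuri-u}, which is compatible with the assumed $C^{k+1}$ regularity of $r$ and $A_u$. Apart from this, there is no genuine obstacle: the algebraic constraints (i)--(v) of \cref{sphere-data} hold at every $v\in[v_1,v_2]$ precisely because $D(v)$ is being built from the EMCSF system itself; compatibility of the $v$-components at $v_1$ is given by hypothesis; and uniqueness is immediate from uniqueness of solutions to the linear ODEs arising at each stage of the induction.
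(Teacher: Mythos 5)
Your proposal is correct and follows essentially the same route as the paper, which proves this proposition simply by invoking the inductive procedure of \cref{prop:determining-everything} with $D_1$ at $v_1$ playing the role of the bifurcation sphere: the $u$-derivative components are propagated by treating the commuted equations as linear ODEs in $v$, and the top-order derivative $\partial_u^{k+1}r$ is recovered from \eqref{eq:CSF-Raychaudhuri-u} exactly as you note. No gaps.
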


\begin{defn}
Let $k\in\Bbb N$ and $v_1<v_2\in\Bbb R$. A \emph{$C^k$ cone data set} for the Einstein--Maxwell-charged scalar field in spherical symmetry is a continuous function $D:[v_1,v_2]\to \mathcal D_k$ satisfying the conclusion of \cref{prop:null-data-existence}, i.e., formally satisfying the EMCSF system.  
\end{defn}

We now discuss a procedure for generating solutions of the ``tangential'' constraint equations, \eqref{eq:Maxwell-v}, \eqref{eq:Au-transport}, and \eqref{eq:CSF-Raychaudhuri-v}, which were required to be satisfied in the previous proposition. 

\begin{prop}[Seed data]\label{prop:seed-data}
Let $k\in \Bbb N$, $v_1<v_2\in\Bbb R$, and $D_1\in \mathcal D_k$ be lapse normalized. For any $\phi \in C^k([v_1,v_2])$ such that $\partial_v^i\phi(v_1)=\varphi_v^i(D_1)$ for $0\le i\le k$, there exist unique functions $r,A_u\in C^{k+1}([v_1,v_2])$ and $Q\in C^k([v_1,v_2])$ such that $(r,\Omega^2,\phi,Q,A_u)$ satisfies the hypotheses of \cref{prop:null-data-existence} with $\Omega^2(v)=1$ for every $v\in [v_1,v_2]$.
\end{prop}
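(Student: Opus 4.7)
The strategy is to set $\Omega^2 \equiv 1$ on $[v_1,v_2]$, which is consistent with the lapse normalization of $D_1$, and then to solve the three tangential constraints \eqref{eq:CSF-Raychaudhuri-v}, \eqref{eq:Maxwell-v}, \eqref{eq:Au-transport} \emph{in order} as ODEs in $v$, using the appropriate components of $D_1$ as initial data at $v=v_1$. Uniqueness at each step will be immediate from standard ODE uniqueness, so the real content is existence with the claimed regularity and verification that the full $v$-jets of the constructed solution at $v_1$ match the components of $D_1$.

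First, with $\Omega^2\equiv 1$, Raychaudhuri's equation reduces to the linear second-order ODE $\partial_v^2 r=-r|\partial_v\phi|^2$. Since $|\partial_v\phi|^2\in C^{k-1}([v_1,v_2])$ and the equation is linear in $r$, standard linear ODE theory produces a unique global solution $r\in C^{k+1}([v_1,v_2])$ with $r(v_1)=\varrho(D_1)>0$ and $\partial_v r(v_1)=\varrho_v^1(D_1)$. Next, with $r$ in hand, the Maxwell constraint $\partial_v Q=\mathfrak e r^2\Im(\phi\overline{\partial_v\phi})$ has $C^{k-1}$ right-hand side, so direct integration with $Q(v_1)=q(D_1)$ yields a unique $Q\in C^k([v_1,v_2])$. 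Finally, the transport equation $\partial_v A_u=-Q/(2r^2)$ has $C^k$ right-hand side (using $r>0$), and integration with $A_u(v_1)=a(D_1)$ produces a unique $A_u\in C^{k+1}([v_1,v_2])$.

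It remains to check that for $i\ge 2$ the higher-order $v$-derivatives of the constructed $(r,Q,A_u)$ at $v=v_1$ agree with the components $\varrho_v^i$, $q_v^i$, $a_v^i$ of $D_1$. I would argue this by induction on $i$: by conditions (ii), (iv), (v) of \cref{sphere-data}, these components of $D_1$ are \emph{defined} to be the rational expressions obtained by formally differentiating \eqref{eq:CSF-Raychaudhuri-v}, \eqref{eq:Maxwell-v}, \eqref{eq:Au-transport} and evaluating at $v_1$. The same differentiation applied to the actual ODEs satisfied by $(r,Q,A_u)$, evaluated at $v_1$, produces identical rational expressions in the lower-order jets, which coincide with those of $D_1$ by the inductive hypothesis, by the initialization of $(r,Q,A_u)$ and their first $v$-derivatives, and by the assumption $\partial_v^i\phi(v_1)=\varphi_v^i(D_1)$ for $0\le i\le k$ on the scalar-field jet.

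I do not anticipate a serious obstacle: the construction is entirely constructive via scalar ODE integration in the single variable $v$. The one subtlety worth flagging is the implicit need for $r>0$ on the whole interval $[v_1,v_2]$ (used in solving the $A_u$ transport equation and for the output to constitute valid cone data landing in $\mathcal D_k$). Raychaudhuri's equation forces $r$ to be concave whenever $|\partial_v\phi|^2>0$, so positivity is not automatic for large $v_2-v_1$ and must be verified \emph{a posteriori} in concrete applications---which is precisely the kind of monotonicity/bootstrap issue that reappears throughout the gluing construction later in the paper.
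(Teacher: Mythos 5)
Your proposal is correct and follows essentially the same route as the paper: set $\Omega^2\equiv 1$, solve Raychaudhuri's equation as a linear second-order ODE for $r$ with data from $D_1$, then integrate Maxwell's equation for $Q$ and the transport equation for $A_u$, with the jet-matching at $v_1$ following from the defining relations in \cref{sphere-data} (the paper dispatches this last point with ``by virtue of the definitions,'' which your induction merely spells out). Your closing remark about the positivity of $r$ is also consistent with the paper, which leaves that issue to the later bootstrap arguments (e.g.\ \cref{lem:radius-boostrap}) rather than addressing it in this proposition.
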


\begin{proof}
When $\Omega^2\equiv 1$, Raychaudhuri's equation \eqref{eq:CSF-Raychaudhuri-v} reduces to 
\[\partial_v^2 r = - r|\partial_v\phi|^2,\]
which is a second order ODE for $r(v)$. Setting $r(v_1)=\varrho(D_1)$ and $\partial_v r(v_1)=\varrho_v^1(D_1)$, we obtain a unique solution $r\in C^{k+1}([v_1,v_2])$. The charge is obtained by integrating Maxwell's equation \eqref{eq:Maxwell-v}:
\[Q(v)= q(D_1)+\int_0^v \mathfrak e r^2(v') \Im(\phi(v')\overline{\partial_v \phi(v')})\,dv'.\]
Finally, the gauge potential is obtained by integrating \eqref{eq:Au-transport}:
\[A_u(v)= a(D_1)-\int_0^v \frac{Q(v')}{2r^2(v')}\,dv'.\]
The $v$-derivatives of $(r,\Omega^2,\phi,Q,A_u)$ agree with the $v$-components of $D_1$ by virtue of the definitions. 
\end{proof}

\section{The main gluing theorems}

In this section we give precise statements of our main theorems. In order to do this, we carefully define the notion of \emph{characteristic gluing}.

\subsection{Characteristic gluing in spherical symmetry}
\label{subsec:defn-sphere-data}

\begin{defn}[Characteristic gluing]
\label{defn:char-gluing}
Let $k\in\Bbb N$. Let $D_1,D_2\in\mathcal D_k$ be sphere data sets. We say that $D_1$ can be \emph{characteristically glued to $D_2$ to order $k$} in the Einstein--Maxwell-charged scalar field system in spherical symmetry if there exist $v_1<v_2$ and a $C^k$ cone data set $D:[v_1,v_2]\to \mathcal D_k$ such that $D(v_1)$ is gauge equivalent to $D_1$ and $D(v_2)$ is gauge equivalent to $D_2$. 
\end{defn}

\begin{rk}\label{gauge-remark}
It is clear that if $D_1$ and $D_2$ can be characteristically glued and $\tau_1,\tau_2\in \mathcal G$, then $\tau_1D_1$ and $\tau_2D_2$ can be characteristically glued. 
\end{rk}

\begin{rk}\label{rk:gluing-along-ingoing-data}
\cref{defn:char-gluing} on characteristic gluing along an outgoing cone has a natural analog defining characteristic gluing along an ingoing cone by parametrizing the cone data with $u$ and letting $v$ denote the transverse null coordinate, but keeping the definition of sphere data unchanged.
\end{rk}

By \cref{prop:seed-data}, characteristic gluing is equivalent to choosing an appropriate seed $\phi$ in the following sense. By applying a gauge transformation to $D_1$, we may assume it to be lapse normalized. Then cone data sets with $\Omega^2\equiv 1$ agreeing with $D_1$ at $v_1$ are parametrized precisely by functions $\phi\in C^k([v_1,v_2];\Bbb C)$ with the correct $v$-jet at $v_1$. Therefore, characteristic gluing reduces to finding $\phi$ so that the final data set $D(v_2)$ produced by \cref{prop:null-data-existence} is gauge equivalent to $D_2$. 

\subsection{Spacetime gluing from characteristic gluing}

If the two sphere data sets in \cref{defn:char-gluing} come from spheres in two spherically symmetric EMCSF spacetimes, we can use local well posedness for the EMCSF characteristic initial value problem, \cref{local-existence-ss}, to glue parts of the spacetimes themselves. This principle underlies all of our constructions in \cref{sec:construction-spacetime-cauchy-data}.

\begin{figure}[ht]
\centering{
\def\svgwidth{25pc}
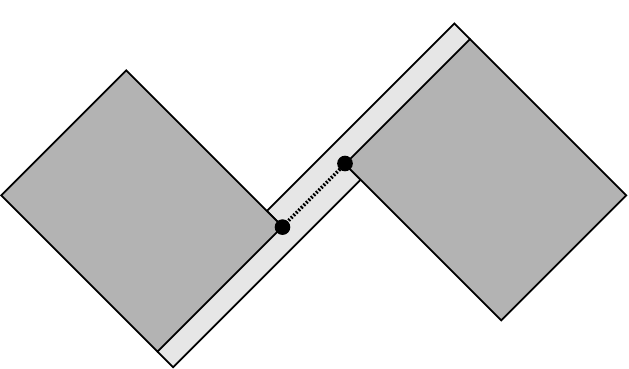}
\caption{Spacetime gluing obtained from characteristic gluing. The two spacetimes (dark gray) are glued along the cone $u=-1$. Note that the dark gray regions are causally disconnected except for the cone $u=-1$. Such a spacetime exists if and only if $D_1$ and $D_2$ can be characteristically glued.}
\label{fig:spacetime-gluing}
\end{figure}
%\FloatBarrier
\begin{prop}\label{prop:spacetime-gluing}
Let $(\mathcal Q_1,r_1,\Omega^2_1,\phi_1,Q_1, A_{u\,1})$ and $(\mathcal Q_2,r_2,\Omega^2_2, \phi_2,Q_2,A_{u\,2})$ be two $C^k$ solutions of the EMCSF system in spherical symmetry, where each $\mathcal Q_i$ is a double null rectangle, i.e., 
\begin{align*}
\mathcal Q_1&= [u_{0,1},u_{1,1}]\times [v_{0,1},v_{1,1}]\\
\mathcal Q_2&= [u_{0,2},u_{1,2}]\times [v_{0,2},v_{1,2}].
\end{align*}
Let $D_1$ be the sphere data induced by the first solution on $(u_{0,1}, v_{1,1})$ and $D_2$ be the sphere data induced by the second solution on $(u_{1,2}, v_{0,2})$. If $D_1$ can be characteristically glued to $D_2$ to order $k$, then there exists a spherically symmetric $C^k$ solution $(\mathcal Q, r,\Omega^2,\phi,Q,A_u)$ of the EMCSF system with the following property: There exists a global double null gauge $(u,v)$ on $\mathcal Q$ containing double null rectangles 
\begin{align*}
\mathfrak R_1&=[-1,u_2]\times [v_0,v_1],\\
\mathfrak R_2&=[u_0,-1]\times [v_2,v_3],
\end{align*}
such that the restricted solutions $(\mathfrak R_i,r,\Omega^2,\phi,Q,A_u)$
are isometric to the solutions $(\mathcal Q_i,r_i,\Omega^2_i,\phi_i,Q_i,A_{u\, i})$ for $i=1,2$, the sphere data induced on $(-1,v_1)$ is equal to $D_1$ to $k$-th order, and the sphere data induced on $(-1,v_2)$ is gauge equivalent to $D_2$ to $k$-th order.
\end{prop}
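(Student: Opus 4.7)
The strategy is to place $(\mathcal Q_1, r_1, \ldots, A_{u,1})$ and $(\mathcal Q_2, r_2, \ldots, A_{u,2})$ isometrically at $\mathfrak R_1$ and $\mathfrak R_2$ in a common double null chart, use the gluing cone data to fill in the segment of $\{u=-1\}$ between them, and then invoke the characteristic local well-posedness result \cref{local-existence-ss} to upgrade the resulting one-dimensional data along $\{u=-1\}$ to a two-dimensional $C^k$ solution on a neighborhood of the assembled configuration.

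First, the plan is to use the gauge group action \eqref{gauge-transformation-1}--\eqref{gauge-transformation-5} to normalize $\mathcal Q_1$ so that, after the affine coordinate change identifying $[u_{0,1},u_{1,1}]\times[v_{0,1},v_{1,1}]$ with $\mathfrak R_1=[-1,u_2]\times[v_0,v_1]$, the induced sphere data at the corner $(-1,v_1)$ is precisely $D_1$. The characteristic gluing hypothesis then provides a $C^k$ cone data set $D\colon[v_1,v_2]\to\mathcal D_k$, and by applying an independent pointwise gauge transformation to $D$ one can arrange $D(v_1)=D_1$ exactly (translation within the gauge orbit). Positioning $\mathcal Q_2$ at $\mathfrak R_2=[u_0,-1]\times[v_2,v_3]$ and applying a further independent gauge transformation to $\mathcal Q_2$, we finally ensure that its induced sphere data at $(-1,v_2)$ equals $D(v_2)$; this uses that the full gauge group $\mathcal G$ is available for $\mathcal Q_2$ once $\mathcal Q_1$ and the gluing cone are fixed, so the entire gauge orbit of $D_2$ (which contains $D(v_2)$) can be realized.

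With these gauge choices, the quintuple $(r,\Omega^2,\phi,Q,A_u)$ is defined on the full outgoing cone $\{u=-1\}\cap\{v_0\le v\le v_3\}$ as the concatenation of the left edge of $\mathfrak R_1$, the gluing cone $D$, and the right edge of $\mathfrak R_2$. Because sphere data sets encode all $v$-derivatives up to order $k$ by \cref{sphere-data}, the equality of data from both sides at $v_1$ and at $v_2$ guarantees $C^k$ regularity across the junctions, so the concatenation is a bona fide $C^k$ outgoing characteristic data set. Together with the ingoing data on the bottom edge $\{v=v_0\}\cap\{-1\le u\le u_2\}$ of $\mathfrak R_1$, this constitutes a bifurcate characteristic initial data set at the sphere $(-1,v_0)$ as in \cref{defn:ESF-char-data}. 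Applying \cref{local-existence-ss} yields a $C^k$ solution on a thick corner that contains a one-sided neighborhood of $\{u=-1\}\cap\{v_0\le v\le v_3\}$ inside $\{u\ge -1\}$; by the uniqueness clause of \cref{local-existence-ss} this solution agrees with $\mathfrak R_1$ on their overlap. A symmetric application at the bifurcation sphere $(-1,v_3)$, with the $u$-direction reversed and the top edge of $\mathfrak R_2$ as ingoing data, yields a $C^k$ solution on a one-sided neighborhood of $\{u=-1\}\cap\{v_0\le v\le v_3\}$ inside $\{u\le -1\}$ which, again by uniqueness, agrees with $\mathfrak R_2$. Taking $\mathcal Q$ to be the union of $\mathfrak R_1$, $\mathfrak R_2$, and these two neighborhoods produces the required global $C^k$ EMCSF solution.

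The main obstacle is the gauge bookkeeping in the first step: one must verify that after imposing $D(v_1)=D_1$ by gauge adjustments to $\mathcal Q_1$ and $D$, enough independent gauge freedom remains to realize $D(v_2)$ exactly as the sphere data induced by the embedded $\mathcal Q_2$ at $(-1,v_2)$. This ultimately follows from the fact that the gauge group $\mathcal G$ acts freely and independently on each of the three pieces $\mathcal Q_1$, $D$, and $\mathcal Q_2$, and its action on sphere data sweeps out the whole gauge equivalence class; everything else in the argument is a clean application of \cref{local-existence-ss} together with uniqueness of the characteristic Cauchy problem.
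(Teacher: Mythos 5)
Your construction mirrors the paper's own proof: you position $\mathcal Q_1$ and $\mathcal Q_2$ at $\mathfrak R_1$ and $\mathfrak R_2$, adjust gauges so that the concatenated data on $\{u=-1\}$ agrees with $D_1$ at $v_1$ and with the gauge-adjusted data of $\mathcal Q_2$ at $v_2$, and then apply \cref{local-existence-ss} once on each side of the cone (the second application being a past-directed one, which requires reversing both null directions, not only $u$). The gauge bookkeeping you single out as the main obstacle is in fact unproblematic and is exactly what the paper does: it takes the transformation $\tau$ with $\tau D_2=D(v_2)$ supplied by \cref{defn:char-gluing} and applies it to all of $\mathcal Q_2$.

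The genuine gap is in your final sentence, where you assert that the union of $\mathfrak R_1$, $\mathfrak R_2$, and the two one-sided neighborhoods "produces the required global $C^k$ EMCSF solution." Your two uniqueness arguments only show that each half agrees with the corresponding $\mathcal Q_i$ on its own side of $\{u=-1\}$; they say nothing about whether the two halves fit together to order $C^k$ \emph{across} $\{u=-1\}$. The tangential ($v$-) derivatives match because both halves are erected over the same concatenated cone data, but the transverse ($u$-) derivatives of the $\{u\ge-1\}$ solution on the cone are propagated from the $u$-jet of $\mathfrak R_1$ at $(-1,v_1)$, while those of the $\{u\le-1\}$ solution are propagated from the $u$-jet of $\mathfrak R_2$ at $(-1,v_2)$, and one must check that these coincide on all of $[v_0,v_3]$. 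This is precisely where the full strength of the characteristic gluing hypothesis is used: the cone data set $D$ formally satisfies the transverse transport equations of the EMCSF system, so by \cref{prop:determining-everything} (together with uniqueness for the resulting hierarchy of ODEs along the cone) the $u$-jet propagated forward from $D_1$ at $v_1$ arrives at exactly the $u$-components of $D(v_2)$, which you have arranged to equal the $u$-jet of $\mathfrak R_2$ there; the two propagated jets therefore agree everywhere on the cone. All the ingredients for this argument are already present in your setup, and it is exactly how the paper closes its proof, but as written the step is asserted rather than proved, and it is the mathematical heart of the proposition.
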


\begin{proof}
In this proof, we will refer to spherically symmetric solutions of the EMCSF system by their domains alone. 

By \cref{defn:char-gluing}, since $D_1$ and $D_2$ can be characteristically glued, we obtain $v_1<v_2$, functions $r,\Omega^2,\phi,Q,$ and $A_u$ on $[v_1,v_2]$, and a gauge transformation $\tau\in \mathcal D_k$ which acts on $D_2$. We now build the spacetime out of two pieces which will then be pasted along $u=-1$ and match to order $C^k$. See \cref{fig:spacetime-gluing-2}.

First, we prepare the given spacetimes. We relabel the double null gauge on $\mathcal Q_1$ by changing the southeast edge to be $u=-1$ and the northeast edge to be $v=v_1$. This also determines $u_2$ and $v_0$ and we apply no further gauge transformation to $\mathcal Q_1$. We denote this region by $\mathfrak R_1$. 

Next, the gauge transformation $\tau$ is extended and applied to $\mathcal Q_2$. We relabel the double null gauge to have $u=-1$ on the northwest edge and $v=v_2$ on the southwest edge. We denote this region by $\mathfrak R_2$.

We now construct the left half of \cref{fig:spacetime-gluing-2} as follows. Extend the cone $u=-1$ in $\mathfrak R_1$ until $v=v_3$, and extend the functions $(r,\Omega^2,\phi,Q,A_u)$ on $u=-1$ by taking them from the definition of characteristic gluing for $v\in [v_1,v_2]$, and then from the induced data on $u=-1$ in $\mathfrak R_2$ for $v\in [v_2,v_3]$. We now appeal to local existence, \cref{local-existence-ss}, the EMCSF system in spherical symmetry to construct the solution in a thin slab $\mathcal S_1$ to the future of \[(\{u=-1\}\times [v_1,v_3])\cup([-1,u_2]\times \{v=v_1\}).\]
This completes the construction of $\mathfrak R_1\cup \mathcal S_1$. 

The region $\mathfrak R_2\cup \mathcal S_2$ is constructed similarly, with the cone $u=-1$ now being extended backwards, first using the characteristic gluing data and then using the tangential data induced by $\mathfrak R_1$ on $u=-1$. Again, \cref{local-existence-ss} is used to construct the thin strip $\mathcal S_2$.

Finally, the spacetime is constructed by taking $\mathcal Q\doteq (\mathfrak R_1\cup \mathcal S_1)\cup (\mathfrak R_2\cup \mathcal S_1)$ and pasting $r,\Omega^2,\phi,Q,$ and $A_u$. From the construction, it is clear that the dynamical variables, together with all $v$-derivatives consistent with $C^k$ regularity are continuous on $\mathcal Q$. To show that all $u$-derivatives are continuous across $u=-1$, we observe that all transverse quantities are initialized consistently to $k$-th order at $(-1,v_1)$ and that the tangential data agrees by construction. Now \cref{prop:determining-everything} implies that the transverse derivatives through order $k$ are equal on $u=-1$ in both $\mathfrak R_1\cup \mathcal S_1$ and $\mathfrak R_2\cup \mathcal S_2$. This completes the proof. 
\end{proof}

\begin{figure}[ht]
\centering{
\def\svgwidth{30pc}
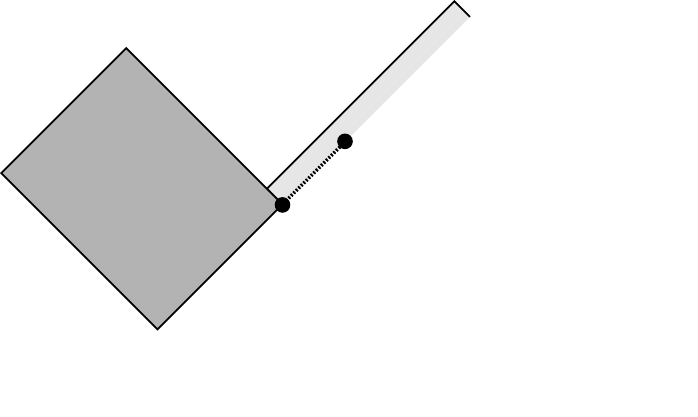}
\caption{Proof of \cref{prop:spacetime-gluing}.}
\label{fig:spacetime-gluing-2}
\end{figure}

\begin{rk}
If the characteristic gluing hypothesis is $C^k$ but no better and the original solutions $\mathcal Q_1$ and $\mathcal Q_2$ are more regular than $C^k$, then one expects $(k+1)$-th derivatives of dynamical quantities to jump across any of the null hypersurfaces bordering the light gray regions in \cref{fig:spacetime-gluing}. 
\end{rk}

\subsection{Sphere data in Minkowski, Schwarzschild, and Reissner--Nordstr\"om}
\label{subsec:sphere-data-mink-schwarz}
Before stating our main gluing results, we need to precisely define the terms \emph{Minkowski sphere}, \emph{Schwarzschild event horizon sphere}, and \emph{Reissner--Nordstr\"om event horizon sphere}. 

\begin{defn}[Minkowski sphere data]
Let $k\in \Bbb N$ and $R>0$. The unique lapse normalized sphere data set satisfying 
\begin{itemize}
    \item $\varrho =R$,
    \item $\varrho_u^1=-\tfrac 12$, 
    \item $\varrho_v^1=\frac 12$, and
    \item all other components zero,
\end{itemize}
is called the \emph{Minkowski sphere data of radius $R$} and is denoted by $D_{R,k}^\mathrm{M}$.
\end{defn}

\begin{defn}[Schwarzschild sphere data]
Let $k\in\Bbb N$, $R>0$, and $0\le 2M\le R$. The unique lapse normalized sphere data set satisfying 
\begin{itemize}
    \item $\varrho = R$,
    \item $\varrho_u^1=-\tfrac 12$
    \item $\varrho_v^1=\tfrac 12 (1-2M/R)$, and
    \item all other components zero,
\end{itemize}
is called the \emph{Schwarzschild sphere data of mass $M$ and radius R} and is denoted by $D^{\mathrm{S}}_{M,R,k}$. Note that $D^\mathrm{S}_{0,R,k}=D_{R,k}^\mathrm{M}$.
\end{defn}

\begin{defn}[Reissner--Nordstr\"om horizon sphere data]
\label{defn:RN-eh-data}
Let $k\in \Bbb N$, $M>0$, and $0\le |e|\le M$. The unique lapse normalized sphere data set satisfying 
\begin{itemize}
    \item $\varrho =r_+\doteq M+\sqrt{M^2-e^2}$, 
    \item $\varrho_u^1=-\tfrac 12$, 
    \item $\varrho_v^1=0$, 
    \item $q=e$, and
    \item all other components zero, 
\end{itemize}
is called the \emph{Reissner--Nordstr\"om horizon sphere data with parameters $M$ and $e$} and is denoted by $D^{\mathrm{RN}\mathcal H}_{M,e,k}$. Note that $D^{\mathrm{RN}\mathcal H}_{M,0,k}=D^\mathrm{S}_{M,2M,k}$.
\end{defn}

We will also define sphere data for general  Reissner--Nordström spheres. To do so, we extend the Hawking mass \eqref{eq:Hawking-mass} to a  function on sphere data sets $D\in \mathcal D_k$ by setting 
\begin{equation*}
    m(D) \doteq \frac{\varrho}{2}\left(1+\frac{4\varrho_u^1\varrho_v^1}{\omega}\right).
\end{equation*}
We also define the \emph{modified Hawking mass} of a spherically symmetric spacetime with charge by 
\begin{equation*}
    \varpi \doteq m +\frac{Q^2}{2r}
\end{equation*}
and extend it to sphere data sets by 
\begin{equation*}
    \varpi(D)\doteq  m(D)+\frac{q^2}{2\varrho}.
\end{equation*}
In a Reissner--Nordstr\"om spacetime of mass $M$ and charge $e$, any sphere data set $D$ associated to a symmetry sphere has $\varpi(D)=M$. Note that given $\varrho>0,\varrho_u^1<0,\omega$, and $q$, $\varrho_v^1$ is determined uniquely by $\varpi(D)$. 

Recall that the horizons of Reisser--Nordstr\"om with parameters $|e|\le M$ are located at 
\begin{equation*}
r_\pm = M\pm\sqrt{M^2-e^2}.
\end{equation*}

\begin{defn}[Reissner--Nordström sphere data]\label{defn:non-horizon-data} Let $k\in\Bbb N$,  $e\in\Bbb R$, and $R>0$ satisfy $M > e^2/(2R)$.  A lapse normalized sphere data set satisfying  
    \begin{itemize}
        \item $\varrho = R$,
        \item $q=e$,
        \item $\varpi = M$,
        \item $ |\varrho_v^1|=\tfrac 12$, or $ |\varrho_u^1|=\tfrac 12$, or $ \varrho_v^1=\varrho_u^1 = 0 $,
        \item all other components zero 
    \end{itemize}
    is called a \emph{Reissner--Nordstr\"om sphere data set of modified Hawking mass $M$, charge $e$, and radius $R$} and is denoted by $D^{\mathrm{RN}}_{M,e,R,k}$.
\end{defn}

\begin{rk}
    A Reissner--Nordstr\"om sphere data set of modified Hawking mass $M$, charge $e$, and radius $R$, $D^{\mathrm{RN}}_{M,e,R,k}$,  gives rise to unique sphere data  if   either $\varrho_v^1=\varrho_u^1=0$, or one additionally  specifies  $\operatorname{sgn}(\varrho_v^1)\in \{+,-\} $ or  $\operatorname{sgn}(\varrho_u^1)\in \{+,-\}$. 
\end{rk}

\subsection{Main gluing theorems}
\label{sec:precise-version-main-theorems}
With the previous definitions of \cref{subsec:defn-sphere-data} and \cref{subsec:sphere-data-mink-schwarz} at hand, we are now in a position to state our main gluing results. 

Our first gluing theorem concerns gluing a sphere in Minkowski space to a Schwarzschild event horizon with a \emph{real} scalar field. When the scalar field $\phi$ in the EMCSF system is real-valued, Maxwell's equation decouples from the rest of the system and the charge $Q$ is constant throughout the spacetime. Since $Q$ must vanish on any sphere in Minkowski space, it vanishes everywhere and the EMCSF system reduces to the Einstein-scalar field system.

\setcounter{thm}{1}
\begin{subtheorem}{thm}
\begin{thm}\label{main-thm-Schw}
For any $k\in \Bbb N$ and $0<R_i<2M_f$, the Minkowski sphere of radius $R_i$, $D_{R_i,k}^\mathrm{M}$, can be characteristically glued to the Schwarzschild event horizon sphere with mass $M_f$, $D_{M_f,k}^\mathrm{S}$, to order $C^k$ within the Einstein-scalar field model in spherical symmetry. 
\end{thm}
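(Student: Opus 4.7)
The plan is to follow the strategy sketched in \cref{subsec:event-horizon-gluing}, specialized to the real-scalar-field setting where the Maxwell equation decouples and $Q\equiv 0$, so the system reduces to the Einstein-scalar field equations. I work on $C_{-1}=\{u=-1\}\cap\{v\in[0,1]\}$ in the gauge $\Omega^2\equiv 1$; by \cref{prop:seed-data} the allowable cone data is then parametrized by a seed $\phi\in C^k([0,1];\mathbb R)$, and I additionally initialize $\partial_u^j\phi(-1,0)=0$ for $1\le j\le k$ at the bifurcation sphere. To ensure automatic matching of $v$-derivatives at the endpoints, I take $\phi$ to vanish to order $k$ at both $v=0$ and $v=1$.

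The first substantive step is to solve Raychaudhuri's equation $\partial_v^2 r=-r|\partial_v\phi|^2$ \emph{backwards} from the Schwarzschild-horizon final conditions $r(1)=2M_f$, $\partial_v r(1)=0$. Since $\partial_v^2 r\le 0$, one has $\partial_v r\ge 0$ on $[0,1]$ and $r$ is nondecreasing in $v$, so any positive lower bound on $r(0)$ suffices to keep $r>0$ on all of $C_{-1}$. With $Q\equiv 0$ and $\Omega^2\equiv 1$, the wave equation \eqref{eq:CSF-equation-for-r} yields the conservation law $\partial_v(r\,\partial_u r)=-\tfrac14$, so setting
\[
\partial_u r(-1,0)\doteq -\frac{1}{4\,\partial_v r(-1,0)}
\]
makes the Hawking mass \eqref{eq:Hawking-mass} vanish at $v=0$. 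Combined with $Q\equiv 0$ and the vanishing $v$-jets of $\phi$, this renders the induced sphere data at $v=0$ gauge-equivalent to $D^{\mathrm M}_{r(0),k}$; likewise, the vanishing $v$-jets of $\phi$ at $v=1$, together with $r(1)=2M_f$, $\partial_v r(1)=0$, and propagation of the transverse derivatives via \cref{prop:determining-everything}, produce an induced sphere data at $v=1$ gauge-equivalent to the Schwarzschild horizon data of mass $M_f$, provided the only nontrivial residual conditions
\[
(\mathrm{i})\ r(0)=R_i,\qquad (\mathrm{ii})\ \partial_u^j\phi(-1,1)=0\ \ (1\le j\le k),
\]
are realized. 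These are $k+1$ real constraints in total.

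I would satisfy them with the ansatz $\phi_\alpha=\sum_{j=1}^{k+1}\alpha_j\phi_j$, $\alpha\in\mathbb R^{k+1}$, where $\phi_j\in C^k_c((0,1);\mathbb R)$ are fixed baseline profiles with pairwise disjoint supports vanishing to order $k$ at the endpoints of their supports. A parity analysis shows $r(v)$ is \emph{even} in $\alpha$ (depending on $\alpha$ only through $|\partial_v\phi_\alpha|^2$ in Raychaudhuri), so condition (i) is even, whereas the scalar wave equation \eqref{eq:CSF-wave-equation} (with $\mathfrak e=0$) is linear in $\phi$ with even-in-$\alpha$ coefficients, so together with the homogeneous initialization at $v=0$ each $\alpha\mapsto\partial_u^j\phi_\alpha(-1,1)$ is \emph{odd} in $\alpha$. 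At $\alpha=0$ the field is trivial and $r(0)=2M_f>R_i$; a comparison argument for the backwards Raychaudhuri ODE along the ray $t\mapsto t\alpha$ shows $r(0)(t\alpha)$ is strictly monotone decreasing in $t$ until it reaches $0$. Hence the level set $\mathfrak R^k\doteq\{\alpha\in\mathbb R^{k+1}:r(0)(\alpha)=R_i\}$ is a starshaped, antipodally-symmetric hypersurface homeomorphic to $S^k$. Applying the Borsuk--Ulam theorem (\cref{BorsukUlam}) to the odd map $F:\mathfrak R^k\to\mathbb R^k$, $F(\alpha)\doteq(\partial_u^j\phi_\alpha(-1,1))_{1\le j\le k}$, produces a common zero, yielding the desired gluing profile.

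The main technical obstacle is the analysis underlying the starshaped structure of $\mathfrak R^k$: specifically, strict monotonicity of $r(0)(t\alpha)$ along rays, together with uniform $C^k$ estimates for $r$, $\partial_v r$, $\partial_u r$ and their transverse derivatives along $C_{-1}$ throughout $\mathfrak R^k$, both to apply Borsuk--Ulam and to produce the final sphere data continuously. Compared with the charged case of \cref{thm-informal-statement}, however, the problem is considerably easier: there is no charge-integral constraint \eqref{eqn:intro-charge-integral}, no coupling condition $|\mathfrak e|M\gtrsim|\mathfrak q|$ and hence no largeness requirement on $M_f$, and the positivity $r\ge R_i>0$ on $C_{-1}$ is automatic from monotonicity of $r$ in $v$ rather than requiring a delicate bootstrap.
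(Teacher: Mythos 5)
Your proposal is correct in its essentials and shares the paper's core mechanism: backwards integration of Raychaudhuri from the horizon data $r(1)=2M_f$, $\partial_v r(1)=0$ in the gauge $\Omega^2\equiv 1$, normalization of the Hawking mass at $v=0$ via $\partial_u r(0)=-\tfrac14(\partial_vr(0))^{-1}$, the parity observation that metric quantities are even and the scalar field odd in $\alpha$, and Borsuk--Ulam applied to the $k$ odd conditions $\partial_u^j\phi(1;\alpha)=0$. Where you genuinely diverge is in the treatment of the initial radius. You impose $r(0;\alpha)=R_i$ as an extra (even) constraint and work on the level set $\mathfrak R^k=\{r(0)(\alpha)=R_i\}$, which you must show is a starshaped, antipodally symmetric topological $k$-sphere; this forces you to prove strict monotonicity of $t\mapsto r(0)(t\hat\alpha)$ along rays (which does hold --- e.g.\ via the Riccati variable $\lambda=\partial_v r/r$, which satisfies $\partial_v\lambda=-t^2|\partial_v\phi_{\hat\alpha}|^2-\lambda^2$ and is monotone in $t$), together with the uniform estimates needed to make the level set a graph over $S^k$. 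The paper sidesteps all of this: it simply takes $\alpha$ on a small round sphere $S^k_\delta$, uses Cauchy stability to guarantee $r(0;\alpha)\ge 2M_f-\ve>R_i$ and $\partial_v r(0;\alpha)>0$, glues the Minkowski sphere of whatever radius $r(0;\alpha)$ results, and then extends the data \emph{trivially} through exact Minkowski space down to radius $R_i$. Your construction is essentially an (unnecessary, though correct) import of the $\mathfrak Q^{2k}$ level-set machinery that the paper reserves for the genuinely rigid charge constraint in \cref{main-thm-RN}; recognizing that the initial radius is soft is what makes the Schwarzschild case so much shorter in the paper. One further small point: your claim that the induced data at $v=1$ is gauge equivalent to the Schwarzschild horizon data also requires $\partial_u r(1)<0$ (the hypothesis $\varrho_u^1<0$ of \cref{lem:gauge-Schw}); this follows from your conservation law $\partial_v(r\partial_u r)=-\tfrac14$ together with $\partial_u r(0)<0$, and is worth stating explicitly.
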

The proof of \cref{main-thm-Schw} is given in \cref{subsec:Schwarzschild-gluing-proof}. We have separated out Minkowski to Schwarzschild gluing as a special case because it is simpler and highlights our topological argument. We will actually use this special case as the first step to produce our counterexample to the third law in \cref{proof-trapped-surfaces-ERN}. 

Our second gluing theorem concerns gluing a sphere in the domain of outer communication of a Schwarzschild spacetime to a Reissner--Nordstr\"om event horizon with specified mass and charge to mass ratio.

\begin{thm}\label{main-thm-RN}
For any $k\in \Bbb N$, $\mathfrak q\in [-1,1]$, and $\mathfrak e\in \Bbb R\setminus\{0\}$, there exists a number $M_0(k, \mathfrak q,\mathfrak e)\ge 0$ such that if $M_f> M_0$, $0\le M_i\le \tfrac 18 M_f$, and $2M_i<R_i\le \tfrac 12 M_f$, then the Schwarzschild sphere of mass $M_i$ and radius $R_i$, $D^\mathrm{S}_{M_i,R_i,k}$, can be characteristically glued to the Reissner--Nordstr\"om event horizon with mass $M_f$ and charge to mass ratio $\mathfrak q$, $D_{M_f,\mathfrak qM_f,k}^{\mathrm{RN}\mathcal H}$, to order $C^k$ within the Einstein--Maxwell-charged scalar field model with coupling constant $\mathfrak e$. The associated characteristic data can be chosen to have no spherically symmetric  antitrapped surfaces, i.e.~$\partial_u r<0$ everywhere. 
\end{thm}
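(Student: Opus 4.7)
The strategy is the one sketched in the introduction: fix the double null gauge $\Omega^2\equiv 1$ and the electromagnetic gauge $A_v=0$ on the outgoing cone $C_{-1}=\{u=-1\}\times[0,1]$, and view the complex-valued seed $\phi\in C^k([0,1])$ (vanishing to order $k$ at $v=0$ and $v=1$) as the free data, as in \cref{prop:seed-data}. I will construct the gluing \emph{backward}: initialize at the final sphere $v=1$ by the Reissner--Nordstr\"om horizon values $r(1)=r_+$, $\partial_vr(1)=0$, $Q(1)=\mathfrak qM_f$, and use Raychaudhuri \eqref{eq:CSF-Raychaudhuri-v}, Maxwell \eqref{eq:Maxwell-v}, and \eqref{eq:Au-transport} to transport $r$, $Q$, $A_u$ down to $v=0$. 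Then $\partial_ur(0)$ is pinned down by the Schwarzschild-at-$v=0$ Hawking mass condition $4\partial_ur(0)\partial_vr(0)=1-2M_i/r(0)$, and together with the initialization $\partial_u^j\phi(0)=0$ for $0\le j\le k$, \cref{prop:determining-everything} propagates all $u$-derivatives along $C_{-1}$.

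\textbf{Conditions to match and ansatz.} Once $\phi$ vanishes to order $k$ at the endpoints, the sphere at $v=0$ being gauge equivalent to $D^{\mathrm S}_{M_i,R_i,k}$ reduces to (a) matching the area radius $r(0)=R_i$, and (b) the charge condition
\begin{equation*}
\int_0^1\mathfrak e\,r^2\,\Im(\phi\,\overline{\partial_v\phi})\,dv=\mathfrak qM_f,
\end{equation*}
while the sphere at $v=1$ being gauge equivalent to $D^{\mathrm{RN}\mathcal H}_{M_f,\mathfrak qM_f,k}$ reduces (given that $\phi\equiv 0$ to order $k$ at $v=1$ and that $r(1)$, $\partial_vr(1)$, $Q(1)$ have been set by hand) to the $2k$ real conditions $\partial_u^j\phi(1)=0$ for $1\le j\le k$. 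I take
\begin{equation*}
\phi_\alpha=\sum_{j=1}^{2k+1}\alpha_j\phi_j,\qquad \alpha\in\Bbb R^{2k+1},
\end{equation*}
with smooth compactly supported complex profiles $\phi_j$ of disjoint $v$-supports; the profiles are chosen adapted to $k,\mathfrak q,\mathfrak e,M_i,R_i,M_f$ so that the scaling and amplitudes produce $r(0)=R_i$ along the relevant slice of $\alpha$. The charge condition, viewed as a function of $\alpha$, is to leading order a nondegenerate quadratic form and thus (for $M_f$ large) cuts out a smooth antipodally symmetric starshaped hypersurface $\mathfrak Q^{2k}\subset\Bbb R^{2k+1}$ isotopic to $S^{2k}$. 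The remaining $2k$ real conditions $\partial_u^j\phi_\alpha(1)=0$ depend on $\alpha$ through the linear transport equations of \cref{prop:determining-everything}, and are therefore odd in $\alpha$ modulo controlled quadratic errors; restricting them to $\mathfrak Q^{2k}$ and applying \cref{BorsukUlam} yields a simultaneous root.

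\textbf{Bootstrap and antitrapped condition.} The final step is a bootstrap ensuring that the backward construction is geometrically regular on $C_{-1}$: (i) the concavity $\partial_v^2r=-r|\partial_v\phi|^2\le 0$ together with $\partial_vr(1)=0$ gives monotonicity that keeps $r\ge \tfrac12 M_f$ provided $|\alpha|\lesssim 1$, which in turn is forced by the charge condition since $|\mathfrak e|M_f^2|\alpha|^2\approx|\mathfrak q|M_f$ imposes $|\alpha|^2\lesssim|\mathfrak q|/(|\mathfrak e|M_f)$; (ii) the antitrapped condition $\partial_ur<0$ follows from \eqref{rdur} once we know $Q^2/r^2<1$ on $C_{-1}$, which is guaranteed by $r\ge \tfrac12 M_f$ and $|Q|\le M_f$ for $M_f$ large; (iii) the hypothesis $M_f\ge M_0(k,\mathfrak q,\mathfrak e)$ is precisely what is needed to make the quadratic charge hypersurface $\mathfrak Q^{2k}$ genuinely star-shaped and to absorb the nonlinear remainders in the Borsuk--Ulam map.

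\textbf{Main obstacle.} The chief difficulty is the tension between the charge condition, which forces $|\alpha|^2\sim|\mathfrak q|/(|\mathfrak e|M_f)$ to be small, and the need for $|\alpha|$ to be nontrivial in order to (a) produce the radius drop from $r_+$ down to $R_i\le\tfrac12 M_f$ through Raychaudhuri's backward flow, and (b) ensure that the Borsuk--Ulam map is genuinely odd and nondegenerate on $\mathfrak Q^{2k}$. The resolution is to choose the baseline profiles $\phi_j$ with large derivatives (so that small $\alpha$ still generates significant focusing of $r$) while keeping amplitudes bounded, and this is where the largeness threshold $M_0(k,\mathfrak q,\mathfrak e)$ enters. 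Once this balance is struck, the remaining algebraic/topological content of the proof is the Borsuk--Ulam application, which is soft.
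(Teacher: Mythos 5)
Your overall architecture (gauge $\Omega^2\equiv 1$, seed $\phi_\alpha=\sum\alpha_j\phi_j$ with $\alpha\in\Bbb R^{2k+1}$, backward Raychaudhuri from $r(1)=r_+$, $\partial_vr(1)=0$, charge condition cutting out an antipodally symmetric hypersurface $\mathfrak Q^{2k}$, Borsuk--Ulam for the $2k$ real conditions $\partial_u^j\phi(1)=0$) matches the paper. But two of your steps would fail as written. First, the antitrapped condition: you claim $\partial_ur<0$ follows from \eqref{rdur} because $r\ge\tfrac12 M_f$ and $|Q|\le M_f$ give $Q^2/r^2<1$ for $M_f$ large. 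This is false arithmetic: those bounds only give $Q^2/r^2\le 4$, and in the extremal case $Q^2/r^2=1$ exactly at the final sphere, so the sign of $1-Q^2/r^2$ is not controlled and $r\partial_ur$ need not be monotone. The paper's argument (\cref{lem:no-antitrapped}) instead only uses $|\partial_v(r\partial_ur)|\lesssim 1$ and wins because the initialization \eqref{ni-init} makes $r(0)\partial_ur(0)\lesssim -(\mathfrak e/\mathfrak q)M_f$ hugely negative — precisely because the charge condition forces $\partial_vr(0)\lesssim\mathfrak q/\mathfrak e$ to be small. This is the place where the largeness of $\mathfrak eM_f/\mathfrak q$ is genuinely needed, and your proposal does not supply it.

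Second, your plan to tune the profiles so that $r(0)=R_i$, resolving the "tension" by taking profiles with large derivatives so that small $|\alpha|$ still focuses $r$ from $r_+$ down to $R_i\le\tfrac12 M_f$, is the wrong move: it would destroy the bootstrap $r\ge\tfrac12 r_+$ (\cref{lem:radius-boostrap}), enlarge $\partial_vr(0)$ and hence shrink $|\partial_ur(0)|$, which kills the antitrapped estimate above, and it decouples $|\alpha|$ from the charge normalization you rely on. The tension you identify does not exist in the correct proof: one keeps $r(0;\alpha)$ within $O(\mathfrak q/(\mathfrak eM_f))$ of $r_+$ and bridges the radius gap from $R_i$ up to $r(0;\alpha)$ by prepending \emph{trivial} (vanishing scalar field) Schwarzschild exterior data, exactly as at the end of the proof of \cref{main-thm-Schw}. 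Two smaller points: the Borsuk--Ulam map is \emph{exactly} odd in $\alpha$ (the system has an exact symmetry $\phi\mapsto-\phi$ with $r,Q,A_u$ even; see \cref{CSF-odd}), not merely "odd modulo quadratic errors" — exact oddness is what the theorem requires; and for $\mathfrak Q^{2k}$ to be a starshaped sphere you need the charge integral to be \emph{definite}, not just nondegenerate, which forces a specific choice of phase in the profiles (the paper's $e^{\mp iv}$ factor matched to $\operatorname{sign}(\mathfrak e\mathfrak q)$), a point your generic complex profiles do not address.
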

\end{subtheorem}
The proof of \cref{main-thm-RN} is given in \cref{sec:gluing-rn}.

\begin{rk}\label{ESF-extra-monotonicity}
The data constructed in the proof of \cref{main-thm-Schw} will automatically not contain spherically symmetric antitrapped surfaces because of a special monotonicity property in the absence of charge. Namely, 
\begin{equation}
    \label{eq:monotonicity-partialur}
 \partial_v(r\partial_u r)=-\frac{\Omega^2}{4},
\end{equation}
so $r\partial_ur$ is decreasing. In particular, since $r\partial_u r$ is negative in Minkowski space, the sign will propagate in view of \eqref{eq:monotonicity-partialur} for the Einstein-scalar field model.
\end{rk}

Our next gluing theorem supersedes \cref{main-thm-Schw} and \cref{main-thm-RN} by relaxing the requirement that the final sphere lie on the event horizon. The proof is slightly more involved than \cref{main-thm-RN} but has the same basic structure and is given in \cref{sec:non-horizon-gluing} below.

\setcounter{thm}{1}
\begin{subtheorem}{thm}
\setcounter{thm}{2}
\begin{thm}
\label{thm:interior-gluing}
    For any $k\in\Bbb N,\mathfrak q\in \Bbb R$,  $\mathfrak e\in\Bbb R\setminus\{0\}$ and $\mathfrak r >0$, there exists a number $M_0(k,\mathfrak q,\mathfrak e, \mathfrak r)>0$ such that if $M_f>M_0$ and
    \begin{equation}
        \label{eq:condition-on-Rf}
        R_f \geq\frac{M_f}{2} (1 + \mathfrak r)\mathfrak q^2,
    \end{equation}
     then there exists $R_i\in(0,R_f)$ such that the Minkowski sphere of radius $R_i$, $D^\mathrm{M}_{R_i,k}$, can be characteristically glued to the Reissner--Nordstr\"om sphere with modified Hawking mass $M_f$, charge $\mathfrak qM_f$,  and radius $R_f$, $D^\mathrm{RN}_{M_f,\mathfrak qM_f,R_f,k}$ with $\varrho_u^1 <0$, to order $C^k$ within the Einstein--Maxwell-charged scalar field system with coupling constant $\mathfrak e$. The associated characteristic data can be chosen to have no spherically symmetric antitrapped surfaces, i.e., $\partial_u r<0$ everywhere. 
\end{thm}
 
 \end{subtheorem}

\begin{rk}
 Reissner--Nordström spheres with modified Hawking mass $M$, charge $\mathfrak q M$ and radius $R\leq  \frac{M}{2} \mathfrak q^2$ have non-positive Hawking mass, $m \leq 0$. In this sense, the assumption $\mathfrak r>0$ in \cref{thm:interior-gluing} is necessary. Indeed, one immediately sees that \eqref{eq:condition-on-Rf} implies
 \begin{equation*}m \geq \frac{\mathfrak r
}{1+\mathfrak r} M_f,\end{equation*} 
so that $\mathfrak r>0$ ensures $m>0$.
\end{rk}

\begin{rk}
  \cref{thm:interior-gluing} also allows for gluing of Minkowski space to   Reissner--Nordström Cauchy horizons located at $r=r_-$. This is achieved by setting $\mathfrak r = \mathfrak q^2/4$ in   \cref{thm:interior-gluing}, see already the proof of \cref{cor:CH-closes-off}.
\end{rk}
 
While all the above theorems are stated as gluing results along outgoing cones, by mapping  $u\mapsto -v $ and $v\mapsto -u$, they also hold true for gluing along ingoing cones, recall \cref{rk:gluing-along-ingoing-data}. In particular, restating \cref{thm:interior-gluing} for gluing along ingoing cones gives
\setcounter{thm}{3}
\begin{manual}{2C$^\prime$}
\label{thm:backwards-gluing}
    For any $k\in\Bbb N,\mathfrak q\in \Bbb R$,  $\mathfrak e\in\Bbb R\setminus\{0\}$ and $\mathfrak r >0$, there exists a number $M_0(k,\mathfrak q,\mathfrak e, \mathfrak r)>0$ such that if $M_f>M_0$ and
    \begin{equation}
        \label{eq:condition-backwards-on-Rf}
        R_f \geq\frac{M_f}{2} (1 + \mathfrak r)\mathfrak q^2,
    \end{equation}
     then there exists $R_i\in(0,R_f)$ such that the Reissner--Nordstr\"om sphere with modified Hawking mass $M_f$, charge $\mathfrak qM_f$, and radius $R_f$, $ D^\mathrm{RN}_{M_f,\mathfrak qM_f,R_f,k}$ with $\varrho_v^1 >0$, can be characteristically glued along an ingoing cone to the Minkowski sphere of radius $R_i$, $D^\mathrm{M}_{R_i,k}$, to order $C^k$ within the Einstein--Maxwell-charged scalar field system with coupling constant $\mathfrak e$. The associated characteristic data can be chosen to have no spherically symmetric trapped surfaces, i.e., $\partial_v r>0$ everywhere. 
\end{manual}

\section{Proofs of the main gluing theorems}

\label{sec:gluing-section}

We begin with two lemmas which identify the orbits of Schwarzschild and Reissner--Nordstr\"om sphere data under the action of the full gauge group. This essentially amounts to a version of Birkhoff's theorem for sphere data.

\begin{lem}[Schwarzschild exterior sphere identification]\label{lem:gauge-Mink}
If $D\in \mathcal D_k$ satisfies 
\begin{itemize}
    \item $\varrho =R>0$,
    \item $\varrho_u^1<0$,
    \item $\varrho_v^1>0$,
    \item $\tfrac 12\varrho(1+4 \varrho_u^1\varrho_v^1)=M,$
    \item $q=0$, and 
    \item $\varphi_u^i=\varphi^i_v=0$ for $0\le i\le k$,
\end{itemize}
then $R>2M$ and $D$ is equivalent to $D^\mathrm{S}_{M,R,k}$ up to a gauge transformation. 
\end{lem}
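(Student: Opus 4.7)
The plan is to establish a Birkhoff-type rigidity for sphere data: since $\varphi$ and all its jets along $u$ and $v$ vanish at the sphere, the charge $q$ vanishes, and the Hawking mass is $M$, the sphere data must reduce, modulo the full gauge group $\mathcal G$, to the unique lapse-normalized Schwarzschild sphere data $D^{\mathrm S}_{M,R,k}$. I proceed in two stages: first I build a gauge transformation that normalizes $\omega$, $\varrho_u^1$, $\varrho_v^1$, and $A_u$ to their Schwarzschild values, and then I invoke the constraints (i)--(v) of \cref{sphere-data} to conclude that every remaining component of the transformed data vanishes.

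For Stage 1, I first pick a double-null reparametrization $(f,g)$ with $f(0)=g(0)=0$ so that the transformed lapse is identically $1$ along both axes through order $k$. Concretely, $f$ and $g$ are determined at the jet level by the ODEs $f'(u)\,g'(0)\,\Omega^2(f(u),0) = 1$ and $f'(0)\,g'(v)\,\Omega^2(0,g(v)) = 1$, compatible at the origin via $f'(0)g'(0)\omega = 1$. This makes $\tau(\omega) = 1$ and $\tau(\omega_u^i) = \tau(\omega_v^i) = 0$ for all $i \le k$. The remaining scaling parameter is fixed by demanding $\tau(\varrho_u^1) = f'(0)\varrho_u^1 = -1/2$, possible since $\varrho_u^1 < 0$; this pins down $g'(0)$ and yields $\tau(\varrho_v^1) = -2\varrho_u^1\varrho_v^1/\omega$. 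The Hawking mass hypothesis, interpreted gauge-invariantly as $M = \tfrac{R}{2}(1 + 4\varrho_u^1\varrho_v^1/\omega)$ (the stated formula for any lapse-normalized representative), then forces $\tau(\varrho_v^1) = (1 - 2M/R)/2$; combined with $\tau(\varrho_v^1) > 0$, this gives $R > 2M$. Finally, the EM gauge $\chi$ is used to zero out $A_u$: since $A_v = 0$ makes $A_u$ depend only on $u$, solving $\chi'(u) = -A_u(f(u))$ at the jet level sets $\tau(a) = \tau(a_u^i) = 0$ for all $0 \le i \le k$. The vanishings $\tau(\varphi_u^i) = \tau(\varphi_v^i) = \tau(q) = 0$ are automatic.

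With the gauge so normalized, every remaining component of $\tau D$ is determined by the constraints. A straightforward Leibniz induction using $\tau(\varphi_u^i) = 0$ for $0 \le i \le k$ shows that $\partial_u^j|D_u\phi|^2 = 0$ at the sphere for $j \le 2k-2$; hence Raychaudhuri \eqref{eq:CSF-Raychaudhuri-u} combined with constraint (i) forces $\tau(\varrho_u^i) = 0$ for $2 \le i \le k+1$. Symmetrically, Raychaudhuri in $v$ yields $\tau(\varrho_v^i) = 0$ for $i \ge 2$. Maxwell's equations \eqref{eq:Maxwell-u}--\eqref{eq:Maxwell-v} give $\tau(q_u^i) = \tau(q_v^i) = 0$ since the right-hand sides are proportional to $\phi$ and its derivatives, which vanish by hypothesis. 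Finally, the transport \eqref{eq:Au-transport} gives $\tau(a_v^i) = 0$ since $\tau(q_v^j) = 0$ for all $j \le k$. By inspection $\tau D = D^{\mathrm S}_{M,R,k}$.

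The only real subtlety is in Stage 1: verifying that the jet-level ODEs for $f$ and $g$ along the two axes are solvable through order $k+1$ with the origin compatibility, which reduces at each order to a triangular linear system and presents no obstruction but requires careful bookkeeping of the coefficients. Stage 2 is then a mechanical application of the chain rule and the Leibniz formula.
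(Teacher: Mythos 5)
Your proposal is correct and takes essentially the same route as the paper's (very terse) proof: normalize the lapse, use the residual null rescaling to fix $\varrho_u^1$ and read off $R>2M$ from the sign of $\varrho_u^1\varrho_v^1$, kill $A_u$ with the electromagnetic gauge freedom, and let the constraints (i)--(v) annihilate all remaining components. You simply make explicit the jet-level bookkeeping (the Raychaudhuri and Maxwell inductions) that the paper leaves implicit.
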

\begin{proof}
First, we observe that by the relations obtained from Maxwell's equations, $q_u^i=q_v^i=0$ for $1\le i\le k$. Since $\varphi_u^i=\varphi^i_v=0$, we can perform an EM gauge transformation to make $a_u^i=0$ for $0\le i\le k$. Also, $a_v^i=0$ for $1\le i\le k$ from $F=d(A_u\,du)$. Next, we can normalize the lapse. Finally, $R>2M$ follows from the definitions and $\varrho_u^1\varrho_v^1<0$.
\end{proof}

\begin{lem}[Reissner--Nordstr\"om horizon sphere identification] \label{lem:gauge-Schw}
If $D\in \mathcal D_k$ satisfies
\begin{itemize}
\item $\varrho = (1+\sqrt{1-\mathfrak q^2})M$ for $\mathfrak q\in [-1,1]$ and $M>0$,  
\item $\varrho_u^1<0$,
\item $\varrho_v^1 = 0$,
\item $q=\mathfrak qM$, and 
\item $\varphi_u^i=\varphi^i_v=0$ for $0\le i\le k$,
\end{itemize}
then $D$ is equivalent to $D^\mathrm{RN}_{M,\mathfrak qM,k}$ up to a gauge transformation. 
\end{lem}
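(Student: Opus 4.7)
The argument closely mirrors that of Lemma 4.1, with the extra complication that the charge is nonzero. I would proceed in the following steps.

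\emph{Step 1 (Derived charge vanishings).} Use Maxwell's equations \eqref{eq:Maxwell-u} and \eqref{eq:Maxwell-v}. The $u$- and $v$-derivatives of $Q$ at the sphere are polynomial expressions in $\varrho_*^j, \varphi_*^j, a_*^j$ involving at least one factor of $\varphi$ or a derivative of $\varphi$ of order at most the order we are differentiating. Since by hypothesis $\varphi_u^j = \varphi_v^j = 0$ for $0 \le j \le k$, a straightforward Leibniz expansion shows $q_u^i = q_v^i = 0$ for $1 \le i \le k$.

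\emph{Step 2 (Double null gauge normalization).} Apply a reparametrization $u=f(U), v=g(V)$ to simultaneously normalize the lapse and fix $\varrho_u^1 = -\tfrac 12$. The key point is that $\varrho_v^1 = 0$ is preserved under any choice of $g$ because $\tilde\varrho_v^1 = g'(0) \varrho_v^1 = 0$. Setting $f'(0) = -1/(2\varrho_u^1) > 0$ achieves $\tilde\varrho_u^1 = -\tfrac 12$, and then $g'(0) = 1/(\omega f'(0)) > 0$ achieves $\tilde\omega = 1$ while preserving the time orientation. Higher derivatives $f^{(i)}(0), g^{(i)}(0)$ for $2 \le i \le k+1$ are chosen to force $\tilde\omega_u^i = \tilde\omega_v^i = 0$, as in the standard lapse normalization procedure. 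Since $\varrho$ and $q$ are invariant under such reparametrizations, we retain $\tilde\varrho = r_+$ and $\tilde q = \mathfrak q M$.

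\emph{Step 3 (EM gauge normalization).} Apply an EM gauge transformation $\chi(u)$, compatible with the gauge \eqref{EM-gauge}. Under this, $\tilde a = f'(0)(a + \chi'(0))$ and successively $\tilde a_u^i$ depends linearly on $\chi^{(i+1)}(0)$ plus lower-order data. Choosing $\chi'(0), \chi''(0), \ldots, \chi^{(k+1)}(0)$ sequentially, we force $\tilde a = 0$ and $\tilde a_u^i = 0$ for $1 \le i \le k$. The transformation $\varphi \mapsto e^{-i\mathfrak e\chi} \varphi$ leaves all $\varphi_u^i$ and $\varphi_v^i$ at the sphere equal to $0$, because the hypothesized vanishings are preserved by multiplication by a nonvanishing phase and its derivatives.

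\emph{Step 4 (Matching derived components).} After Steps 2 and 3, the \emph{free} components of the transformed sphere data set (the $\varrho$, $\varrho_u^1$, $\varrho_v^1$, the $\omega$-jet, the $\varphi$-jet, $q$, and the $a$-jet in $u$) match those of $D^{\mathrm{RN}\mathcal{H}}_{M, \mathfrak q M, k}$ exactly. Since all other components ($\varrho_u^{i+2}, \varrho_v^{i+2}, q_u^{i+1}, q_v^{i+1}, a_v^{i+1}$) are, by clauses (i)--(v) of \cref{sphere-data}, determined by the same rational/polynomial formulas applied to these free components, the two sphere data sets coincide. The only subtlety worth noting is that $\tilde a_v^1 = -\mathfrak q M/(2r_+^2) \ne 0$ for $\mathfrak q \ne 0$; this is \emph{not} a contradiction with the ``all other components zero'' language in \cref{defn:RN-eh-data}, which refers only to free components, and both sphere data sets inherit the same nonzero $a_v^i$ from the Maxwell-determined formula.

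The only step requiring genuine care is Step 2, where one must verify the compatibility of achieving $\tilde\varrho_u^1 = -\tfrac 12$ with full lapse normalization at higher orders; this follows from the familiar counting that the parameters $f^{(i)}(0), g^{(i)}(0)$ for $1 \le i \le k+1$ provide sufficient freedom, exactly as in the Schwarzschild case.
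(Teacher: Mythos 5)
Your proposal is correct and follows essentially the same route as the paper's (much terser) proof: kill the charge derivatives via Maxwell's equations, normalize the EM gauge and the lapse, and use the remaining double null reparametrization freedom to set $\varrho_u^1=-\tfrac 12$ while $\varrho_v^1=0$ is preserved automatically. Your side remark that $a_v^1=-\mathfrak qM/(2r_+^2)\neq 0$ is a \emph{derived} component determined by clause (v) of \cref{sphere-data}, so that ``all other components zero'' in \cref{defn:RN-eh-data} refers only to the free components, is a correct reading of the definitions and consistent with what the paper leaves implicit.
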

\begin{proof}
As before, the charge vanishes to all orders and we normalize the gauge potential and lapse. We then use the additional double null gauge freedom $u\mapsto \lambda u$, $v\mapsto \lambda^{-1}v$ to make $\varrho_u^1=-\tfrac 12$.
\end{proof}

\begin{rk}
Without the condition $\varrho_u^1<0$ in the previous lemma, the sphere data in the extremal case could also arise from the \emph{Bertotti--Robinson} universe. 
\end{rk}

With these lemmas and \cref{gauge-remark} in mind, we follow the strategy discussed in \cref{subsec:defn-sphere-data}. We fix the interval $[0,1]$, set $\Omega^2\equiv 1$, and solve Raychaudhuri's equation, Maxwell's equation, and the transport equation for transverse derivatives of $\phi$ with appropriate initial and final values. We do \textbf{not} have to track transverse derivatives of $\partial_u r$, $\Omega^2$, $Q$, or $A_u$, because these will be ``gauged away'' at the end of the proof. 

\subsection{Proof of \texorpdfstring{\cref{main-thm-Schw}}{Theorem 2A}} 
\label{subsec:Schwarzschild-gluing-proof}

In this subsection we prove \cref{main-thm-Schw}. We first note that if the scalar field is chosen to be real-valued, the Einstein--Maxwell-charged scalar field system collapses to the Einstein-scalar field system. If the initial data has no charge ($Q(0)=0$), then this is equivalent to setting $\mathfrak e=0$ and $A_u$ and all its derivatives to be identically zero. 

We will first set up our scalar field ansatz as a collection of pulses. To do so, let  
\[0=v_0<v_1<\cdots< v_k<v_{k+1}=1\]
be an arbitrary partition of $[0,1]$. For each $1\le j\le k+1$, fix a nontrivial bump function \[\chi_j\in C_c^\infty((v_{j-1},v_j);\Bbb R).\] In the rest of this section, the functions $\chi_1,\dotsc,\chi_{k+1}$ are fixed and our constructions depend on these choices. 

Let $\alpha=(\alpha_1,\dotsc,\alpha_{k+1})\in\Bbb R^{k+1}$ and set 
\begin{equation}
    \phi_\alpha(v)\doteq \phi(v;\alpha)\doteq \sum_{1\le j\le k+1}\alpha_j\chi_j(v).\label{def-phi}
\end{equation}
We set $\Omega^2(v;\alpha)\equiv 1$ along $[0,1]$ and define $r(v;\alpha)$ as the unique solution of Raychauduri's equation \eqref{eq:CSF-Raychaudhuri-v} with this scalar field ansatz,
\begin{equation}
    \partial_v^2 r(v;\alpha) = - r(v;\alpha)(\partial_v\phi_\alpha(v))^2\label{eqn:uncharged-Ray-proof},
\end{equation}
 with prescribed ``final values''
\begin{align*}
    r(1;\alpha)&=2M_f\\
    \partial_v r(1;\alpha)&=0. 
\end{align*}

Let $0<\ve<2M_f-R_i$. By Cauchy stability and monotonicity properties of Raychaudhuri's equation \eqref{eqn:uncharged-Ray-proof}, there exists a $\delta>0$ such that for every $0<|\alpha|\le \delta$, 
\begin{align*}
  \sup_{[0,1]} |r(\cdot;\alpha)-2M_f|&\le \ve, \\
  \inf_{[0,1]} \partial_v r(\cdot;\alpha)&\ge 0,\\
   \partial_v r(0;\alpha)& > 0.
\end{align*}
The final inequality follows from the fact that $\alpha\ne 0$.

We now consider the sphere $S^k_\delta\doteq\{\alpha\in\Bbb R^{k+1}:|\alpha|=\delta\}$. For each $\alpha\in S_\delta^k$, define $D_\alpha(0)\in \mathcal D_k$ by setting
\begin{itemize}
    \item $\varrho=r(0;\alpha)>0$,
    \item $\varrho_v^1= \partial_vr(0;\alpha)>0$,
    \item $\varrho_u^1=-\tfrac 14 (\varrho_v^1)^{-1}$,
    \item $\omega=1$, and 
    \item all other components to zero.
\end{itemize}
By \cref{lem:gauge-Mink}, $D_\alpha(0)$ is equivalent to $D^\mathrm{M}_{r(0;\alpha),k}$ up to a gauge transformation. 

For each $\alpha\in S_\delta^k$, we now apply \cref{prop:null-data-existence} and \cref{prop:seed-data} to uniquely determine cone data \[D_\alpha:[0,1]\to \mathcal D_k,\]
 with initialization $D_\alpha(0)$ above and seed data $\phi_\alpha$ given by \eqref{def-phi}. By standard ODE theory, $D_\alpha(v)$ is jointly continuous in $v$ and $\alpha$. Note that $\varrho(D_\alpha(v))=r(v;\alpha)$ and $\varphi(D_\alpha(v))=\phi(v;\alpha)$ by definition. We now use the notation 
 \[\partial_u^i \phi(v;\alpha)\doteq \varphi_u^i(D_\alpha(v))\]
 for $i=1,\dotsc,k$ to denote the transverse derivatives of the scalar field obtained by \cref{prop:null-data-existence}.

By construction, the data set $D_\alpha(1)$ satisfies
\begin{itemize}
\item $\varrho =2M_f,$
\item $\varrho_u^1 <0$,
\item $\varrho_v^1=0$,
\item $\omega=1$, and
\item $\varphi_v^i=0$ for $0\le i\le k$.
\end{itemize}
The second property follows from the initialization of $\varrho_u^1$ in $D_\alpha(0)$ and the monotonicity of 
\[(r\partial_u r)(v;\alpha)\doteq \varrho (D_\alpha(v))\varrho_u^1(D_\alpha(v))\] in the Einstein-scalar field system discussed in \cref{ESF-extra-monotonicity}. 

In order to glue to Schwarzschild at $v=1$, by \cref{lem:gauge-Schw}, it suffices to find an $\alpha_*\in S_\delta^k$ for which additionally 
\[\partial_u\phi(1;\alpha_*)=\cdots=\partial_u^k\phi(1;\alpha_*)=0.\]
The following discrete symmetry of the Einstein-scalar field system plays a decisive role in finding $\alpha_*$.

A function $f(v;\alpha)$ is \emph{even in $\alpha$} if $f(v;-\alpha)=f(v;\alpha)$ and \emph{odd in $\alpha$} if $f(v;-\alpha)=-f(v;\alpha)$.

\begin{lem}\label{lem:odd}
As functions on $[0,1]\times S^k_\delta$, the metric coefficients $r(v;\alpha)$, $\Omega^2(v;\alpha)$ and all their ingoing and outgoing derivatives are even functions of $\alpha$. The scalar field $\phi(v;\alpha)$ and all its ingoing and outgoing derivatives are odd functions of $\alpha$. In particular, the map 
\begin{align}
 \label{eqn:F-defn}   F:S^k_\delta&\to \Bbb R^k\\
   \alpha &\mapsto \left(\partial_u\phi(1;\alpha),\dotsc ,\partial_u^k\phi(1;\alpha)\right)\nonumber
\end{align}
is continuous and odd. 
\end{lem}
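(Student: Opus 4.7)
The plan is to exploit the $\mathbb Z_2$ symmetry $\phi\mapsto -\phi$ of the Einstein-scalar field system (which is what the EMCSF system reduces to in the real, uncharged setting relevant to \cref{main-thm-Schw}), combined with the linearity of the seed ansatz \eqref{def-phi}, which immediately yields $\phi_{-\alpha}=-\phi_\alpha$. I will propagate this symmetry first to the metric coefficient $r$, then to the full initialization $D_\alpha(0)$, and finally, via the uniqueness of cone data, to all transverse derivatives.

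First I would establish evenness of $r(v;\alpha)$. Raychaudhuri's equation \eqref{eqn:uncharged-Ray-proof} depends on the scalar field only through $(\partial_v\phi_\alpha)^2$, which is manifestly invariant under $\alpha\mapsto -\alpha$, and its final conditions $r(1;\alpha)=2M_f$, $\partial_v r(1;\alpha)=0$ are $\alpha$-independent. Uniqueness for the resulting second-order ODE (solved backwards in $v$) therefore yields $r(v;-\alpha)=r(v;\alpha)$ and hence $\partial_v r(v;-\alpha)=\partial_v r(v;\alpha)$ on $[0,1]$. In particular the initializations $D_\alpha(0)$ and $D_{-\alpha}(0)$ are related by the formal involution $\sigma\colon \mathcal D_k \to \mathcal D_k$ which negates every $\varphi$-component of a sphere data set and fixes all other components. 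Since $Q\equiv 0$ and $A_u\equiv 0$ throughout our construction, an inspection of the defining relations in \cref{sphere-data} confirms that $\sigma$ preserves the admissibility conditions on our subclass of data (the only $\varphi$-dependence left in these relations is quadratic, coming from the source of Raychaudhuri).

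Next I would propagate $\sigma$-equivariance from the initialization to the entire cone data. The system \eqref{eq:CSF-wave-equation}--\eqref{eq:CSF-Raychaudhuri-v}, specialized to $Q\equiv A_u\equiv 0$, is manifestly invariant under $\phi\mapsto -\phi$ with $(r,\Omega^2)$ unchanged. Hence $\sigma(D_\alpha)$, viewed as a candidate cone data set, is generated by the seed $-\phi_\alpha=\phi_{-\alpha}$ from the initialization $\sigma(D_\alpha(0))=D_{-\alpha}(0)$ and formally satisfies the EMCSF system. By the uniqueness statement in \cref{prop:null-data-existence}, it must agree with $D_{-\alpha}$, giving $D_{-\alpha}(v)=\sigma(D_\alpha(v))$ for every $v\in[0,1]$. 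Unpacking this identity component by component delivers all the claimed parities of the ingoing and outgoing derivatives of $r$, $\Omega^2$, and $\phi$.

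Finally, the assertions about $F$ are immediate consequences: continuity follows from joint continuity of $(v,\alpha)\mapsto D_\alpha(v)$, itself a standard consequence of continuous dependence on parameters for the nonlinear backwards Raychaudhuri ODE and the cascade of linear transport ODEs appearing in \cref{prop:seed-data} and \cref{prop:null-data-existence}; oddness is the special case of the established parity applied to $\varphi_u^i(D_\alpha(1))=\partial_u^i\phi(1;\alpha)$ for $1\le i\le k$. I do not expect any genuine obstacle in this lemma; the only point meriting care is the verification that the formal map $\sigma$ is compatible with the constraint relations of \cref{sphere-data}, but this is essentially a parity check on the structural equations \eqref{eq:CSF-wave-equation}--\eqref{eq:CSF-Raychaudhuri-v} in the uncharged regime.
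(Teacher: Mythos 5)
Your proposal is correct and rests on the same key observation as the paper's proof, namely that the uncharged system is invariant under $\phi\mapsto-\phi$ with the metric quantities fixed; you package this as a single symmetry-plus-uniqueness argument (the involution $\sigma$ commutes with the generation of cone data, so $D_{-\alpha}=\sigma(D_\alpha)$), whereas the paper carries out the equivalent term-by-term induction through the transport equations of \cref{prop:determining-everything}, checking that metric equations involve only even powers of $\phi$ and scalar-field equations only odd powers. The two are interchangeable, and your verification that $D_{-\alpha}(0)=\sigma(D_\alpha(0))$ (using evenness of $r$ from the backwards Raychaudhuri ODE) together with continuity in $\alpha$ covers everything needed.
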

\begin{proof}
The scalar field itself is odd by the definition \eqref{def-phi}. Since Raychaudhuri's equation 
\eqref{eqn:uncharged-Ray-proof} involves the square of $\partial_v\phi(v;\alpha)$, $r(v;\alpha)$ will be automatically even.  Next, $\partial_u r(v;\alpha)$ is found by integrating the wave equation for the radius \eqref{eq:CSF-equation-for-r}, forwards in $v$
with initial value determined by $D_\alpha(0)$. Since $\phi$ enters into this equation with an even power (namely zero), $\partial_u r(v;\alpha)$ will also be even.  The wave equation for $r\phi$ in the Einstein-scalar field model can be derived from \eqref{rphi2} and reads
\begin{equation*}
    \partial_u \partial_v(r\phi)= -\frac{\Omega^2 m}{2r^3}\,r\phi,  
\end{equation*}
and the right-hand side is odd in $\alpha$ (the Hawking mass is constructed from metric coefficients so is also even). Recall from \cref{prop:null-data-existence} that this wave equation is used to compute $\varphi_u^i(D_\alpha(v))$. By inspection $\partial_u (r\phi)$ is odd, whence $\partial_u\phi(v;\alpha)$ is also odd. The proof now follows by inductively following the procedure of \cref{prop:determining-everything}, taking note of the fact that the transport equations for ingoing derivatives of $r$ and $\Omega^2$ only involve even powers of $\phi$ and its derivatives, whereas the transport equations for ingoing derivatives of $\phi$ only involve odd powers. 

The claim about the map $F$ follows from the oddness of ingoing derivatives of $\phi$ and the continuity of all dynamical quantities in $\alpha$, per standard ODE theory. 
\end{proof}

We now complete the proof of \cref{main-thm-Schw}. By the Borsuk--Ulam theorem stated as \cref{BorsukUlam}, $F(\alpha_*)=0$ for some $\alpha_*\in S_\delta^k$, where $F$ is as in \eqref{eqn:F-defn}. By \cref{lem:gauge-Schw}, $D_{\alpha_*}(1)$ is gauge equivalent to $D^\mathrm{S}_{M_f,k}$. 

So far we have glued $D^\mathrm{M}_{r(0;\alpha),k}$ to $D^\mathrm{S}_{M_f,k}$, and since $r(0;\alpha)> R_i$, we extend the data trivially in order to glue $D^\mathrm{M}_{R_i,k}$ to $D^\mathrm{S}_{M_f,k}$, which concludes the proof of \cref{main-thm-Schw}. \qed

\subsection{Proof of \texorpdfstring{\cref{main-thm-RN}}{Theorem 2B}}
\label{sec:gluing-rn}

In this subsection we prove \cref{main-thm-RN}.
 We assume that $\mathfrak q\ne 0$, the $\mathfrak q=0$ version of this result being essentially a repeat of the arguments in the previous section combined with the new initialization of $\partial_u r(0;\alpha)$ in \eqref{ni-init} below. 
 
In this subsection we adopt the notational convention that $A\les B$ means $A\le CB$, where $C$ is a constant that depends only on $k$  and the baseline scalar field profile, but not on $\mathfrak q$, $\mathfrak e$, $M_i$, $M_f$, or $\alpha$. The notation $A\approx B$ means $A\les B$ and $B\les A$.    

Let  
\begin{equation*}0=v_0<v_1<\cdots< v_{2k}<v_{2k+1}=1\end{equation*}
be an arbitrary partition of $[0,1]$. For each $1\le j\le 2k+1$, fix a nontrivial bump function \[\chi_j\in C_c^\infty((v_{j-1},v_j);\Bbb R).\] In the rest of this section, the functions $\chi_1,\dotsc,\chi_{2k+1}$ are fixed and our constructions depend on these choices.

For $\alpha=(\alpha_1,\dotsc,\alpha_{2k+1})\in\Bbb R^{2k+1}$, set 
\begin{equation} 
    \phi_\alpha(v)\doteq\phi(v;\alpha)\doteq\sum_{1\le j\le 2k+1}\alpha_j \chi_j(v)e^{-iv}.\label{eq:defn-phi-RN}
\end{equation}

\begin{rk}\label{signs-issue}
If $\mathfrak e>0$, this choice of $\phi$ will make $Q\ge 0$, which is consistent with $\mathfrak q>0$. If $\mathfrak e>0$ and $\mathfrak q<0$, then we replace $-iv$ in the exponential with $+iv$. Similarly, the cases $\mathfrak e<0$, $\mathfrak q>0$ and $\mathfrak e<0$, $\mathfrak q<0$ can be handled. Therefore, we assume without loss of generality that $\mathfrak e>0$, $\mathfrak q>0$.
\end{rk}

For $\hat\alpha\in S^{2k}$ (the unit sphere in $\Bbb R^{2k+1}$) and $\beta\ge 0$, it is convenient to define $r(v;\beta,\hat\alpha)=r(v;\beta\hat\alpha)$, etc.
We again set $\Omega^2(v;\alpha)\equiv 1$ and study the equations \eqref{eq:CSF-Raychaudhuri-v} and \eqref{eq:Maxwell-v} for $v\in [0,1]$ with the $\phi_\alpha$ ansatz:
\begin{align}
 \label{Ray-4}  \partial_v^2 r(v;\alpha)&=- |\alpha|^2r(v;\alpha) |\partial_v\phi_{\hat\alpha}(v)|^2,\\
   \label{maxwell-4} \partial_v Q(v;\alpha)&= \mathfrak e |\alpha|^2 r(v;\alpha)^2 \Im(\phi_{\hat\alpha}(v)\overline{\partial_v\phi_{\hat\alpha}(v)}).
\end{align}
In addition, we again define $r$ at $v=1$ by 
\begin{align}
     r(1;\alpha)&= r_+, \nonumber \\
    \partial_v r(1;\alpha)&=0, \nonumber
    \end{align}
   and   $Q$ at $v=0$ by  
\begin{equation}
    Q(0;\alpha)=0, \label{eq:initialization-of-Q}
\end{equation}
which together with  \eqref{Ray-4}  and \eqref{maxwell-4} uniquely determine $r$ and $Q$ on $ [0,1]$. Note that we will initialize $\partial_u r$ only later in \eqref{ni-init}. 

We first note that basic calculations yield
\begin{equation}
    |\partial_v\phi_{\hat\alpha}|^2 = \sum_{1\le j\le 2k+1}\hat\alpha_j^2 \left(\chi_j^2+\chi_j'^2\right)\nonumber
\end{equation}
and 
\begin{equation}
    \Im(\phi_{\hat\alpha}\overline{\partial_v\phi_{\hat\alpha}})=\sum_{1\le j\le 2k+1}\hat\alpha_j^2\chi_j^2.\nonumber
\end{equation} 
Therefore, 
\begin{equation}
    \int_0^{1}|\partial_v\phi_{\hat\alpha}|^2 \,dv\approx \int_{0}^{1}\Im(\phi_{\hat\alpha}\overline{\partial_v\phi_{\hat\alpha}})\,dv\approx 1\nonumber
\end{equation}
for any $\hat\alpha\in S^{2k}$.

\begin{lem}\label{lem:radius-boostrap}
There exists a constant $0<c\les 1$ such that if $0<\beta \le c$, then for any $\hat\alpha\in S^{2k}$, $r(\cdot;\beta \hat \alpha)$ satisfies \begin{align}
 r(v;\beta \hat \alpha)&\ge \tfrac 12 r_+\label{eq:r-lower-bound}\\
    \partial_v r(v;\beta \hat \alpha)&\ge 0\label{no-trapped-surfaces}
\end{align} 
for $v\in [0,1]$, where \begin{equation*}
    r_+\doteq \left(1+\sqrt{1-\mathfrak q^2}\right)M_f.
\end{equation*}
Furthermore,  
\begin{align}
\partial_v r (0;\beta \hat \alpha) >0\label{condition1-radius-intial}.
\end{align}
\end{lem}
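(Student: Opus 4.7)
The plan is to run a straightforward backwards bootstrap argument in the variable $v$, exploiting the sign structure of Raychaudhuri's equation \eqref{Ray-4}. The essential observation is that once we know $r \ge 0$, the equation gives $\partial_v^2 r \le 0$, and combined with the terminal data $\partial_v r(1;\alpha) = 0$ this immediately forces $\partial_v r \ge 0$ on $[0,1]$. The only thing to prevent is $r$ dropping by a factor of $2$ from $r_+$, and this is controlled by the size of $\beta = |\alpha|$.

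More concretely, I would set up the bootstrap set
\[
\mathcal V \doteq \bigl\{ v_* \in [0,1] : r(v;\beta\hat\alpha) \ge \tfrac{1}{2}r_+ \text{ for all } v \in [v_*,1] \bigr\},
\]
which is nonempty and relatively open near $v=1$ by continuity and the initialization $r(1;\alpha) = r_+$. I would then show $\mathcal V = [0,1]$ by improving the bootstrap assumption. Concretely, on any interval $[v_*,1] \subset \mathcal V$ the right-hand side of \eqref{Ray-4} is nonpositive, so integrating backwards from $v=1$ gives $\partial_v r \ge 0$ there, hence $r \le r_+$ also on $[v_*,1]$. Substituting this double bound $\tfrac 12 r_+ \le r \le r_+$ back into \eqref{Ray-4} and integrating twice yields
\[
0 \le r_+ - r(v;\beta\hat\alpha) = \int_v^1 \!\! \int_s^1 \beta^2 \, r(s';\beta\hat\alpha) |\partial_v \phi_{\hat\alpha}(s')|^2 \, ds' \, ds \le C \beta^2 r_+,
\]
where $C$ depends only on $k$ and the fixed profiles $\chi_j$, since $\|\partial_v \phi_{\hat\alpha}\|_{L^2([0,1])}^2 \approx 1$ uniformly in $\hat\alpha \in S^{2k}$. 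Choosing $c>0$ so that $Cc^2 \le 1/4$ and assuming $\beta \le c$ improves the bootstrap to $r \ge \tfrac{3}{4} r_+$ on $[v_*,1]$, so $\mathcal V$ is both open and closed in $[0,1]$, hence equals $[0,1]$. This establishes \eqref{eq:r-lower-bound} and \eqref{no-trapped-surfaces} simultaneously.

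For the strict inequality \eqref{condition1-radius-intial}, integrating $\partial_v^2 r$ from $0$ to $1$ gives
\[
\partial_v r(0;\beta\hat\alpha) = \beta^2 \int_0^1 r(s;\beta\hat\alpha) |\partial_v \phi_{\hat\alpha}(s)|^2 \, ds \ge \tfrac{1}{2}\beta^2 r_+ \int_0^1 |\partial_v \phi_{\hat\alpha}|^2 \, ds > 0,
\]
using the lower bound on $r$ just proved, the positivity $\beta > 0$, and the fact that the profiles $\chi_j$ are nontrivial so that the $L^2$ integral is strictly positive on $S^{2k}$. I expect no substantial obstacle here; the key point that makes the argument so clean is that the bound on $C$ is independent of $\mathfrak q$, $\mathfrak e$, $M_f$, so the smallness threshold $c$ can indeed be chosen with the claimed $\lesssim 1$ universality. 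The slightly delicate point is simply to note that the bootstrap closes thanks to the uniform (in $\hat\alpha$) bound $\int_0^1 |\partial_v\phi_{\hat\alpha}|^2 \, dv \approx 1$ already recorded just before the lemma.
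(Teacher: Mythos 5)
Your proof is correct and follows essentially the same route as the paper: a continuity/bootstrap argument in $v$ exploiting that Raychaudhuri's equation makes $\partial_v r$ monotone decreasing, so that the terminal data $\partial_v r(1)=0$, $r(1)=r_+$ force $\partial_v r\ge 0$ and $r\le r_+$ on the bootstrap interval, after which the drop $r_+-r\les\beta^2 r_+$ closes the bootstrap for $\beta$ small. The only (harmless) differences are that you bootstrap directly on $r\ge\tfrac12 r_+$ where the paper bootstraps on $r\ge 0$ and $\partial_v r\ge 0$, and that you derive the strict positivity of $\partial_v r(0)$ quantitatively rather than from the qualitative observation that $r$ is nonconstant.
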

\begin{proof}
This is a simple bootstrap argument in $v$. Assume that on $[v_0,1]\subset [0,1]$, we have 
\begin{align*}
    \inf_{[v_0,1]}r &\ge 0\\
    \inf_{[v_0,1]}\partial_v r&\ge 0.
\end{align*}
This is clear for $v_0$ close to $1$ by Cauchy stability. From Raychaudhuri's equation \eqref{Ray-4}, $r\ge 0$ implies $\partial_v r$ is monotone decreasing, hence is bounded above by $\partial_vr(v_0)$, which can be estimated by
\begin{equation}
    \partial_v r(v_0)=\int_{v_0}^{1}\beta^2 r |\partial_v\phi_{\hat\alpha}|^2\,dv \les \beta^2 r_+,\label{lambda-estimate}
\end{equation}
since $r\le r_+$ on $[v_0,1]$. It follows that 
\begin{equation}
    r(v_0)= r_+-\int_{v_0}^{1}\partial_v r\,dv \ge r_+-C\beta^2r_+
\end{equation}
for some $C\lesssim 1$. Choosing $\beta>0$ sufficiently small shows $ r(v_0) \ge \tfrac 12 r_+$ which improves the bootstrap assumptions and proves the desired estimate \eqref{eq:r-lower-bound}. Finally, note that \eqref{condition1-radius-intial} holds true as $\partial_vr $ is monotone decreasing and $r$ is not constant ($\beta > 0$ and the scalar field is not identically zero). 
\end{proof}

\begin{lem} \label{lem:charge-increasing}
  By potentially making the constant $c$ from \cref{lem:radius-boostrap} smaller, we have that for any $0<\beta \le  c$  and $\hat\alpha\in S^{2k}$, the following estimate holds
\begin{equation*}
    \frac{\partial}{\partial\beta} Q(1;\beta,\hat\alpha)>0.
\end{equation*}
\end{lem}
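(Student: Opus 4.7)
The plan is to directly differentiate the explicit integral expression for $Q(1;\beta,\hat\alpha)$ obtained from integrating Maxwell's equation \eqref{maxwell-4} with initial condition \eqref{eq:initialization-of-Q}, namely
\begin{equation*}
Q(1;\beta,\hat\alpha) = \mathfrak e\,\beta^2\int_0^{1} r(v;\beta,\hat\alpha)^2\,\Im(\phi_{\hat\alpha}\overline{\partial_v\phi_{\hat\alpha}})\,dv,
\end{equation*}
and show that the leading $\mathcal O(\beta)$ term is positive and dominates the $\mathcal O(\beta^3)$ correction coming from the $\beta$-dependence of $r$. Differentiating under the integral sign (the integrand is smooth in $\beta$ by standard ODE theory) gives
\begin{equation*}
\frac{\partial Q(1)}{\partial \beta}=2\mathfrak e\,\beta\!\int_0^1 r^2\,\Im(\phi_{\hat\alpha}\overline{\partial_v\phi_{\hat\alpha}})\,dv + 2\mathfrak e\,\beta^2\!\int_0^1 r\,\frac{\partial r}{\partial \beta}\,\Im(\phi_{\hat\alpha}\overline{\partial_v\phi_{\hat\alpha}})\,dv,
\end{equation*}
and I will call these $I_1$ and $I_2$ respectively.

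For $I_1$, by \cref{signs-issue} we may assume $\mathfrak e>0$ and $\mathfrak q>0$, so that $\Im(\phi_{\hat\alpha}\overline{\partial_v\phi_{\hat\alpha}})=\sum_j \hat\alpha_j^2\chi_j^2\ge 0$ pointwise. Since $|\hat\alpha|=1$, one has $\int_0^1\sum_j\hat\alpha_j^2\chi_j^2\,dv \geq \min_j\int_0^1\chi_j^2\,dv \approx 1$ uniformly in $\hat\alpha\in S^{2k}$, and combining with the pointwise bound $r\ge \tfrac12 r_+$ from \cref{lem:radius-boostrap} yields
\begin{equation*}
I_1 \;\gtrsim\; \mathfrak e\,\beta\, r_+^2.
\end{equation*}

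For $I_2$, the main task is to estimate $w\doteq \partial_\beta r$. Differentiating \eqref{Ray-4} in $\beta$ gives the linear equation
\begin{equation*}
\partial_v^2 w = -\beta^2|\partial_v\phi_{\hat\alpha}|^2\,w \;-\; 2\beta\, r\,|\partial_v\phi_{\hat\alpha}|^2,
\end{equation*}
with final data $w(1)=0$ and $\partial_v w(1)=0$ since $r(1)=r_+$ and $\partial_v r(1)=0$ are independent of $\beta$. Integrating backward from $v=1$ and using $r\le r_+$ (by monotonicity \eqref{no-trapped-surfaces}) together with the uniform bound $\int_0^1|\partial_v\phi_{\hat\alpha}|^2\,dv \lesssim 1$ yields $\sup_{[0,1]}|\partial_v w|\lesssim \beta^2\sup|w| + \beta\, r_+$, hence $\sup|w|\le \sup|\partial_v w|$, and absorbing the $\beta^2\sup|w|$ term for $\beta$ small gives
\begin{equation*}
\sup_{v\in[0,1]}\left|\frac{\partial r}{\partial\beta}(v;\beta,\hat\alpha)\right| \lesssim \beta\, r_+.
\end{equation*}
Using this, together with $r\le r_+$ and $\int_0^1\Im(\phi_{\hat\alpha}\overline{\partial_v\phi_{\hat\alpha}})\,dv\lesssim 1$, one obtains $|I_2|\lesssim \mathfrak e\,\beta^3\,r_+^2$.

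Combining the two bounds gives $\partial_\beta Q(1;\beta,\hat\alpha) \gtrsim \mathfrak e\,\beta\,r_+^2 - C\mathfrak e\,\beta^3\,r_+^2$, which is strictly positive once $\beta\le c'$ for some $c'\lesssim 1$ depending only on $k$ and the baseline profiles; replacing $c$ in \cref{lem:radius-boostrap} by $\min(c,c')$ concludes the proof. The main obstacle is the error term $I_2$: naively one might fear that the $\beta$-dependence of $r$ could cancel the main term $I_1$, but the zero final data for $w$ and the $-r|\partial_v\phi_{\hat\alpha}|^2$ source structure force $w$ to be of order $\beta r_+$ rather than $r_+$, which is precisely the extra power of $\beta$ needed to make $I_2$ subleading.
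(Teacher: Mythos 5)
Your proposal is correct and follows essentially the same route as the paper: both differentiate the explicit integral formula for $Q(1;\beta,\hat\alpha)$ under the integral sign and show that the correction term involving $\partial_\beta r$ is subleading for small $\beta$. The only difference is that you prove the sharper bound $|\partial_\beta r|\lesssim \beta r_+$ via a backward integration of the linearized Raychaudhuri equation, whereas the paper gets by with the cruder $|\partial_\beta r|\lesssim r_+$ and a pointwise lower bound on the integrand $r^2+\beta r\partial_\beta r$; both suffice.
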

\begin{proof}
Integrating Maxwell's equation \eqref{maxwell-4} and using \eqref{eq:initialization-of-Q}, we find 
\begin{equation*}
    Q(1;\beta,\hat\alpha)=\int_0^{1}\mathfrak e\beta^2 r^2 \Im(\phi_{\hat\alpha}\overline{\partial_v\phi_{\hat\alpha}})\,dv.
\end{equation*}
A direct computation yields 
\begin{equation*}
    \partial_\beta Q(1;\beta,\hat\alpha)=2\mathfrak e\beta\int_0^{1} (r^2+\beta r\partial_\beta r) \Im(\phi_{\hat\alpha}\overline{\partial_v\phi_{\hat\alpha}})\,dv. 
\end{equation*}
Note that $\Im(\phi_{\hat\alpha}\overline{\partial_v\phi_{\hat\alpha}})\ge 0$ pointwise and is not identically zero. Since $0<\beta\leq c$, we use \cref{lem:radius-boostrap} to estimate
\begin{equation*}
    r^2+\beta r\partial_\beta r \ge \tfrac 14 r_+^2 -C \beta r_+^2=r_+^2(\tfrac 14 - C\beta),
\end{equation*}
 where we also used $|\partial_\beta r|\les r_+$ which follows directly from differentiating \eqref{Ray-4} with respect to $\beta=|\alpha|$. Therefore, by choosing $c$ even smaller, we obtain $\partial_\beta Q(1;\beta,\hat\alpha)>0$. 
\end{proof}

\begin{lem}\label{betaQ}
If $\mathfrak eM_f/\mathfrak q$ is sufficiently large depending only on $k$ and the choice of profiles, then there is a smooth function $ \beta_Q:S^{2k}\to (0,\infty)$ so that $Q(1;\beta_Q(\hat \alpha),\hat\alpha)=\mathfrak qM_f$ for every $\hat\alpha\in S^{2k}$, which also satisfies  
\begin{align}
    \beta_Q(\hat \alpha) &\approx \frac{\sqrt{\mathfrak qM_f}}{\sqrt{\mathfrak e}r_+}\label{beta-estimate}\\
    \beta_Q(-\hat \alpha ) &= \beta_Q(\hat \alpha) \label{eq:beta-is-odd}
\end{align}
for every $\hat \alpha \in S^{2k}$.
\end{lem}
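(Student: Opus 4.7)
The plan is to produce $\beta_Q(\hat\alpha)$ via the intermediate value theorem combined with the monotonicity of \cref{lem:charge-increasing}, obtain smoothness from the implicit function theorem, and derive the oddness property \eqref{eq:beta-is-odd} from a discrete symmetry of the equations.

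First, I would Taylor expand $Q(1;\beta,\hat\alpha)$ in $\beta$ about $\beta=0$. At $\beta=0$ the scalar field vanishes identically, so Raychaudhuri \eqref{Ray-4} with the final conditions $r(1)=r_+,\ \partial_v r(1)=0$ forces $r\equiv r_+$ on $[0,1]$. Smooth dependence of ODE solutions on parameters together with the a priori bound $r\le r_+$ from \cref{lem:radius-boostrap} gives $r(v;\beta\hat\alpha)=r_+ + O(\beta^2 r_+)$ uniformly in $\hat\alpha\in S^{2k}$ for $\beta\le c$. Integrating Maxwell's equation \eqref{maxwell-4} and using \eqref{eq:initialization-of-Q} then yields
\begin{equation*}
  Q(1;\beta,\hat\alpha) \;=\; \mathfrak{e}\beta^2 r_+^2\, I(\hat\alpha)\bigl(1+O(\beta^2)\bigr),\qquad I(\hat\alpha)\doteq\int_0^1 \sum_{j}\hat\alpha_j^2 \chi_j^2\,dv,
\end{equation*}
and since the $\chi_j$ are fixed nontrivial bumps with disjoint supports one has $I(\hat\alpha)\approx 1$ uniformly in $\hat\alpha\in S^{2k}$.

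Second, solving $Q(1;\beta,\hat\alpha)=\mathfrak{q}M_f$ to leading order suggests the candidate $\beta^2\approx \mathfrak{q}M_f/(\mathfrak{e}r_+^2)$, i.e.\ $\beta\approx \sqrt{\mathfrak{q}M_f}/(\sqrt{\mathfrak{e}}r_+)$. Using $M_f\le r_+\le 2M_f$, this candidate lies below the threshold $c$ of \cref{lem:radius-boostrap} precisely when $\mathfrak{e}M_f/\mathfrak{q}$ is sufficiently large (depending only on $c$, hence only on $k$ and the profiles). Under this assumption, $Q(1;\cdot,\hat\alpha)\colon[0,c]\to\R_{\ge 0}$ is continuous with $Q(1;0,\hat\alpha)=0$ and $Q(1;c,\hat\alpha)\ge \mathfrak{q}M_f$, so by the intermediate value theorem there exists $\beta_Q(\hat\alpha)\in(0,c]$ with $Q(1;\beta_Q(\hat\alpha),\hat\alpha)=\mathfrak{q}M_f$. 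Strict positivity of $\partial_\beta Q(1;\cdot,\hat\alpha)$ on $(0,c]$ from \cref{lem:charge-increasing} gives uniqueness, and the leading-order computation above yields the bound \eqref{beta-estimate}. Smoothness of $\hat\alpha\mapsto \beta_Q(\hat\alpha)$ follows from the implicit function theorem, since $Q(1;\cdot,\cdot)$ depends smoothly on $(\beta,\hat\alpha)$ by smooth parameter dependence of ODEs, and the $\beta$-derivative does not vanish.

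Finally, for the oddness \eqref{eq:beta-is-odd}, observe that $\phi_{-\alpha}=-\phi_\alpha$, so both $|\partial_v\phi_{\hat\alpha}|^2$ and $\Im(\phi_{\hat\alpha}\overline{\partial_v\phi_{\hat\alpha}})$ are invariant under $\hat\alpha\mapsto -\hat\alpha$. Hence $r(v;\beta,-\hat\alpha)=r(v;\beta,\hat\alpha)$ by \eqref{Ray-4}, and then $\partial_v Q(v;\beta,-\hat\alpha)=\partial_v Q(v;\beta,\hat\alpha)$ by \eqref{maxwell-4}; integrating from $v=0$ using \eqref{eq:initialization-of-Q} gives $Q(1;\beta,-\hat\alpha)=Q(1;\beta,\hat\alpha)$. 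Uniqueness of the zero of $\beta\mapsto Q(1;\beta,\hat\alpha)-\mathfrak{q}M_f$ on $(0,c]$ then forces $\beta_Q(-\hat\alpha)=\beta_Q(\hat\alpha)$. The main delicate step is keeping the error term $O(\beta^2)$ in the expansion of $Q(1;\beta,\hat\alpha)$ genuinely small on the relevant interval, which is precisely what the hypothesis $\mathfrak{e}M_f/\mathfrak{q}\gg 1$ buys and why such a largeness condition is needed.
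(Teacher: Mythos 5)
Your proposal is correct and follows essentially the same route as the paper: integrate Maxwell's equation to get $Q(1;\beta,\hat\alpha)\approx\mathfrak e\beta^2 r_+^2$ on $(0,c]$, apply the intermediate value theorem for existence, the strict monotonicity of \cref{lem:charge-increasing} for uniqueness, the implicit function theorem for smoothness, and the evenness of $Q$ in $\hat\alpha$ for \eqref{eq:beta-is-odd}. The only cosmetic difference is that you phrase the key estimate as a Taylor expansion with an $O(\beta^2)$ error rather than as the two-sided bound coming directly from $\tfrac 12 r_+\le r\le r_+$.
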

\begin{proof}
As in the proof of \cref{lem:charge-increasing} we have
\begin{equation*}
    Q(1;\beta,\hat\alpha)=\mathfrak e \beta^2\int_0^{1}r^2 \Im(\phi_{\hat\alpha}\overline{\partial_v\phi_{\hat\alpha}}).
\end{equation*}
If $\beta$ is sufficiently small so that \cref{lem:radius-boostrap} and \cref{lem:charge-increasing} apply, we estimate 
\begin{equation*}\label{eq:estimate-on-Q}
 Q(1;\beta,\hat\alpha)\approx \mathfrak e \beta^2 r_+^2.
\end{equation*}
For $\mathfrak e M_f/\mathfrak q $ sufficiently large as in the assumption, we apply now the intermediate value theorem,  to obtain a $\beta_Q(\hat\alpha)$ satisfying  $0< \beta_Q(\hat\alpha) \leq c$ such that
\begin{equation}\label{eq:defin-of-Q}
Q(1;\beta_Q,\hat\alpha)=\mathfrak qM_f.
\end{equation} 
Note that $\beta_Q(\hat\alpha)$ is unique since $Q(1;\cdot,\hat\alpha)$ is strictly increasing as shown in \cref{lem:charge-increasing}. Moreover, since $Q(1; \cdot,\cdot )$ is smooth (note that $\hat \alpha\in S^{2k}$ and $\beta>0$ enter as smooth parameters in \eqref{maxwell-4} which defines $Q$), a direct application of the implicit function theorem using that $\partial_\beta Q(1;\cdot,\hat\alpha)\ne 0$  shows that $\beta_Q:S^{2k}\to (0,\infty)$ is smooth. 

Moreover,  by \eqref{eq:estimate-on-Q} and \eqref{eq:defin-of-Q}, $\beta_Q$ satisfies 
\begin{equation*}
    \mathfrak e \beta^2_Q r_+^2 \approx \mathfrak q M_f
\end{equation*}
which shows \eqref{beta-estimate}.
Finally, note that  $Q(1;\beta,-\hat\alpha)=Q(1;\beta,\hat\alpha)$, from which \eqref{eq:beta-is-odd} follows.
\end{proof}

\begin{lem}\label{lem:defn-of-mathfrak-Q}
 Let $\mathfrak e M_f/\mathfrak q$ be sufficiently large (depending only on $k$  and the choice of profiles) so that 
 \cref{betaQ} applies. Then 
 \begin{align*}
    p_Q: S^{2k} & \to \mathfrak  Q^{2k}\\
    \hat \alpha &\mapsto \beta_Q(\hat \alpha) \hat \alpha
\end{align*}
is a diffeomorphism, where 
 \begin{equation*}
    \mathfrak Q^{2k}\doteq \{\beta_Q(\hat \alpha) \hat \alpha \colon \hat \alpha \in S^{2k} \} \subset \R^{2k+1}
\end{equation*} is the radial graph of $\beta_Q$. 
 Moreover,  $  \mathfrak Q^{2k}$ is invariant under the antipodal map $A(\alpha)=-\alpha$ and $p_Q$ commutes with the antipodal map.
\end{lem}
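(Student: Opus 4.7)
The plan is to deduce all three assertions directly from the smoothness, positivity, and evenness of $\beta_Q$ established in \cref{betaQ}; the substantive content in the construction of $\mathfrak Q^{2k}$ has already been done there, and what remains is essentially bookkeeping.

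First I would verify that $p_Q$ is a smooth embedding. Smoothness is immediate from the smoothness of $\beta_Q$, since $p_Q(\hat\alpha)=\beta_Q(\hat\alpha)\hat\alpha$ is just a scalar multiplication. For injectivity, if $\beta_Q(\hat\alpha_1)\hat\alpha_1=\beta_Q(\hat\alpha_2)\hat\alpha_2$ with both $\beta_Q$-values strictly positive, taking Euclidean norms forces $\beta_Q(\hat\alpha_1)=\beta_Q(\hat\alpha_2)$, and hence $\hat\alpha_1=\hat\alpha_2$. For the immersion property, at $\hat\alpha\in S^{2k}$ and $X\in T_{\hat\alpha}S^{2k}\subset\R^{2k+1}$ the derivative is
\begin{equation*}
dp_Q|_{\hat\alpha}(X) \;=\; d\beta_Q|_{\hat\alpha}(X)\,\hat\alpha + \beta_Q(\hat\alpha)\,X.
\end{equation*}
Since $X\perp\hat\alpha$ in $\R^{2k+1}$ and $\beta_Q(\hat\alpha)>0$, the two summands are orthogonal, so $dp_Q|_{\hat\alpha}(X)=0$ forces $X=0$. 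An injective smooth immersion from the compact manifold $S^{2k}$ is automatically an embedding, so $\mathfrak Q^{2k}=p_Q(S^{2k})$ is a smoothly embedded $2k$-dimensional submanifold of $\R^{2k+1}$, and $p_Q$ is a diffeomorphism from $S^{2k}$ onto it.

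For the antipodal properties, one reads off from \eqref{eq:beta-is-odd} that
\begin{equation*}
p_Q(-\hat\alpha) \;=\; \beta_Q(-\hat\alpha)(-\hat\alpha) \;=\; -\beta_Q(\hat\alpha)\hat\alpha \;=\; -p_Q(\hat\alpha),
\end{equation*}
which is precisely the statement that $p_Q$ commutes with the antipodal map $A$, and which implies immediately that $\mathfrak Q^{2k}$ is invariant under $A$. There is no genuine obstacle here: the main work (smoothness of $\beta_Q$, strict positivity via the monotonicity of $\beta\mapsto Q(1;\beta,\hat\alpha)$ from \cref{lem:charge-increasing}, and the antipodal evenness inherited from the fact that $|\partial_v\phi_{\hat\alpha}|^2$ and $\Im(\phi_{\hat\alpha}\overline{\partial_v\phi_{\hat\alpha}})$ are both even in $\hat\alpha$) has been carried out in \cref{betaQ}.
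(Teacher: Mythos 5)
Your proof is correct and follows the same route as the paper, which simply observes that all the stated properties follow readily from the smoothness, positivity, and evenness of $\beta_Q$ established in \cref{betaQ}. You have merely spelled out the routine verifications (injectivity via norms, the immersion property via the orthogonal decomposition of $dp_Q$, and the standard fact that an injective immersion of a compact manifold is an embedding) that the paper leaves implicit.
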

\begin{proof} By definition of $ \mathfrak Q^{2k}$ and the facts that $\beta_Q$ is smooth, positive, and invariant under the antipodal map as proved in \cref{betaQ}, the stated properties of  $ \mathfrak Q^{2k}$ and $p_Q$ follow readily.
\end{proof}

Having identified the set $\mathfrak Q^{2k}$ which guarantees gluing of the charge $Q$, for the rest of the section we will always take $\alpha\in \mathfrak Q^{2k}$. Recall from \eqref{beta-estimate} that for every $\alpha \in\mathfrak Q^{2k} $: \begin{equation}
    \label{eq:norm-of-a}
|\alpha|\approx\frac{\sqrt{\mathfrak qM_f}}{\sqrt{\mathfrak e}r_+} .\end{equation} 

Before proceeding to choose sphere data, we will need to examine the equation for $\partial_u r$ because this will place a further restriction on $\alpha$ which must be taken into account before setting up the topological argument. We continue by using the definition of the Hawking mass $m$ in  \eqref{eq:Hawking-mass}, to impose the condition 
\begin{align}
    m(0;\alpha)&=M_i\label{condition-mass-initial}\nonumber 
\end{align}
 by initializing 
\begin{equation}
    \partial_ur(0;\alpha)= -\left(1-\frac{2M_i}{r(0;\alpha)}\right)\frac{1}{4\partial_v r(0;\alpha)}.\label{ni-init}
\end{equation}
The transverse derivative $\partial_u r(v;\alpha)$ is now determined by solving \eqref{eq:CSF-equation-for-r},
\begin{equation}
\label{r-wave-4}     \partial_v \partial_u r(v;\alpha)=- \frac{1}{4r(v;\alpha)^2}-\frac{\partial_ur(v;\alpha)\partial_vr(v;\alpha)}{r(v;\alpha)^2}+\frac{Q(v;\alpha)^2}{4r(v;\alpha)^3},
\end{equation}
with initialization \eqref{ni-init}.

Note that \eqref{ni-init} is  well-defined by \eqref{condition1-radius-intial} and \eqref{eq:r-lower-bound} from \cref{lem:radius-boostrap}. Furthermore,
\[1-\frac{2M_i}{r(0;\alpha)}\ge 1-\frac{4M_i}{M_f}>0,\]
so
\begin{equation}
\partial_ur(0;\alpha)<0.\label{ni-negative}
\end{equation}

Having initialized $\partial_u r $ at $v=0$, we determine $\partial_u r(v;\alpha)$ using \eqref{r-wave-4}, and we will now show that for  $\mathfrak eM_f/\mathfrak q$ sufficiently large, $\partial_u r(v;\alpha) <0$ for all $v\in [0,1]$.

\begin{lem}\label{lem:no-antitrapped}
If $\mathfrak eM_f/\mathfrak q$ is sufficiently large depending only on $k$ and the choice of profiles and if $0\le M_i\le \tfrac 18 M_f$, then
\begin{equation}\label{eq:lemma-no-antitrapped}
    \sup_{v\in [0,1]}\partial_u r(v;\alpha) <0
\end{equation}
for every $\alpha\in \mathfrak Q^{2k}$.
\end{lem}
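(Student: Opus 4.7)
The plan is to exploit the conservation-law form of the radius wave equation, namely equation \eqref{rdur}, rather than \eqref{r-wave-4} directly. In the gauge $\Omega^2 \equiv 1$, equation \eqref{rdur} becomes
\begin{equation*}
\partial_v(r\partial_u r) \;=\; -\tfrac{1}{4}\left(1-\tfrac{Q^2}{r^2}\right),
\end{equation*}
so integrating from $0$ to $v$ yields
\begin{equation*}
r(v;\alpha)\partial_u r(v;\alpha) \;=\; r(0;\alpha)\partial_u r(0;\alpha) \;-\; \tfrac{v}{4} \;+\; \tfrac{1}{4}\int_0^v \tfrac{Q^2(v';\alpha)}{r^2(v';\alpha)}\,dv'.
\end{equation*}
Since $r > 0$ throughout by \cref{lem:radius-boostrap}, it suffices to show the right-hand side is strictly negative for $v\in[0,1]$.

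The first step is to bound the "bad" third term from above. Since $Q(0;\alpha)=0$ and $Q(1;\alpha)=\mathfrak q M_f$ with $\partial_v Q\ge 0$ (as $\Im(\phi_{\hat\alpha}\overline{\partial_v\phi_{\hat\alpha}})\ge 0$), we have $|Q(v;\alpha)|\le \mathfrak q M_f$ on $[0,1]$. Combined with $r\ge \tfrac12 r_+$ from \cref{lem:radius-boostrap} and $r_+\ge M_f$, this yields
\begin{equation*}
\tfrac{1}{4}\int_0^v \tfrac{Q^2}{r^2}\,dv' \;\le\; \tfrac{(\mathfrak q M_f)^2}{r_+^2} \;\le\; 1.
\end{equation*}

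The second step is to show the "good" initial term $r(0;\alpha)\partial_u r(0;\alpha)$ is large and negative, overwhelming the bound above when $\mathfrak e M_f/\mathfrak q$ is large. From the initialization \eqref{ni-init} and $0\le M_i\le \tfrac18 M_f$ (which gives $1-2M_i/r(0;\alpha)\ge \tfrac12$ via $r(0;\alpha)\ge \tfrac12 r_+\ge \tfrac12 M_f$), together with the estimate $\partial_v r(0;\alpha)\lesssim |\alpha|^2 r_+$ obtained as in \eqref{lambda-estimate} by integrating \eqref{Ray-4} from $v$ to $1$, I get
\begin{equation*}
r(0;\alpha)\partial_u r(0;\alpha) \;=\; -\tfrac{r(0;\alpha)}{4\partial_v r(0;\alpha)}\!\left(1-\tfrac{2M_i}{r(0;\alpha)}\right) \;\lesssim\; -\tfrac{1}{|\alpha|^2}.
\end{equation*}
By \eqref{eq:norm-of-a}, $|\alpha|^{-2}\approx \mathfrak e r_+^2/(\mathfrak q M_f)\gtrsim \mathfrak e M_f/\mathfrak q$, so for $\mathfrak eM_f/\mathfrak q$ sufficiently large we obtain $r(0;\alpha)\partial_u r(0;\alpha)\le -2$, say. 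Combining with the bound above yields $r(v;\alpha)\partial_u r(v;\alpha) \le -2 - v/4 + 1 < 0$ uniformly in $v\in [0,1]$ and $\alpha\in\mathfrak Q^{2k}$, which proves \eqref{eq:lemma-no-antitrapped} since $r>0$.

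The main subtlety is that the naive estimate $Q^2/r^2$ can exceed $1$ in the extremal regime $|\mathfrak q|=1$, so one cannot prove monotonicity of $r\partial_u r$ pointwise; the argument necessarily compares the bounded accumulated error $\int_0^v Q^2/r^2\,dv'\le 1$ against a large negative initial value, which is the reason the hypothesis $\mathfrak eM_f/\mathfrak q\gg 1$ (rather than just $M_f\gg 1$) enters the statement of \cref{main-thm-RN}.
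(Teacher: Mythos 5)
Your proposal is correct and follows essentially the same route as the paper: both integrate the identity \eqref{rdur} for $r\partial_u r$, bound the accumulated contribution of $\tfrac14(1-Q^2/r^2)$ over $[0,1]$ by an absolute constant using $Q\le \mathfrak q M_f\lesssim r$, and then show via the initialization \eqref{ni-init} and the smallness $\partial_v r(0;\alpha)\lesssim|\alpha|^2 r_+$ that $r(0;\alpha)\partial_u r(0;\alpha)\lesssim -\mathfrak e M_f/\mathfrak q$, which dominates when $\mathfrak e M_f/\mathfrak q$ is large. The only (immaterial) difference is that you keep the favorable $-v/4$ term separate instead of bounding $|\partial_v(r\partial_u r)|\lesssim 1$ wholesale.
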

\begin{proof}
Since $r>0$ on $[0,1]$, it suffices to show that 
\begin{equation}
    \sup_{[0,1]}r\partial_u r<0.\nonumber
\end{equation}
First, by  \eqref{rdur},
\begin{equation}
    |\partial_v(r\partial_u r)|= \left|\frac 14 \left(1-\frac{Q^2}{r^2}\right)\right|\les 1, \label{eq:partialv-r-partialur}
\end{equation}
as 
\begin{equation*}
    Q(v;\alpha) \leq  Q(1;\alpha)  = \mathfrak q M_f  \lesssim r(v;\alpha),
   \end{equation*}
   where we used \eqref{eq:r-lower-bound}.
Integrating \eqref{eq:partialv-r-partialur}, we have 
\begin{equation}
  \sup_{v\in[0,1]}  r(v)\partial_ur(v)\le r(0)\partial_ur(0)+C_1,\label{antitrapped-eq-1}
\end{equation}
where $C_1\les 1$ is a constant.
Analogously to \eqref{lambda-estimate}, we estimate 
\begin{equation*}\partial_v r(0;\alpha)\lesssim |\alpha|^2 r_+ \les \frac{\mathfrak q}{\mathfrak e},\end{equation*}
where we used \eqref{eq:norm-of-a}. 
Now, using \eqref{ni-init},
\begin{align*}
    -r(0)\partial_u r(0) &=\frac{r(0)-2M_i}{4\partial_v r(0)}\\
    &\gtrsim \frac{\mathfrak e}{\mathfrak q} (\tfrac 12 M_f-2M_i)\\
    &\gtrsim \frac{\mathfrak e}{\mathfrak q}M_f.
\end{align*}
Therefore, we improve \eqref{antitrapped-eq-1} to 
\begin{equation*}
    \sup_{v\in[0,1]}   r(v)\partial_ur(v) \le -C_2\frac{\mathfrak e}{\mathfrak q}M_f+C_1
\end{equation*} 
for some $C_2\lesssim 1$. Thus,  if $\mathfrak eM_f/\mathfrak q$ is sufficiently large we obtain \eqref{eq:lemma-no-antitrapped}. 
\end{proof}

To continue the proof of \cref{main-thm-RN}, we  now put our construction into the framework of the sphere data in  \cref{subsec:sphere-data-mink-schwarz}.
For each $\alpha\in\mathfrak Q^{2k}$, define $D_\alpha(0)\in\mathcal D_k$ by setting 
\begin{itemize}
    \item $\varrho=r(0;\alpha)\ge \tfrac 12 r_+$ (see \eqref{eq:r-lower-bound}),
    \item $\varrho_v^1= \partial_vr(0;\alpha)>0$ (see \eqref{condition1-radius-intial}),
    \item $\varrho_u^1=\partial_u r(0;\alpha)<0$ (see \eqref{ni-init} and \eqref{ni-negative}),
    \item $\omega=1$, and 
    \item all other components to zero.
\end{itemize}
By \cref{lem:gauge-Mink}, $D_\alpha(0)$ is equivalent to $D^\mathrm{S}_{M_i,r(0;\alpha),k}$ up to a gauge transformation. 

For each $\alpha\in \mathfrak Q^{2k}$, we now apply \cref{prop:null-data-existence} and \cref{prop:seed-data} to uniquely determine cone data \[D_\alpha:[0,1]\to \mathcal D_k,\]
 with initialization $D_\alpha(0)$ above and seed data $\phi_\alpha$ given by \eqref{eq:defn-phi-RN}. By standard ODE theory, $D_\alpha(v)$ is jointly continuous in $v$ and $\alpha$. Note that $\varrho(D_\alpha(v))=r(v;\alpha)$, $\varphi(D_\alpha(v))=\phi(v;\alpha)$, and $q(D_\alpha(v))=Q(v;\alpha)$ by definition. As in the proof of \cref{main-thm-Schw}, we use the notation 
 \[\partial_u^i \phi(v;\alpha)\doteq \varphi_u^i(D_\alpha(v))\]
 for $i=1,\dotsc,k$ to denote the transverse derivatives of the scalar field obtained by \cref{prop:null-data-existence}. Note also that 
 \[\partial_u r(v;\alpha)=\varrho_u^1(D_\alpha(v)),\]
 where $\partial_u r(v;\alpha)$ is as in \eqref{eq:CSF-equation-for-r} above. 

By construction, the data set $D_\alpha(1)$ satisfies
\begin{itemize}
\item $\varrho =2M_f,$
\item $\varrho_u^1 <0$ (see \cref{lem:no-antitrapped}),
\item $\varrho_v^1=0$, 
\item $\omega=1$,
\item $q=\mathfrak qM_f$ (definition of $\mathfrak Q^{2k}$), and
\item $\varphi_v^i=0$ for $0\le i\le k$.
\end{itemize}

In order to glue to the appropriate Reissner--Nordstr\"om event horizon sphere, by \cref{lem:gauge-Schw}, it suffices to find an $\alpha_*\in \mathfrak Q^{2k}$ for which additionally 
\[\partial_u\phi(1;\alpha_*)=\cdots=\partial_u^k\phi(1;\alpha_*)=0.\]
Analogously to \cref{lem:odd} we first establish
\begin{lem}\label{CSF-odd}
The metric coefficients $r(v;\alpha)$, $\Omega^2(v;\alpha)$, the electromagnetic quantities $Q(v;\alpha)$, $A_u(v;\alpha)$, and all their ingoing and outgoing derivatives are even functions of $\alpha$. The scalar field $\phi(v;\alpha)$ and all its ingoing and outgoing derivatives are odd functions of $\alpha$.
\end{lem}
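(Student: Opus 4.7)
The plan is to mimic the proof of \cref{lem:odd}, performing an induction on the order of transverse derivatives in the spirit of \cref{prop:determining-everything}, while tracking how the charge $Q$ and gauge potential $A_u$ fit into the parity scheme. For the base case, $\phi_\alpha$ is odd in $\alpha$ by definition \eqref{eq:defn-phi-RN} and $\Omega^2 \equiv 1$ is trivially even. Raychaudhuri's equation \eqref{Ray-4} has even right-hand side $|\alpha|^2 r |\partial_v \phi_{\hat\alpha}|^2$ and constant (hence even) final data at $v=1$, so $r$ and $\partial_v r$ are even. Maxwell's equation \eqref{maxwell-4} has right-hand side $\mathfrak{e}|\alpha|^2 r^2 \Im(\phi_{\hat\alpha} \overline{\partial_v \phi_{\hat\alpha}})$, which is even (even times odd times odd), and $Q(0;\alpha) = 0$ is even, so $Q$ is even. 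The transport equation \eqref{eq:Au-transport} then has even right-hand side and an even initial value, so $A_u$ is even. Finally, the initialization \eqref{ni-init} for $\partial_u r(0;\alpha)$ is a rational expression in even quantities, hence even, and the transport equation \eqref{r-wave-4} for $\partial_u r$ has even right-hand side, so $\partial_u r$ is even on $[0,1]$.

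Next I would verify that $\partial_u \phi$ is odd by inspecting \eqref{rphi2}:
\[
\partial_v \partial_u(r\phi) = -\frac{\Omega^2 m}{2r^2}\phi + i\frac{\mathfrak{e}\Omega^2 Q}{4r}\phi + \frac{\Omega^2 Q^2}{4r^3}\phi - i\mathfrak{e} A_u\, \partial_v(r\phi).
\]
Each coefficient of $\phi$ or of $\partial_v(r\phi)$ is a polynomial in the already-established even quantities $r, \Omega^2, Q, m, A_u$, hence even, while $\phi$ and $\partial_v(r\phi)$ are odd; the right-hand side is therefore odd. Combined with the initialization $\partial_u(r\phi)(0;\alpha) = 0$ inherited from $D_\alpha(0)$ (in which $\varphi$ and $\varphi_u^1$ both vanish), this forces $\partial_u(r\phi)$ to be odd, and dividing by the even quantity $r$ shows that $\partial_u\phi$ itself is odd.

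The inductive step proceeds along the lines of \cref{prop:determining-everything}: assuming the claimed parities hold for all $u$-derivatives of order $\le i$, the parity of the $(i{+}1)$-st $u$-derivative can be read off from the appropriate differentiated equation. Commuting $\partial_u^i$ through \eqref{eq:CSF-equation-for-r}, \eqref{eq:CSF-equation-for-Omega}, \eqref{eq:CSF-Raychaudhuri-u}, \eqref{eq:Maxwell-u}, and \eqref{eq:Au-transport} yields right-hand sides that are polynomial in the even quantities and quadratic (hence even) in the odd scalar-field quantities, propagating evenness for $r, \Omega^2, Q, A_u$; commuting $\partial_u^i$ through \eqref{eq:CSF-wave-equation} (or equivalently \eqref{rphi2}) produces a linear equation for $\partial_u^{i+1}\phi$ with even coefficients and odd forcing, propagating oddness. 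The initial values at $v = 0$ for these transverse derivatives are determined from $D_\alpha(0)$ (in which the transverse $\phi$-components all vanish and so are trivially odd, and the transverse $r, \Omega^2, Q, A_u$-components are determined by polynomial combinations of even quantities, so are even).

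The one subtlety, compared to \cref{lem:odd}, is verifying that the gauge-covariant derivative $D_u = \partial_u + i\mathfrak{e}A_u$ preserves parity. This holds precisely because $A_u$ is even, and it ensures that $|D_u\phi|^2$ is even and that $\Im(\phi\,\overline{D_u\phi})$ is even wherever they appear (notably in \eqref{eq:CSF-Raychaudhuri-u} and \eqref{eq:Maxwell-u}). Once this is in place, no further obstacle arises beyond consistent parity bookkeeping through the charge/gauge sector, and continuity in $\alpha$ follows from standard ODE theory as in \cref{lem:odd}.
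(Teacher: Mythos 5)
Your proposal is correct and follows essentially the same route as the paper, which simply remarks that the proof is the same as that of \cref{lem:odd} once one notes that Maxwell's equations and the $A_u$ transport equation are even in $\phi$; your more detailed bookkeeping of the charge/gauge sector, including the observation that $D_u=\partial_u+i\mathfrak e A_u$ preserves parity because $A_u$ is even, is exactly the content the paper's one-line proof is appealing to.
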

\begin{proof}
The proof is essentially the same as \cref{lem:odd}, noting that equations \eqref{eq:Maxwell-u}, \eqref{eq:Maxwell-v}, and \eqref{eq:Au-transport} are also even in $\phi$. 
\end{proof}

We now complete the proof of \cref{main-thm-RN}. Recall from  \cref{lem:defn-of-mathfrak-Q} that $  p_Q: S^{2k}  \to \mathfrak  Q^{2k}$ is a diffeomorphism which commutes with the antipodal map. 
We now argue similarly to \cref{subsec:Schwarzschild-gluing-proof}. By \cref{CSF-odd}, the function 
\begin{align*}
    F:\mathfrak Q^{2k}&\to \Bbb C^k\\
   \alpha &\mapsto \left(\partial_u\phi(1;\alpha),\dotsc ,\partial_u^k\phi(1;\alpha)\right)
\end{align*}
is continuous and odd. Therefore, the Borsuk--Ulam theorem, stated as  \cref{BorsukUlam}, applied to 
\[(\Re F^1,\Im F^1,\dotsc, \Re F^k,\Im F^k)\circ p_Q :S^{2k}\to \Bbb R^{2k},\]
where $F^i$ is the $i$th component of $F$, shows that there is an $\alpha_*\in\mathfrak Q^{2k}$ such that $F(\alpha_*)=0$. By \cref{lem:gauge-Schw}, $D_{\alpha_*}(1)$ is gauge equivalent to $D^\mathrm{RN\mathcal H}_{M_f, \mathfrak q M_f,k}$ which concludes the gluing construction. Since we have already established that $\partial_u r <0$ for all $v\in [0,1]$ in \cref{lem:no-antitrapped}, this concludes the proof of \cref{main-thm-RN}. \qed

\subsection{Proof of \texorpdfstring{\cref{thm:interior-gluing}}{Theorem 2C}} \label{sec:non-horizon-gluing}

In this section we extend our characteristic gluing result \cref{main-thm-RN}  to allow for sphere data at the final sphere which is not necessarily located on a horizon. Recall \cref{defn:non-horizon-data} for the definition of general Reissner--Nordström sphere data.  As the steps in the proof below are direct generalizations of the proof of \cref{main-thm-RN}, our presentation here will have fewer details.

\begin{proof}[Proof of \cref{thm:interior-gluing}]
We only consider the case $\mathfrak q \neq 0$, the case $\mathfrak q=0$ being strictly easier and requiring only ``gluing 3'' below. Without loss of generality, we may also assume $R_f \leq3  M_f$  as for $ r\geq  3 M_f$ we can extend trivially with Reissner--Nordström data satisfying $\partial_vr >0$ and $\partial_u r <0$. In the following proof, we use the convention that all constants appearing in $\lesssim, \gtrsim$ and $\approx$ to also depend on $\mathfrak q, \mathfrak r$ and $\mathfrak e$. The theorem is proved as a consequence of the following three intermediate gluings:
    \begin{enumerate}
        \item $D^\mathrm{M}_{R_i,k}$ is glued to $D^\mathrm{RN}_{M',Q_f,R_1,k}$ with a complex scalar field,
        \item $D^\mathrm{RN}_{M',Q_f,R_1,k}$ is glued to $D^\mathrm{RN}_{M',Q_f,R_2,k}$ trivially (i.e., with identically vanishing scalar field), and 
        \item $D^\mathrm{RN}_{M',Q_f,R_2,k}$ is glued to $D^\mathrm{RN}_{M_f,Q_f,R_f,k}$ with a real scalar field,
    \end{enumerate}
    where $R_i \doteq R_f - M_f^{3/4}$, $0<M'<M_f$ is an intermediate modified Hawking mass, $Q_f\doteq\mathfrak qM_f$, $R_1,R_2$ are intermediate radii which satisfy $R_i<R_1<R_2$. 

    \textbf{Gluing 1.}  In the interval $v\in [0,1]$ we impose the ansatz \eqref{eq:defn-phi-RN}. At $v=0$, we set
\begin{align}\label{eq:initializations-interior-gluing}
        r(0)&=R_i, \quad   m(0)=Q(0)=0, \quad \partial_ur(0)=-\frac{1}{2M_f^{1/2}}, \quad \partial_vr(0)=\frac{M_f^{1/2}}{2}.
    \end{align}
    
    The pulse parameters $\alpha_*$ which achieve gluing of transverse derivatives of $\phi$ are determined by the procedure of \cref{sec:gluing-rn}, with charge condition $Q(1;\alpha)= Q_f$. As in \cref{sec:gluing-rn} we find that the gluing can be performed with parameters satisfying $|\alpha_*|^2 \lesssim M^{-1}_f$. Using this estimate on $\alpha_*$,  we obtain from  Raychaudhuri's equation \eqref{eq:CSF-Raychaudhuri-v} and \eqref{eq:initializations-interior-gluing} that $\frac 12 M_f^{1/2}\geq \partial_v r \geq \frac 14 M_f^{1/2}$ for every $v\in[0,1]$  by choosing $M_0(k,\mathfrak q, \mathfrak e, \mathfrak r )$ sufficiently large. This also implies $R_i  \leq r \leq R_i + \frac 12 M_f^{ 1/2}$.    
 Using $r \geq R_f - M_f^{3/4}$ and the estimate analogous to  \eqref{eq:partialv-r-partialur} we  infer $|r\partial_u r- r(0) \partial_u r(0) |\lesssim 1$  for every $v\in [0,1]$, i.e., $0 < -\partial_u r\leq M_f^{-1/2}$. 
 We now estimate the  Hawking mass at $v=1$ by integrating \eqref{eq:equation-for-hawking-mass},
 \begin{equation} \label{eq:estimate-on-hawing_mass} m  (1) = \int_0^1 2  r^2 (-\partial_u r ) |\partial_v\phi|^2\, d v  + \int_0^1  \frac{Q^2}{2r^2}\partial_v r\, dv \lesssim M_f^2 M_f^{-1/2}    M_f^{-1} + M_f^{1/2}\lesssim  M_f^{1/2}.
 \end{equation} 
Setting $R_1=r(1)$ and $M'=m(1)+Q_f^2/(2R_1)$, we have shown that $R_i < R_1 \leq R_i + \frac 12 M_f^{1/2}$. The condition \eqref{eq:condition-on-Rf} shows that $Q_f^2/(2R_f) \leq 
M_f/ (1+\mathfrak r) $. In particular, since $R_1 \geq R_f - M_f^{3/4}$ we estimate  \begin{equation}M'= m(1)+\frac{Q_f^2}{2 R_1} \leq \frac 12 \left(1 + \frac{1}{1+\mathfrak r}\right)   M_f = \frac{2+\mathfrak r }{2 + 2 \mathfrak r}M_f \label{eq:M1}\end{equation}
by possibly taking $M_0(k,\mathfrak q,\mathfrak e, \mathfrak r)$ larger. This completes the first gluing step. 
 
    \textbf{Gluing 3.} It is more convenient to now carry out the third gluing step and simply ensure that $R_2>R_1$. We use a collection of  $k+1$ real-valued pulses as in \eqref{def-phi} on $v\in [0,1]$. We impose
    \begin{equation}
        r(1)=R_f,\quad \partial_ur(1)=-M_f,\quad Q(1)=Q_f,\quad \varpi(1)=M_f.
    \end{equation}
   This uniquely determines  $\partial_v r(1)$ which can have either sign but satisfies $|\partial_v r(1)|\lesssim M_f^{-1}$. We also note that as long as $|\alpha|^2 \leq M_f^{-3/2}$, we have  $|\partial_v r|\lesssim M_f^{-1/2}$ and thus $|r-R_f|\lesssim  M_f^{-1/2}$ on $[0,1]$. This also gives $-\partial_u r \approx M_f$. Using
  \begin{equation*}
    \varpi(1)- \varpi(0) = \int_0^12 r^2 (-\partial_u r ) |\partial_v\phi|^2 \,d v
  \end{equation*} and \eqref{eq:M1},
  we write the mass condition $\varpi(1)=M_f$ and $\varpi(0) = M'$ as a sphere of $\alpha$'s  ($ |\alpha|^2\approx     M_f^{-2}$) for which we will apply the Borsuk--Ulam argument. We use here that  $ M_f - M' = M_f  \mathfrak r /(2+2\mathfrak r)$. With $ |\alpha_*|^2\approx     M_f^{-2}$  we have the improved estimate  $|\partial_v r |\lesssim M_f^{-1}$ for $v\in [0,1]$ and thus, $|r(0) -R_f|\lesssim M_f^{-1}$. Taking now $M_0(k,\mathfrak q,\mathfrak e, \mathfrak r)$ sufficiently large makes $R_2\doteq r(0)>R_1$. 

\textbf{Gluing 2.} By the previous constructions, we have $R_1<R_2$, $\varpi(0)=\varpi(1)=M'$, $Q(0)=Q(1)=Q_f$, $\partial_vr(0)>0$, and $\partial_ur(0)<0$. Now $D^\mathrm{RN}_{M',Q_f,R_k}$ can be trivially glued to $D^\mathrm{RN}_{M',Q_f,R_2,k}$ by choosing $\phi\equiv 0$, and we must merely ensure that $\partial_ur<0$ along the way. Since $\partial_vr>0$ by Raychauduri's equation, this amounts to proving $\frac{2m}{r}<1$. Indeed,
\begin{equation*} m(v)  \leq m(0) + \int_0^1 \frac{Q_f^2}{2 r^2} \partial_v r\, dv = m(0) + \int_{R_1}^{R_2} \frac{Q_f^2}{2r^2}\,dr  \lesssim m(0)+ (R_2 - R_1) \lesssim   M_f^{1/2} +M_f^{3/4}, \end{equation*} 
where we used \eqref{eq:estimate-on-hawing_mass}.  In particular,  by choosing $M_0(k,\mathfrak q, \mathfrak e,\mathfrak r) $ larger, we can make $m(v)/M_f$ arbitrarily small and thus $\partial_u r <0$ throughout gluing~2. 
\end{proof}

\section{Constructing the spacetimes and Cauchy data}\label{sec:construction-spacetime-cauchy-data}

In this final section we will prove our main result \cref{cor:trapped-surfaces-ERN} as well as  \cref{main-corollary}, \cref{cor:Cauchy-horizon-RN}, and \cref{cor:Cauchy-horizon-closes--off}.

\subsection{Maximal future developments of asymptotically flat data for EMCSF}\label{MFGHD}

Our theorems and corollaries in this paper are stated in the framework of the Cauchy problem for the Einstein--Maxwell-charged scalar field system.  We recall that Cauchy data for the EMCSF system consist of the usual Cauchy data $(\Sigma,g_0,k_0)$ for the Einstein equations, where $\Sigma$ is a 3-manifold, $g_0$ a Riemannian metric on $\Sigma$, and $k_0$ a symmetric $2$-tensor field, together with initial data for the matter fields, namely initial electric and magnetic fields, $E_0$ and $B_0$, and finally the scalar field $\phi_0$ and its ``time derivative'' $\phi_1$. (See e.g.\ \cite[Section VI.10]{YCB-book} for a treatment of the Einstein--Maxwell Cauchy problem.) Associated to a Cauchy data set is a unique maximal future globally hyperbolic development $(\mathcal M^4,g,F,A,\phi)$ \cite{MR53338,CBG69}. If the Cauchy data are moreover spherically symmetric, then the maximal development will be spherically symmetric by uniqueness. 

We will not, however, actually construct our spacetimes by directly evolving Cauchy data. Rather, we construct the spacetimes \emph{teleologically} by gluing together explicit spacetimes with the help of our characteristic gluing results and \cref{prop:spacetime-gluing}. In each case, a Cauchy hypersurface $\Sigma$ is then found, within the spacetime, whose future domain of dependence contains the physically relevant region, and contains no antitrapped spheres. At this point, all attention is restricted to this future domain of dependence. \emph{A posteriori}, by the existence and uniqueness theory for the maximal globally hyperbolic development, the spacetime will then be contained in the maximal development of the induced data on the Cauchy hypersurface $\Sigma$. 

\begin{figure}[ht]
\centering{
\def\svgwidth{18pc}
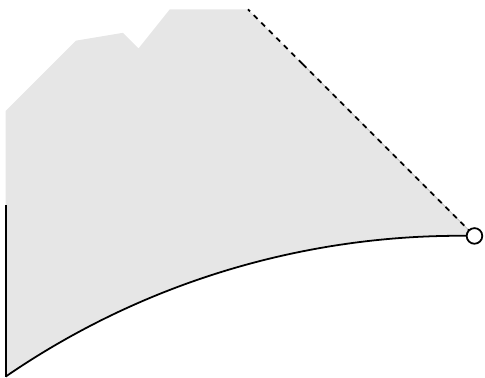}
\caption{General structure of the MFGHD of asymptotically flat Cauchy data $\Sigma$ in the EMCSF system in spherical symmetry \cite{Kommemi13}. What is depicted is the quotient manifold $\mathcal Q$ as a bounded subset of $\Bbb R^{1+1}_{u,v}$ with boundary suitably labeled. Note that various components of the diagram can be empty.}
\label{Kommemi}
\end{figure}

Since our examples are maximal globally hyperbolic developments of asymptotically flat, spherically symmetric Cauchy data for the EMCSF system with no antitrapped spheres of symmetry, we can make use of a general characterization of the boundary of spacetime in this context appearing in \cite{Kommemi13}. In particular one can rigorously associate a global Penrose diagram, and unambiguously identify a nonempty null boundary component \emph{future null infinity} $\mathcal I^+$, \emph{domain of outer communication} $J^-(\mathcal I^+)$, (possibly empty) \emph{black hole region} $\mathcal{BH}\doteq \mathcal M\setminus J^-(\mathcal I^+)$, (possibly empty) \emph{event horizon} $\mathcal H^+\doteq\partial(\mathcal{BH})$,  (possibly empty) \emph{Cauchy horizon} $\mathcal{CH}^+$, (possibly empty) $r=0$ singularity $\mathcal S$, and (possibly empty) null boundary component $\mathcal N$ emanating from a (possibly absent) ``locally naked'' singularity at the center. The Penrose diagram $\mathcal Q\subset \Bbb R^{1+1}_{u,v}$ can be viewed as a global double null chart for the spacetime, with $v$ the ``outgoing'' null coordinate and $u$ the ``ingoing'' coordinate. See \cref{Kommemi}.\footnote{Note that the above general boundary decomposition in particular proves that one cannot form a globally naked singularity once a marginally trapped surface has developed in the spacetime, which already rules out naked singularity formation by supercharging a black hole in spherical symmetry, see \cite[Section 1.9]{Kommemi13}. It is thus not at all surprising that ongoing numerical searches for these continue to be futile.} 

For use in the statement and proof of \cref{trapped-surfaces-ERN-proof} below, we recall that in the EMCSF system in spherical symmetry, the \emph{apparent horizon} is defined by 
\[\mathcal A\doteq\{\partial_v r=0\}\subset\mathcal{BH}.\] 
Since $\mathcal A$ might have a complicated structure (in particular, it might have nonempty interior), we define an appropriate notion of boundary as follows. The \emph{outermost apparent horizon} $\mathcal A'$ consists of those points $p\in\mathcal A$ whose past-directed ingoing null segment lies in the strictly untrapped region $\{\partial_v r>0\}$ and eventually exits the black hole region, i.e., enters $J^-(\mathcal I^+)$. $\mathcal A'$ is a possibly disconnected achronal curve in the $(1+1)$-dimensional reduction $\mathcal Q$ of $\mathcal M$. Note, as depicted in \cref{Kommemi}, that $\mathcal A'$ does not necessarily asymptote to future timelike infinity $i^+$.

For definiteness, we will make extensive use of these notions in our theorems and corollaries. However, our notation and usage should be sufficiently familiar to readers acquainted with standard concepts in general relativity so that they may read our diagrams and understand our theorems without specific reference to \cite{Kommemi13}. 

We also note that when referring to spherically symmetric subsets of $(\mathcal M,g)$, such as the event horizon $\mathcal H^+$, we may view them as objects in $\mathcal M$ or in the reduced space $\mathcal Q$. The context will make it clear which point of view we are taking. 

\begin{rk}
In \cref{app-B}, we show by a barrier argument that since $\partial_u r<0$ in a spacetime satisfying the hypotheses of \cite{Kommemi13}, there are also no nonspherically symmetric antitrapped surfaces. 
\end{rk}

\subsection{Construction of gravitational collapse to Reissner--Nordstr\"om}
\label{subsec:grav-collapse-RN}

We now state a more precise version of \cref{main-corollary} as follows. 
\setcounter{cor}{0}
\begin{cor}\label{corollary-proof}
For any $k\in\Bbb N$, $\mathfrak q\in [-1,1]\setminus \{0\}$, and $\mathfrak e\in\Bbb R\setminus\{0\}$, let $M_0(k,\mathfrak q,\mathfrak e)$ be as in \cref{main-thm-RN}. Then for any $M\ge M_0$ there exist asymptotically flat, spherically symmetric Cauchy data $(\Sigma,g_0,k_0,E_0,B_0,\phi_0,\phi_1)$ for the EMCSF system, with $\Sigma\cong \Bbb R^3$ and a regular center, such that the maximal future globally hyperbolic development $(\mathcal M^4,g,F,A,\phi)$ has the following properties:
\begin{itemize}
    \item All dynamical quantities are at least $C^{k}$-regular.
    \item Null infinity $\mathcal I^+$ is complete.
    \item The black hole region is nonempty, $\mathcal{BH}\doteq \mathcal M\setminus J^-(\mathcal I^+)\ne\emptyset$.
    \item The Cauchy surface $\Sigma$ lies in the domain of outer communication $J^-(\mathcal I^+)$. In particular, it does not intersect the event horizon $\mathcal H^+\doteq\partial(\mathcal{BH})$.
    \item The initial data hypersurface does not contain trapped surfaces.
    \item The spacetime does not contain antitrapped surfaces. 
    \item For sufficiently late advanced times $v\ge v_0$, the domain of outer communication, including the event horizon, is isometric to that of a Reissner--Nordstr\"om solution with mass $M$ charge to mass ratio $\mathfrak q$. For $v\ge v_0$, the event horizon of the spacetime can be identified with the event horizon of Reissner--Nordstr\"om. 
\end{itemize}
\end{cor}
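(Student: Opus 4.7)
The plan is to build the spacetime teleologically from explicit ingredients and the event horizon gluing of \cref{main-thm-RN}, and only then identify a Cauchy hypersurface $\Sigma\cong\mathbb R^3$ inside it. With the gluing theorem in hand, the corollary reduces to bookkeeping: paste a Minkowski piece to a Reissner--Nordstr\"om piece along a null cone, extend by local well-posedness so that the resulting spacetime contains both a regular center and a standard asymptotically flat end, and select a Cauchy slice satisfying all the listed conditions.

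First apply \cref{main-thm-RN} with $M_i=0$, $R_i=\tfrac12 M$, $M_f=M$ and charge to mass ratio $\mathfrak q$; since $M\ge M_0(k,\mathfrak q,\mathfrak e)$, this produces a $C^k$ cone data set along an outgoing null segment gluing the Minkowski sphere $D^{\mathrm M}_{M/2,k}=D^{\mathrm S}_{0,M/2,k}$ to the Reissner--Nordstr\"om horizon sphere $D^{\mathrm{RN}\mathcal H}_{M,\mathfrak qM,k}$, with $\partial_u r<0$ throughout. Next, choose an explicit double null rectangle $\mathcal Q_1$ in $(3+1)$-Minkowski space whose northeast corner sphere has area-radius $\tfrac12 M$, and a double null rectangle $\mathcal Q_2$ in Reissner--Nordstr\"om with parameters $(M,\mathfrak qM)$ whose southwest corner lies on the event horizon and induces exactly $D^{\mathrm{RN}\mathcal H}_{M,\mathfrak qM,k}$. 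Then \cref{prop:spacetime-gluing} produces a spherically symmetric $C^k$ EMCSF solution $(\mathcal Q,r,\Omega^2,\phi,Q,A_u)$ containing isometric copies of $\mathcal Q_1$ and $\mathcal Q_2$ pasted along a single gluing cone.

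Standard local well-posedness for the characteristic initial value problem, applied below $\mathcal Q_1$ inside Minkowski space and above $\mathcal Q_2$ inside the RN exterior, allows one to enlarge $\mathcal Q$ so as to contain the regular center $\Gamma=\{r=0\}$ inherited from Minkowski, the entire piece of the RN domain of outer communication for $v\ge v_0$, future null infinity $\mathcal I^+$, and spacelike infinity $i^0$. One then selects $\Sigma$ to be a spherically symmetric spacelike hypersurface with one endpoint on $\Gamma$, crossing the gluing cone through a sphere with $\partial_v r>0$ and $\partial_u r<0$, and asymptoting to a standard asymptotically flat Cauchy slice of the Reissner--Nordstr\"om exterior. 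By construction $\Sigma\cong\mathbb R^3$ has a regular center and is asymptotically flat; no sphere on $\Sigma$ is trapped because $\partial_vr>0$ holds along the whole curve; and, since $\partial_ur<0$ in each of the Minkowski, RN exterior, and gluing pieces, no antitrapped sphere occurs anywhere in the spacetime. As the solution is literally Reissner--Nordstr\"om with parameters $(M,\mathfrak qM)$ for $v\ge v_0$, the slab $v\ge v_0$ gives a complete $\mathcal I^+$, a nonempty black hole region with event horizon identified with that of RN, and $\Sigma\subset J^-(\mathcal I^+)$.

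The principal subtlety, and the step I expect to require the most care, is the teleological character of the construction: one must verify that the maximal future globally hyperbolic development of the Cauchy data induced on $\Sigma$ coincides (up to isometry) with the spacetime just assembled, so that the enumerated properties genuinely belong to the data rather than to our construction. This reduces to showing that $\mathcal Q_1$, $\mathcal Q_2$, the gluing region, and the explicit backward and forward extensions all lie in the future domain of dependence $D^+(\Sigma)$, which in turn follows from the absence of antitrapped spheres established above together with the placement of $\Sigma$ to the past of the gluing cone in the Minkowski piece and of the asymptotic RN region. Uniqueness of the MFGHD then identifies the two spacetimes and yields the corollary.
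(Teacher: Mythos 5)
Your strategy coincides with the paper's: apply \cref{main-thm-RN} (with $M_i=0$, $R_i=\tfrac12 M$, $M_f=M$) together with \cref{prop:spacetime-gluing} to paste a Minkowski diamond to a Reissner--Nordstr\"om exterior along an outgoing cone, extend the solution so that it contains a regular center and the asymptotically flat end, select $\Sigma$, and finally invoke uniqueness of the maximal future globally hyperbolic development to transfer the listed properties to the induced Cauchy data. The teleological point you flag at the end is exactly the one the paper addresses in its discussion of maximal developments, and your resolution of it is correct.

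Two steps are asserted too quickly, however. First, the region between the gluing cone $\{u=-1\}$ and the center at advanced times $v>0$ is \emph{not} ``inside Minkowski space'': it lies in the future of the outgoing cone carrying the scalar field pulses, so it is a genuinely large-data region terminating on the axis $r=0$. The local existence statement \cref{local-existence-ss} is formulated away from the center and does not produce this strip; one needs the semi-global existence results for spherically symmetric characteristic data up to a regular center (the paper cites \cite{TheBVPaper,luk-oh-yang} and also supplies a Cauchy-stability alternative). Moreover, the wave equation loses one derivative at the center relative to characteristic data, which is why the paper runs the gluing at regularity index $k+1$ in order to obtain a globally $C^k$ development; with your $C^k$ gluing data the solution is only guaranteed to be $C^{k-1}$ near the axis. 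Second, the conclusions ``no trapped surfaces on $\Sigma$'' and ``no antitrapped surfaces in $\mathcal M$'' concern arbitrary closed spacelike $2$-surfaces, not only symmetry spheres; passing from $\partial_v r>0$ along $\Sigma$ and $\partial_u r<0$ to the general statements requires the null-hypersurface comparison argument of \cref{no-trapped} and \cref{no-antitrapped}, and the sign $\partial_u r<0$ must be propagated to the \emph{entire} maximal development (which exceeds the explicitly constructed pieces) via Raychaudhuri's equation \eqref{eq:CSF-Raychaudhuri-u} rather than checked region by region. Neither gap is fatal---the architecture of your proof is the paper's---but both need to be filled.
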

\begin{rk} \label{rk:Schwarzschild-case}
A similar statement can be made with $\mathfrak q=0$ for the Einstein-scalar field model, using instead \cref{main-thm-Schw}. In that case, there will also be no assumption made on the mass. 
\end{rk}

\begin{figure}[ht]
\centering{
\def\svgwidth{29pc}
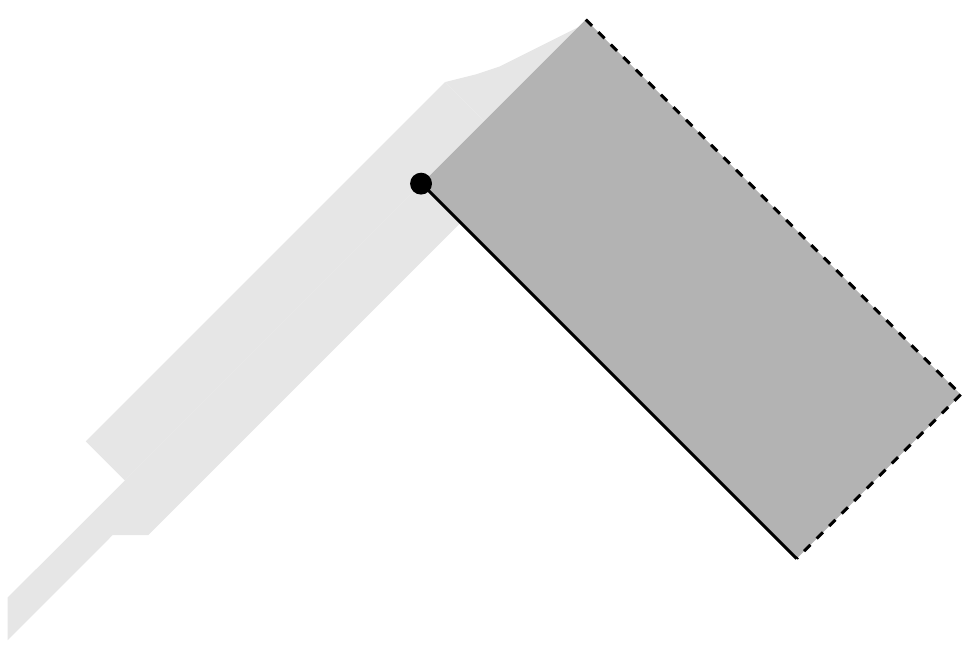}
\caption{Penrose diagram for the proof of \cref{corollary-proof}.}
\label{Penrose-diagram-proof}
\end{figure}
\begin{proof}
We refer the reader to \cref{Penrose-diagram-proof} for a visual guide to the proof. Using \cref{main-thm-RN} with regularity index $k+1$ (see footnote below) and \cref{prop:spacetime-gluing}, a portion of Minkowski space
\begin{align*}
t+r&\le \tfrac 12 M,\\
t-r&\ge -\tfrac 12 M,
\end{align*}
can be glued to a Reissner--Nordstr\"om solution with parameters $M$ and $\mathfrak qM$. Note that as depicted, one can solve for a complete future neighborhood of the event horizon, which might not be a complete double null neighborhood.

 Since we are in spherical symmetry, standard techniques (see \cite[Section 5]{TheBVPaper} or \cite[Section 3]{luk-oh-yang}) allow the ``local existence'' region emanating from the Reissner--Nordstr\"om portion of the spacetime to be extended all the way up to the center.\footnote{The wave equation in spherical symmetry loses one derivative at the center when compared to characteristic data. Therefore, to obtain a globally $C^k$ solution, we take $C^{k+1}$ characteristic data.} (In this figure, this region is denoted ``Cauchy stability'' for reasons that will become clear below.)
 
 We now identify a spacelike curve $\Sigma$ connecting spacelike infinity $i^0$ in the exactly Reissner--Nordstr\"om region to the center, to the past of the cone $u=-1$. The curve $\Sigma$ can be chosen so the induced data on it is asymptotically flat near $i^0$. For example, it may be taken to be a constant $t$ curve near $i^0$ in standard coordinates. Furthermore, by having $\Sigma$ hug the gluing region closely enough, we are guaranteed to have no spherically symmetric antitrapped surfaces on $\Sigma$.

Completeness of null infinity $\mathcal I^+$ is inherited from the exact Reissner--Nordstr\"om solution. By inspecting \cref{Penrose-diagram-proof}, we see that the null hypersurface $C_{-1}$ is the event horizon $\mathcal H^+=\partial J^-(\mathcal I^+)$ of the spacetime and that $\Sigma$ can be arranged to lie in the domain of outer communication $J^-(\mathcal I^+)$. The statement about trapped surfaces follows from \cref{no-antitrapped} below.

We now consider the (unique) maximal future globally hyperbolic development $(\mathcal M^4,g,F,A,\phi)$ of the induced data  $(\Sigma,g_0,k_0,E_0,B_0,\phi_0,\phi_1)$ on $\Sigma$. By uniqueness of the MFGHD, it contains the domain of dependence of $\Sigma$ in the gluing spacetime (and thus all shaded regions to the future of $\Sigma$ in \cref{Penrose-diagram-proof}). Therefore, by construction, $(\mathcal M^4,g,F,A,\phi)$ has all the properties listed in the statement of \cref{corollary-proof}. Note that the property of having no antitrapped symmetry spheres is propagated to the whole development by Raychaudhuri's equation \eqref{eq:CSF-Raychaudhuri-u}. By \cref{no-antitrapped}, the spacetime does not contain any nonspherically symmetric antitrapped surfaces either.  This concludes the proof. 
\end{proof}

The above proof made use of spherical symmetry in the local existence region and the region up to the center. In view of potentially extending our work to the  Einstein vacuum equations in the future, we give a second construction of these regions which does not invoke spherical symmetry. First, the ``local existence region'' can be constructed outside of spherical symmetry by the well-known theorem of Luk \cite{Luk-local-existence}. Once such a region has been constructed, we can use the fact that it lies ``outside'' of a Minkowski region to   construct the rest of the spacetime, up to the center, by Cauchy stability: 

\begin{lem}
Let $B_{r_0}$ and $B_{r_1}$ denote the (open) balls of radii $r_0>0$ and $r_1>r_0$ in $\Bbb R^3$, respectively. Consider on $B_{r_1}$ data for the Einstein--Maxwell-charged scalar field system corresponding to Minkowski space, $(\delta,0,0,0,0,0)$. Let $D\doteq(g_0,k_0,E_0,B_0,\phi_0,\phi_1)$ be a $C^k$ (for $k\in\Bbb N$ sufficiently large and not assumed to be spherically symmetric) initial data set for the Einstein--Maxwell-charged scalar field system defined on $B_{r_1}$ which agrees with the Minkowski data set on $B_{r_0}$. Then the maximal globally hyperbolic development of $D$ contains the Minkowski cone over $B_{r_0}$ ``in its interior'' in the following sense:

There exists an $\ve>0$ and a development $(g,F,A,\phi)$ of the data $D$ on $K_{r_0+\ve}\doteq\{t+r< r_0+\ve\}\cap \{t\ge 0\}\subset \Bbb R^{3+1}$ so that the development of the Minkowski portion of the data is defined on $K_{r_0}\doteq \{t+r<r_0\}\cap \{t\ge 0\}$ and is the Minkowski metric in those coordinates. 
\end{lem}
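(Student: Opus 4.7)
The plan is to combine standard local Cauchy well-posedness of the Einstein--Maxwell-charged scalar field system (after reduction to a quasilinear wave system via, e.g., a wave gauge) with the domain of dependence theorem, via a continuation argument. First, I would apply local well-posedness to the data $D$ on $B_{r_1}$ to produce the maximal globally hyperbolic development $(\mathcal{M}, g, F, A, \phi)$, which contains an open neighborhood of $B_{r_1} \times \{0\}$. By the domain of dependence theorem together with uniqueness of the MGHD, any point $p \in \mathcal{M}$ whose closed causal past meets $\{t = 0\}$ only inside $B_{r_0}$ lies in a chart isometric to the corresponding portion of the Minkowski solution. Consequently, wherever $K_{r_0}$ is realized inside $\mathcal{M}$, the solution there must be exactly the Minkowski metric with trivial matter fields; the task becomes to show that $\mathcal{M}$ in fact contains a neighborhood of $\overline{K_{r_0}}$.

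The main step is a continuation argument. I would set
\[
T^* \doteq \sup \bigl\{\tau \in [0, r_0 + 1] : \exists\, \varepsilon > 0 \text{ with } K_{r_0 + \varepsilon} \cap \{t \le \tau\} \subset \mathcal{M}\bigr\},
\]
observe $T^* > 0$ from the initial slab of local existence, and aim to show $T^* > r_0$, which directly yields the lemma. Suppose for contradiction $T^* \le r_0$. Then the spacelike slice $\Sigma_{T^*} \doteq \{t = T^*\} \cap \mathcal{M}$ contains a ball $\{r < r_0 - T^* + \varepsilon\}$ for some $\varepsilon > 0$; the induced Cauchy data on $\Sigma_{T^*}$ is exactly Minkowski on $\{r < r_0 - T^*\}$ by the previous paragraph, and elsewhere is the regular $C^k$ restriction of the already-constructed solution. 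Applying local well-posedness anew at $\Sigma_{T^*}$ would extend the development to a slab $\{T^* \le t \le T^* + \delta\}$, which, provided $\delta$ can be chosen uniformly in $T^* \in [0, r_0]$, contradicts the definition of the supremum.

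The main obstacle is therefore establishing a uniform positive lower bound on the local existence time $\delta$ as $T^*$ ranges over the compact interval $[0, r_0]$, in particular as $T^* \to r_0$ where the Minkowski component of $\Sigma_{T^*}$ shrinks to a point. I expect this to follow from the fact that the $C^k$ norm of the induced data on $\Sigma_{T^*}$ is bounded uniformly in $T^*$ in terms of the $C^k$ norm of the original $D$ together with solution bounds on the compact subset of $\mathcal{M}$ already built; quantitative local well-posedness for the reduced quasilinear wave system then yields the uniform $\delta$. A secondary technical point is to set up a global wave gauge on the region under consideration that is compatible with the flat Minkowski gauge on $K_{r_0}$; this can be arranged by prescribing the gauge source functions via linear wave equations with data matching the Minkowski gauge, and any such choice suffices for the well-posedness argument.
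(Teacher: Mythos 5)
There is a genuine gap in your continuation argument, and it sits exactly where the paper's proof does its real work. Your uniform lower bound on the local existence time $\delta$ rests on the claim that the $C^k$ norm of the induced data on $\Sigma_{T^*}$ is ``bounded uniformly in $T^*$ in terms of the $C^k$ norm of the original $D$ together with solution bounds on the compact subset of $\mathcal M$ already built.'' For a quasilinear system there is no such general principle: the solution is only constructed on the open region $\{t<T^*\}$, and its norms could a priori degenerate as $t\to T^*$ in the thin shell $K_{r_0+\ve}\setminus \overline{K_{r_0}}$ (this is precisely the scenario in which a maximal development fails to contain a neighborhood of the Minkowski cone). So the continuation argument presupposes an a priori estimate in the shell that is never established; asserting it is essentially assuming the conclusion. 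A related unaddressed point is that for the slab $\{T^*\le t\le T^*+\delta\}\cap K_{r_0+\ve}$ to be determined by data on $\Sigma_{T^*}$, the coordinate cone must lie inside the $g$-domain of dependence, which requires the $g$-light cones in the shell to be quantitatively close to the Minkowski ones --- again an estimate, not a soft fact.

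The missing idea, which is the heart of the paper's (different and simpler) proof, is quantitative: since $D$ is $C^k$ with $k\ge 5$ and agrees with the Minkowski data on $B_{r_0}$, all derivatives of the deviation up to order $k$ vanish on $\partial B_{r_0}$, so by Taylor's theorem the deviation is $O\left(\max\{0,r-r_0\}^{5-i-j}\right)$ at the level of $(i+j)$ derivatives. Hence if one cuts the data off at radius $r_0+2\ve$ with a cutoff $\theta_\ve$, the resulting data set $D_\ve$ solves the constraints on $B_{r_0+\ve}$ and is $O(\ve)$-close to the Minkowski data in $H^4$. A single application of Cauchy stability for the reduced equations in harmonic gauge then produces the solution on the entire truncated cone $K_{r_0+2\ve}$ in one step --- no iteration, no uniform-$\delta$ issue --- and the domain of dependence theorem identifies it with the development of $D$ on a slightly smaller cone containing $\overline{K_{r_0}}$. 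Your scheme could in principle be repaired by injecting this same smallness into the shell (giving the a priori bounds that close the continuation), but at that point you would have reproduced the Cauchy stability argument in a more laborious form.
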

\begin{proof}
Since this is a standard  Cauchy stability argument   we merely sketch the proof. For $0<\ve<\frac{r_1-r_0}{2}$, let $\theta_\ve$ be a cutoff function which is equal to one on $B_{r_0+\ve}$ and vanishes outside $B_{r_0+2\ve}$. On $B_{r_1}$, we consider the ``initial data set''
\begin{equation}
    D_\ve\doteq(\theta_\ve g_0 +(1-\theta_\ve)\delta,\theta_\ve k_0 , \theta_\ve E_0,\theta_\ve B_0,\theta_\ve \phi_0,\theta_\ve \phi_1).\nonumber
\end{equation}
This does not solve the constraints everywhere, but it does solve them on $B_{r_0+\ve}$, where it equals $D$. We assume that $k\ge 5$ and show that $D_\ve$ is $O(\ve)$-close to the Minkowski data set in $H^4$. Then Cauchy stability for the reduced Einstein equations (in harmonic coordinates) will show that a solution to the reduced equations with data $D_\ve$ exists on $K_{r_0+2\ve}$ for $\ve$ sufficiently small. By domain of dependence arguments, a genuine solution will then exist on a smaller domain which still contains the entirety of $\overline{K_{r_0}}$ in its interior. 

To show that $D_\ve$ is close to Minkowski data we must check it componentwise. For brevity, we only check $\theta_\ve k_0$. Note first that \begin{equation}
    \|\theta_\ve k_0\|_{H^4}\les \|\theta_\ve k_0\|_{C^4}.\nonumber
\end{equation}
Now since $k_0$ vanishes on $B_{r_0}$ and is at least $C^5$, Taylor's theorem implies 
\begin{equation}
    |\partial_r^i\slashed\nabla^j k_0|\les \max\{0,r-r_0\}^{5-i-j},\nonumber
\end{equation}
if $0\le i+j\le 5$. In the region where either $\theta_\ve$ or $\partial_r\theta_i$ are nonvanishing, $\max\{0,r-r_0\}\les \ve$. It follows that 
\begin{equation}
    \|\theta_\ve k_0\|_{H^4}\les \sum_{0\le i+j\le 4}\sup_{B_{r_1}}|\partial_r^i\slashed\nabla^j(\theta_\ve k_0)|\les \ve,\nonumber
\end{equation}
which proves the claim and hence the lemma. 
\end{proof}

\subsection{Construction of counterexample to the third law}\label{proof-trapped-surfaces-ERN}

In this section we prove \cref{cor:trapped-surfaces-ERN} with an analogous approach as in the proof of \cref{corollary-proof}. 
We first restate the result in more detail.  
\setcounter{thm}{0}
\begin{thm}\label{trapped-surfaces-ERN-proof} For any $k\in\Bbb N$ and $\mathfrak e\in\Bbb R\setminus\{0\}$, there exist asymptotically flat, spherically symmetric Cauchy data $(\Sigma,g_0,k_0,E_0,B_0,\phi_0,\phi_1)$, with $\Sigma\cong \Bbb R^3$ and a regular center, for the EMCSF system such that the maximal future globally hyperbolic development $(\mathcal M^4,g,F,A,\phi)$ has the following properties:
\begin{itemize}
          \item All dynamical quantities are at least $C^{k}$-regular.

    \item The spacetime and Cauchy data satisfy all the conclusions of \cref{corollary-proof} with $\mathfrak q=1$ and final mass $M_f\ge  M_0(1,\mathfrak e, k)+8$.
    
    \item The spacetime contains a double null rectangle of the form $\mathfrak R\doteq \{-2\le u\le -1\}\cap\{1\le v\le 2\}$ which is isometric to a double null rectangle in a Schwarzschild spacetime of mass $1$. 
    
    \item The cone $\{u=-1\}\cap\mathfrak R$ lies in the outermost apparent horizon $\mathcal A'$ of the spacetime and is isometric to an appropriate portion of the $r=2$ hypersurface in the Schwarzschild spacetime of mass $1$. 

 \item The outermost apparent horizon $\mathcal A'$ is disconnected.
    
    \item The spacetime contains trapped surfaces in the black hole region, for all arbitrarily late advanced time. More precisely, 
    for every symmetry sphere $S_{u,v} \subset \mathcal H^+$, $J^+(S_{u,v})$ contains a trapped sphere. 
    \item There exists a neighborhood $\mathcal U$ of $\mathcal H^+$ in $\mathcal M$ such that there are no trapped surfaces $S\subset\mathcal U$. 
\end{itemize}
\end{thm}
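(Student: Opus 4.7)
The plan is to follow the construction of \cref{corollary-proof}, replacing the Minkowski sphere by a Schwarzschild mass-$1$ exterior sphere. Apply \cref{main-thm-RN} with regularity index $k+1$, initial parameters $M_i = 1$ and some $R_i \in (2, M_f/2]$, charge-to-mass ratio $\mathfrak q = 1$, and $M_f \ge M_0(1, \mathfrak e, k) + 8$, obtaining characteristic data gluing $D^{\mathrm{S}}_{1, R_i, k}$ to $D^{\mathrm{RN}\mathcal H}_{M_f, M_f, k}$. Then invoke \cref{prop:spacetime-gluing} with $\mathcal Q_1$ a Schwarzschild mass-$1$ exterior double null rectangle chosen so that the gluing sphere at $R_i$ is the corner where $D_1$ is induced and the opposite null edge is a segment of the Schwarzschild event horizon $r = 2$, and with $\mathcal Q_2$ an extremal Reissner--Nordström mass-$M_f$ double null rectangle containing a neighborhood of the event horizon and extending to $\mathcal I^+$. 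As in \cref{corollary-proof}, extend the glued spacetime by spherically symmetric local existence up to the regular center, identify a spacelike Cauchy hypersurface $\Sigma \cong \mathbb R^3$ in the domain of outer communication to the past of the gluing, and take the maximal future globally hyperbolic development. An affine reparametrization of the global double null coordinates then places the Schwarzschild event horizon segment within $\mathcal Q_1$ on the cone $\{u = -1\}$ and exhibits the desired double null rectangle $\mathfrak R = \{-2 \le u \le -1\} \cap \{1 \le v \le 2\}$.

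The properties claimed in the theorem are verified as follows. The $C^k$-regularity and the conclusions of \cref{corollary-proof} with $\mathfrak q = 1$ are inherited from the construction; the Schwarzschild isometry of $\mathfrak R$ and the identification of $\{u = -1\} \cap \mathfrak R$ with a portion of $r = 2$ are immediate from the choice of $\mathcal Q_1$ and the reparametrization. The cone $\{u = -1\} \cap \mathfrak R$ lies in the outermost apparent horizon $\mathcal A'$ because $\partial_v r \equiv 0$ on it (it is the Schwarzschild event horizon) and because past-directed ingoing null segments emanating from any of its spheres traverse the Schwarzschild exterior portion of $\mathfrak R$ (where $\partial_v r > 0$) and then cross the extremal Reissner--Nordström event horizon $\mathcal H^+$ of the full spacetime, reaching the domain of outer communication. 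The outermost apparent horizon $\mathcal A'$ is then disconnected because its two components---the Schwarzschild $r = 2$ piece and the marginally trapped late-time portion of $\mathcal H^+$ at $r = M_f$---are separated along $\mathcal H^+$ by the untrapped transition region that supports the scalar field pulse of \cref{main-thm-RN}, where $\partial_v r > 0$ along $\mathcal H^+$.

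The final two properties concern trapped surfaces. The absence of trapped surfaces in a neighborhood of $\mathcal H^+$ is inherited from the extremal Reissner--Nordström region by the local isometry: since $r_+ = r_-$, $\partial_v r \ge 0$ in a full neighborhood of the event horizon with equality only on $\mathcal H^+$ itself, so no trapped surface lies in such a neighborhood. Conversely, the Schwarzschild-interior extension obtained by evolving across the $r = 2$ edge of $\mathfrak R$ via local existence is strictly trapped ($\partial_v r < 0$), and by Raychaudhuri's equation \eqref{eq:CSF-Raychaudhuri-v} together with the dominant energy condition the sign $\partial_v r < 0$ propagates forward in $v$ along every constant-$u$ ray. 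Thus this trapped region persists to arbitrarily late advanced times, and for every sphere $S_{u, v} \subset \mathcal H^+$ the future $J^+(S_{u, v})$ intersects it and so contains a trapped sphere. The main obstacle in this plan is the geometric verification that the Schwarzschild rectangle $\mathcal Q_1$ can simultaneously contain both the gluing sphere at $R_i > 2$ and a segment of $r = 2$, and that in the resulting spacetime the $r = 2$ piece lies strictly inside the black hole while its past-ingoing null segments do cross $\mathcal H^+$ and reach the domain of outer communication.
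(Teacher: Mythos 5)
There is a genuine gap, and it is more serious than the obstacle you flag at the end. Your construction uses a \emph{single} application of \cref{main-thm-RN}, with the innermost piece of the spacetime being a Schwarzschild mass-$1$ exterior rectangle whose inner null edge lies on $r=2$. Such a spacetime cannot be completed to one with a regular center and a one-ended Cauchy surface $\Sigma\cong\R^3$ lying in $J^-(\mathcal I^+)$: extending the Schwarzschild exterior inward across $r=2$ by local existence produces the Schwarzschild interior and its $r=0$ singularity, not a regular center. The paper's proof avoids this by performing \emph{two} gluings: first a Minkowski cone is glued to the Schwarzschild mass-$1$ event horizon along $u=-1$ using \cref{main-thm-Schw} --- this is what supplies the regular center and the one-endedness, and simultaneously realizes the $r=2$ cone as a marginally trapped tube bounding an exactly Minkowski region --- and only then is the exterior rectangle $\mathfrak R$ attached outside $r=2$, an outgoing cone $u=-2$ extended with trivial scalar field until $r=\tfrac 12(M_0+8)$, and \cref{main-thm-RN} applied there (with $M_i=1$, $R_i=\tfrac12(M_0+8)$) to reach the extremal Reissner--Nordstr\"om horizon. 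Your proposal never invokes \cref{main-thm-Schw}, and without that first gluing the data cannot satisfy the conclusions of \cref{corollary-proof} (regular center, $\Sigma\subset J^-(\mathcal I^+)$, no trapped surfaces on $\Sigma$).

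Two of your verifications are also incorrect as stated. For the absence of trapped surfaces near $\mathcal H^+$, you appeal to the local isometry with extremal Reissner--Nordstr\"om, but \cref{prop:spacetime-gluing} only provides that isometry on the domain-of-outer-communication side of the late-time horizon; the region just \emph{inside} $\mathcal H^+$ is determined by ingoing data coming from the gluing region and is not isometric to ERN. The paper instead computes directly on $\mathcal H^+$, using \eqref{eq:CSF-equation-for-r} and Maxwell's equation, that $\partial_v r=\partial_u\partial_v r=0$ and $\partial_u^2\partial_v r=-2\partial_u r/M_f^2>0$ there, so $\partial_v r>0$ immediately inside the horizon, and then invokes \cref{no-trapped} to exclude nonspherically symmetric trapped surfaces as well. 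For persistence of trapped surfaces to arbitrarily late advanced time, propagating $\partial_v r<0$ forward in $v$ via Raychaudhuri \eqref{eq:CSF-Raychaudhuri-v} is not sufficient: the constant-$u$ rays through the trapped region may terminate on the $r=0$ singularity $\mathcal S$ at finite advanced time. The paper closes this with the boundary classification of \cite{Kommemi13}: either the outgoing cones reach the Cauchy horizon (and Raychaudhuri applies), or they terminate on $\mathcal S$, in which case they are eventually trapped since $r$ extends continuously by zero there, and $\mathcal S$ itself terminates only at $\mathcal{CH}^+$ or $i^+$; in either case trapped spheres exist in $J^+(S_{u,v})$ for every horizon sphere.
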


%\FloatBarrier

\begin{figure}[ht]
\centering{
\def\svgwidth{30pc}
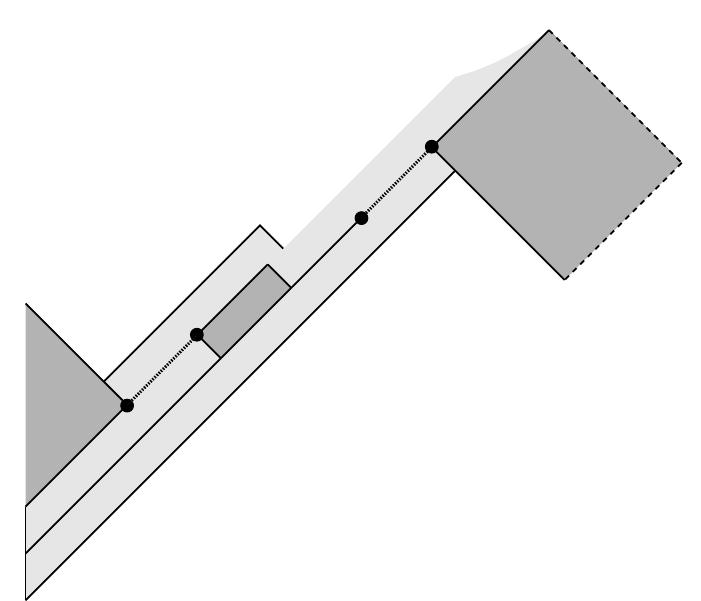}
\caption{Penrose diagram for the proof of \cref{trapped-surfaces-ERN-proof}.}
\label{trapped-surfaces-ERN-fig}
\end{figure}

%\FloatBarrier

\begin{proof}We refer to \cref{trapped-surfaces-ERN-fig} for a Penrose diagram illustrating the proof. 
The proof begins as the proof of \cref{corollary-proof} (recall also \cref{prop:spacetime-gluing}), by gluing a Minkowski cone to a Schwarzschild event horizon of unit mass along $\{ u=-1\}$. Then, attach a double null rectangle $\mathfrak R$ of Schwarzschild along the hypersurface $r=2$, as in \cref{corollary-proof}, but stop after a finite advanced time $v=2$.  Now place $u=-2$ so that 
\[\sup_{\{u=-2\}\cap\mathfrak R}r= 2+\ve\le 3.\] 
For $\ve$ sufficiently small, the first strip down to the center can be constructed as in the proof of \cref{corollary-proof}. Now let $M_f\ge M_0+8$ and extend the cone $u=-2$ to the future with trivial scalar field until $r=\tfrac 12(M_0+8)\gg 3$. Then using \cref{main-thm-RN}, extremal Reissner--Nordstr\"om of mass $M_f$ can be attached. We again solve backward up to the center as in \cref{corollary-proof} and have now constructed the spacetime depicted in \cref{trapped-surfaces-ERN-fig}. 

As in the proof of \cref{corollary-proof}, we again find an asymptotically flat spacelike curve $\Sigma$ connecting $i^0$ with the center and lying entirely in $J^-(\mathcal I^+)$.
The maximal future globally hyperbolic development  $(\mathcal M,g,F,A,\phi)$ of the induced data on $\Sigma$ contains the domain of dependence of $\Sigma$ in the spacetime constructed above (and thus all shaded regions to the future of $\Sigma$ in \cref{trapped-surfaces-ERN-fig}) and satisfies all the conclusions of \cref{corollary-proof} with $\mathfrak q=1$ and final mass $M_f\ge  M_0(1,\mathfrak e, k)+8$.  By construction, $\mathcal M$  contains the double null rectangle $\mathfrak R$ which satisfies the stated properties. Further, the cone $\{u=-1\}\cap\mathfrak R$ lies in the apparent horizon $\mathcal A$ of $(\mathcal M,g)$ and  $\{u=-1\}\cap\mathfrak R$ is isometric to an appropriate portion of the $r=2$ hypersurface in the Schwarzschild spacetime of mass $1$. 

We readily see that $(\mathcal M,g)$ contains trapped surfaces in any (future) neighborhood of $\{u=-1\}\cap\mathfrak R$ as $\partial_v r  =0 $ along $\{u=-1\}\cap\mathfrak R$ and \eqref{rdur} evaluated on $\{u=-1\}\cap\mathfrak R$  gives  
\begin{equation*}\partial_u ( r\partial_v r) = -\frac{\Omega^2}{4}.\end{equation*} 
To prove that trapped surfaces exist for arbitrarily late advanced time, we invoke the general boundary characterization of \cite{Kommemi13}. If the $r=0$ singularity $\mathcal S$ is empty, then the outgoing cone starting from one of these trapped spheres terminates on the Cauchy horizon $\mathcal{CH}^+$ and the claim is clearly true by Raychaudhuri's equation \eqref{eq:CSF-Raychaudhuri-v}. If $\mathcal S$ is nonempty, then every outgoing null cone which terminates on $\mathcal S$ is eventually trapped since $r$ extends continuously by zero on $\mathcal S$. Furthermore, $\mathcal S$ terminates at the Cauchy horizon $\mathcal{CH}^+$ or future timelike infinity $i^+$, so the claim is also true in this case.

We now show that there exists a neighborhood $\mathcal U$ of $\mathcal H^+$ in $\mathcal M$  which does not contain \emph{spherically symmetric} trapped surfaces. It suffices to show that there is a neighborhood $\mathcal V$ of $\mathcal H^+$ in $\mathcal Q$ such that $\partial_v r>0$ on $\mathcal V\setminus\mathcal H^+$, where we use the same symbol for the event horizon in $\mathcal M$ and $\mathcal Q$. Let $p\in\mathcal H^+$ be any sphere after the final gluing sphere, see \cref{trapped-surfaces-ERN-fig}. Then $r(p)=Q(p)=M_f$, $\partial_v r(p)=0$, and $\phi(p)=0$. Reparametrize the double null gauge so that $\Omega\equiv 1$ on the ingoing cone $\underline C$ passing through $p$. By the wave equation for the radius \eqref{eq:CSF-equation-for-r},
\begin{equation*}
    \partial_u\partial_v r(p)=-\frac{1}{4M_f}+\frac{M^2_f}{4M^3_f}=0. 
\end{equation*}
Differentiating  \eqref{eq:CSF-equation-for-r} in $u$, we find
\begin{equation*}
    \partial_u^2\partial_v r = \frac{\partial_u r}{4r^2}-\partial_u(\partial_u\log r)\partial_v r - (\partial_u\log r) \partial_u\partial_v r - \frac{3Q^2\partial_u r}{4r^4}+\frac{Q\partial_u Q}{2r^3}.
\end{equation*}
Evaluating at $p$, we find $\partial_u Q(p)=0$ by Maxwell's equation \eqref{eq:Maxwell-u}, so we have 
\begin{equation*}
    \partial_u^2 \partial_v r(p)=\frac{\partial_u r(p)}{4M_f^2} - \frac{3M^2_f\partial_u r(p)}{4M^4_f} = -\frac{2\partial_u r(p)}{M^2_f}>0.
\end{equation*}
Therefore, $\partial_v r$ becomes immediately positive for all points along $\underline C$ sufficiently close to the event horizon but not on it (see also \cref{fig:disconnected}).\footnote{This calculation is related to the discussion in \cref{rk:no-trapped-surfaces} above and \cref{app:A} below. In fact, we have effectively just proved the claim in \cref{rk:App1}.} 

By the monotonicity of Raychaudhuri's equation \eqref{eq:CSF-Raychaudhuri-v} and since $p\in \mathcal H^+$ after the final gluing sphere was arbitrary, this shows that there exists a neighborhood $\mathcal V$ of $\mathcal H^+$ contained in $\mathcal Q$ that does not contain trapped symmetry spheres except for $\mathcal H^+$ itself.  That there are also no nonspherically symmetric trapped surfaces in $\mathcal U\doteq \mathcal V \times S^2$ now follows immediately from \cref{no-trapped} below.

The claim about the disconnectedness of the outermost apparent horizon $\mathcal A'$ now follows from the fact that $\mathcal A'\cap \mathcal H^+$ is one connected component of $\mathcal A'$ which does not contain $\{u=-1\} \cap \mathfrak R \subset \mathcal A'$. This concludes the proof. \end{proof}

\subsection{Construction of collapse to Reissner--Nordstr\"om with piece of Cauchy horizon}
\label{subsec:collapse-with-CH}
In this section, we show that a mild modification of the proof of \cref{corollary-proof} allows us to construct examples of gravitational collapse such that the black hole region admits a piece of future boundary which is a Cauchy horizon which is isometric to a subextremal or extremal Reissner--Nordström Cauchy horizon.  

\begin{cor}\label{RN-with-CH-proof} For any $k\in\Bbb N$, $\mathfrak q\in [-1,1]\setminus \{0\}$, and $\mathfrak e\in\Bbb R\setminus\{0\}$, let $M_0(k,\mathfrak q,\mathfrak e, 1/2)$ be as in \cref{thm:interior-gluing}. Then for any $M\ge M_0$ there exist asymptotically flat, spherically symmetric Cauchy data $(\Sigma,g_0,k_0,E_0,B_0,\phi_0,\phi_1)$, with $\Sigma\cong \Bbb R^3$ and a regular center, for the EMCSF system such that the maximal future globally hyperbolic development $(\mathcal M^4,g,F,A,\phi)$ has the following properties:
\begin{itemize}
\item  All dynamical quantities are at least $C^{k}$-regular.
\item The spacetime and Cauchy data satisfy all the conclusions of \cref{corollary-proof}.
\item The black hole region contains an isometrically embedded portion of a Reissner--Nordström Cauchy horizon neighborhood with parameters $M$ and $\mathfrak q M$, in particular $\mathcal{CH}^+\ne\emptyset$. 
\end{itemize}
\end{cor}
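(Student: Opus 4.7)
The plan is to carry out the construction of \cref{corollary-proof} essentially verbatim, but to augment it with one additional \emph{ingoing} characteristic gluing inside the black hole, using \cref{thm-cauchy-horizon}, so as to embed a Reissner--Nordstr\"om Cauchy horizon neighborhood into the spacetime by hand. By uniqueness of the MFGHD, the embedded Cauchy horizon neighborhood will then also sit inside the maximal future development of the induced data on the Cauchy hypersurface $\Sigma$.

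First, after enlarging the threshold $M_0(k,\mathfrak q,\mathfrak e)$ so that both \cref{main-thm-RN} and \cref{thm-cauchy-horizon} apply at mass $M$, run the proof of \cref{corollary-proof} to produce a spherically symmetric spacetime obtained by gluing a Minkowski cone to a Reissner--Nordstr\"om event horizon with parameters $M$, $\mathfrak q M$, attaching exact Reissner--Nordstr\"om for $v \ge v_0$, extending backward to a regular center by spherically symmetric local existence, and identifying an asymptotically flat Cauchy surface $\Sigma\cong \mathbb R^3$ lying in $J^-(\mathcal I^+)$. In the attached exact Reissner--Nordstr\"om piece one can locate an explicit portion of the Cauchy horizon $\mathcal{CH}^+$; pick a sphere $S$ on $\mathcal{CH}^+$ at a suitable point, well inside the black hole.

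Second, apply \cref{thm-cauchy-horizon} at $S$ to characteristically glue the Cauchy horizon sphere data $D^{\mathrm{RN}\mathcal{CH}}_{M,\mathfrak q M,k}$ to the Minkowski sphere data $D^{\mathrm M}_{M/2,k}$ along an ingoing null cone emanating from $S$ toward the center. From the resulting Minkowski sphere, extend backward to a regular center via spherically symmetric local existence, exactly as in \cref{corollary-proof}. Paste this new interior piece to the Corollary~\ref{corollary-proof} construction along the ingoing gluing cone, using the ingoing analogue of \cref{prop:spacetime-gluing} (see the remark after \cref{defn:char-gluing}). The resulting spherically symmetric $C^k$ spacetime contains $\Sigma$ as a Cauchy surface and, by construction, an open region inside the black hole explicitly isometric to a neighborhood of $\mathcal{CH}^+$ in a Reissner--Nordstr\"om spacetime with charge to mass ratio $\mathfrak q$. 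By uniqueness of maximal future globally hyperbolic developments, the MFGHD of the induced data on $\Sigma$ contains this construction, inherits all the conclusions of \cref{corollary-proof}, and additionally contains the desired Cauchy horizon neighborhood.

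The main obstacle is to verify that the two interior pieces---the exact Reissner--Nordstr\"om region attached in \cref{corollary-proof} and the new Cauchy-horizon gluing region produced by \cref{thm-cauchy-horizon}---can be consistently pasted along the ingoing gluing cone, with matching double-null gauge, $C^k$ regularity of all transverse derivatives, and consistent orientations. This is a straightforward adaptation of the spacetime-gluing argument of \cref{prop:spacetime-gluing} to the ingoing setting, and requires no new ideas beyond the outgoing gluing of \cref{main-thm-RN}, the ingoing gluing of \cref{thm-cauchy-horizon}, and the MFGHD maximality argument already used in the proof of \cref{corollary-proof}.
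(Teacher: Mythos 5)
Your first step already contains, in embryonic form, the paper's entire proof: the exact Reissner--Nordstr\"om piece attached along the event horizon gluing cone can simply be taken to be a double null neighborhood of the event horizon that \emph{penetrates into the black hole interior} far enough to contain a neighborhood of a portion of the Reissner--Nordstr\"om Cauchy horizon (the local existence/Cauchy stability region connecting the gluing cone to this piece must then also penetrate the interior, as in \cref{fig:spacetime-with-ch}). Once that is observed, the third bullet holds and the corollary is proved --- no appeal to \cref{thm-cauchy-horizon} is needed, and in particular no enlargement of $M_0$ beyond the threshold of \cref{main-thm-RN} (which your proposal quietly requires, contradicting the precise statement). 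You assert that ``one can locate an explicit portion of $\mathcal{CH}^+$'' in the attached RN piece but do not justify it; as literally constructed in \cref{corollary-proof} the RN piece covers only the domain of outer communication and a thin future neighborhood of $\mathcal H^+$, so this extension into the interior is exactly the (small) modification that constitutes the proof and should be stated.

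The second step, however, is not merely superfluous --- it cannot be carried out. The ingoing cone on which you propose to prescribe the gluing data of \cref{thm-cauchy-horizon} lies deep inside the black hole, in the future domain of dependence of $\Sigma$ (equivalently, of the event horizon data together with the data along the regular center). The solution there is therefore already uniquely determined by evolution, and there is no freedom to ``paste'' a new interior piece carrying independently prescribed data: the region between the future boundary of the \cref{corollary-proof} construction and your ingoing gluing cone would have to be filled in by solving the characteristic IVP forward from the former, and the induced data on the latter is then whatever evolution produces, with no reason to agree with $D^{\mathrm{RN}\mathcal{CH}}_{M,\mathfrak qM,k}$ glued to $D^{\mathrm M}_{M/2,k}$. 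This is an over-determined problem, not an application of the ingoing analogue of \cref{prop:spacetime-gluing} (which requires the two glued regions to be causally disconnected except along the gluing cone). Indeed, were your combined construction consistent, it would produce a spacetime as in \cref{cor:Cauchy-horizon-closes--off} (Cauchy horizon closing off the spacetime at a regular center) with $\Sigma\subset J^-(\mathcal I^+)$, which the paper explicitly records as an open problem in the remark following \cref{cor:Cauchy-horizon-closes--off}; this is precisely why, in \cref{cor:CH-closes-off}, the Cauchy surface is forced to intersect the black hole region.
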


\begin{figure}[ht]
\centering{
\def\svgwidth{20pc}
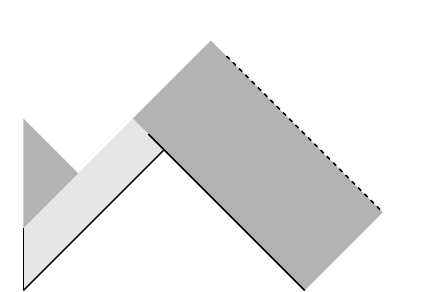}
\caption{Penrose diagram depicting the proof of \cref{RN-with-CH-proof}.}
\label{fig:spacetime-with-ch}
\end{figure}

\begin{proof}
The proof is completely analogous to the proof of \cref{corollary-proof}. We apply the gluing construction of \cref{thm:interior-gluing} to glue a sphere in Minkowski space to a Reissner--Nordström interior sphere with radius $R_f<r_+$  and $r_+ - R_f$ small. Indeed, this can be achieved by setting $\mathfrak r = \frac 12$ in \cref{thm:interior-gluing} as then $\tfrac 12 M_f ( 1+ \mathfrak r )\mathfrak q^2 \leq \frac 34 M_f < M_f \leq r_+$.
We then apply the local existence and Cauchy stability argument as in the proof of \cref{corollary-proof}. We note that the $u$-width of the local existence and Cauchy stability argument remains uniform as $R_f\to r_+$ so by choosing $R_f$ sufficiently close to $r_+$, we guarantee that we find a Cauchy hypersurface $\Sigma$ which does not intersect the event horizon. We refer to \cref{fig:spacetime-with-ch} for the Penrose diagram explaining the proof.
\end{proof}

\begin{rk}
As in \cref{rk:Schwarzschild-case}, we note that a similar statement with a piece of Schwarzschild interior including the $\{r =0\}$ singularity can be made with $\mathfrak q =0$. 
\end{rk}

\subsection{Construction of black hole interior for which the Cauchy horizon closes off spacetime}
\label{sec:closes-off-spacetime}
We now give our construction of a spacetime for which the Cauchy horizon closes off the black hole region. 

\begin{cor}\label{cor:CH-closes-off}
For any $k\in\Bbb N$, $\mathfrak q \in [-1,1]\setminus\{0\}$, $\mathfrak e\in\Bbb R\setminus\{0\}$, let $\tilde M_0(k,\mathfrak q,\mathfrak e, \mathfrak q^2/4)$ be as in \cref{thm:backwards-gluing}. Then for any $M\ge \tilde M_0$ there exist asymptotically flat, spherically symmetric Cauchy data $(\Sigma,g_0,k_0,E_0,B_0,\phi_0,\phi_1)$, with $\Sigma\cong \Bbb R^3$ and a regular center, for the EMCSF system such that the maximal future globally hyperbolic development $(\mathcal M^4,g,F,A,\phi)$ has the following properties:
\begin{itemize}
    \item All dynamical quantities are at least $C^{k}$-regular.
    \item The spacetime does not contain antitrapped surfaces.
    \item The black hole region is nonempty, $\mathcal{BH}\doteq \mathcal M\setminus J^-(\mathcal I^+)\ne\emptyset$.
    \item The future boundary of the black hole region is a $C^k$-regular Cauchy horizon $\mathcal{CH}^+$ which closes off spacetime, i.e., $\mathcal N\cup\mathcal S=\emptyset$ in \cref{Kommemi}. 
     \item The exterior region is isometric to a Reissner--Nordström exterior with mass $M$ and charge $\mathfrak q M$. In particular, future null infinity $\mathcal I^+$ is complete.
\end{itemize}
\end{cor}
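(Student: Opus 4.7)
The plan is to construct the spacetime \emph{teleologically}, following the template of the proof of \cref{corollary-proof}, but combining \emph{two} applications of \cref{prop:spacetime-gluing}: one invoking the event horizon gluing of \cref{main-thm-RN}, and one invoking the Cauchy horizon gluing of \cref{thm-cauchy-horizon}. First I start with an exact Reissner--Nordstr\"om solution of mass $M\ge \tilde M_0(k,\mathfrak q,\mathfrak e)$ and charge $\mathfrak q M$, from which I extract a double-null region containing an outgoing cone segment of the event horizon, an ingoing cone segment of the Cauchy horizon, and the full piece of interior joining them, together with an exterior neighborhood extending to $i^0$.

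Next, I apply \cref{main-thm-RN} (with $M_i=0$) to glue a Minkowski sphere of radius $\tfrac 12 M$ to the event horizon sphere $D^{\mathrm{RN}\mathcal H}_{M,\mathfrak qM,k}$ along an outgoing null cone; via \cref{prop:spacetime-gluing} this attaches, to the past of the event horizon, a region isometric to a truncated Minkowski cone. Then I apply \cref{thm-cauchy-horizon} to glue the Cauchy horizon sphere $D^{\mathrm{RN}\mathcal{CH}}_{M,\mathfrak qM,k}$ to a Minkowski sphere of radius $\tfrac 12 M$ along an ingoing null cone, and use \cref{prop:spacetime-gluing} again to attach, to the future of the Cauchy horizon, another region isometric to a truncated Minkowski cone. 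Crucially, both gluings meet Minkowski spheres of the \emph{same} radius $\tfrac 12 M$, so the two attached Minkowski regions can be fit together with a single Minkowski interior and extended to a regular center $\Gamma=\{r=0\}$ using standard spherically symmetric local existence, exactly as in the proof of \cref{corollary-proof}.

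Having constructed the ambient spacetime, I then select a spherically symmetric, asymptotically flat spacelike hypersurface $\Sigma\cong\mathbb R^3$ which starts at the regular center, passes through the black hole region (so $\Sigma$ need not lie in $J^-(\mathcal I^+)$, unlike in \cref{corollary-proof}), and terminates at $i^0$ inside the exact Reissner--Nordstr\"om exterior piece. I choose $\Sigma$ to hug the gluing region closely enough that it contains no antitrapped symmetry spheres; by Raychaudhuri's equation \eqref{eq:CSF-Raychaudhuri-u} this propagates to the entire maximal future globally hyperbolic development, and absence of non-spherically symmetric antitrapped surfaces follows as in the proof of \cref{corollary-proof}. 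By uniqueness, the MFGHD of the induced Cauchy data contains the domain of dependence of $\Sigma$ in our construction and therefore realizes the listed properties: the exterior is exactly Reissner--Nordstr\"om (whence $\mathcal I^+$ is complete), and the future boundary of $\mathcal{BH}$ is a $C^k$ Cauchy horizon isometric to a piece of the Reissner--Nordstr\"om $\mathcal{CH}^+$.

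The only real conceptual obstacle is verifying that $\mathcal{CH}^+$ indeed \emph{closes off} the spacetime in the sense of \cref{Kommemi}, i.e.\ that the boundary components $\mathcal N$ and $\mathcal S$ are empty and that $\mathcal{CH}^+$ meets $\Gamma$ regularly. But this is automatic from the construction: to the future of $\mathcal{CH}^+$ the spacetime is by design isometric to a piece of Minkowski space meeting its own regular center, so no $r=0$ singularity or locally naked singularity can develop there. This is precisely what the Cauchy horizon gluing of \cref{thm-cauchy-horizon} delivers on top of the event horizon gluing of \cref{main-thm-RN}, and is the feature that distinguishes \cref{cor:CH-closes-off} from \cref{corollary-proof} and \cref{RN-with-CH-proof}.
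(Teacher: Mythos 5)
Your proposal diverges from the paper's proof in a way that introduces a genuine gap. The paper's construction of \cref{cor:CH-closes-off} (see \cref{fig:CH-gluing-proof}) uses \emph{only} the Cauchy horizon gluing of \cref{thm-cauchy-horizon}: an exact Reissner--Nordström exterior is attached directly to the past of $\mathcal H^+$, the ingoing gluing cone extends the Reissner--Nordström Cauchy horizon to a Minkowski sphere and thence to a regular center, and the single Minkowski piece at the top is propagated backwards by Cauchy stability to produce the center and the hypersurface $\Sigma$. This is exactly why the remark following the corollary concedes that $\Sigma$ intersects the black hole region and may contain trapped surfaces. You instead also perform the Minkowski-to-event-horizon gluing of \cref{main-thm-RN}, and the step where you claim the two attached Minkowski regions ``can be fit together with a single Minkowski interior'' is false: one Minkowski sphere of radius $\tfrac 12 M$ sits at the past end of the outgoing gluing cone, in the domain of outer communication below $\mathcal H^+$, while the other sits at the future end of the ingoing gluing cone, just below $\mathcal{CH}^+$ deep inside the black hole. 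Between them lie both gluing regions and the entire Reissner--Nordström interior, so the two Minkowski triangles are causally separated and cannot be joined into a single flat region with one connected center. (Relatedly, the second Minkowski piece is attached to the \emph{past} of the extended Cauchy horizon, not to its future; nothing is constructed beyond $\mathcal{CH}^+$.)

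The consequence is fatal for the key conclusion $\mathcal N\cup\mathcal S=\emptyset$. In your construction the black hole interior between the two gluing cones is determined by forward evolution of the event-horizon characteristic data; only a thin local-existence slab of it is controlled, and there is no reason this evolution reaches the ingoing gluing cone carrying exactly the Minkowski data required there, keeps the center regular in between, or avoids forming a spacelike singularity $\mathcal S$ (generically one expects singular behavior in the interior, cf.\ \cref{sec:interior}). Without that matching, the future boundary of $\mathcal{BH}$ need not be the Cauchy horizon and the spacetime need not close off. The fix is precisely the paper's: drop the event horizon gluing, attach the exact exterior, and give up on $\Sigma\subset J^-(\mathcal I^+)$ --- obtaining the closed-off Cauchy horizon together with genuine gravitational collapse is flagged in the paper as an open problem.
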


\begin{figure}[ht]
\centering{
\def\svgwidth{13pc}
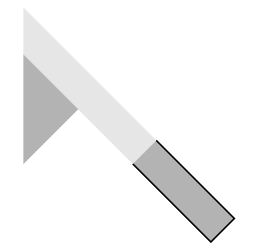}
\caption{Penrose diagram depicting the proof of \cref{cor:CH-closes-off}.}
\label{fig:CH-gluing-proof}
\end{figure}

\begin{proof}
Analogous to the proof of \cref{RN-with-CH-proof} we glue a Reissner--Nordström interior sphere with  $R_f< r_-$ and $r_- - R_f $ small to a sphere in Minkowski space along an \emph{ingoing} cone using \cref{thm:backwards-gluing}. We can choose $R_f$ arbitrarily close to $r_-$ in \cref{thm:backwards-gluing} by setting $\mathfrak r = \mathfrak q^2/4$. Indeed, in this case 
\begin{equation*}
    r_- - \frac{M_f}{2}\left(1+ \frac{\mathfrak q^2}{4} \right)\mathfrak q^2  =  M_f \left(1-\sqrt{1-\mathfrak q^2}\right) - \frac{M_f}{2}\left(1+ \frac{\mathfrak q^2}{4} \right)\mathfrak q^2 =M_f \left( 1- \sqrt{1-\mathfrak q^2} - \frac{\mathfrak q^2}{2}- \frac{\mathfrak q^4}{8} \right) \geq M_f \frac{\mathfrak q^6}{16},
\end{equation*} 
where in the last step we used the Taylor expansion of $\sqrt{1-\mathfrak q^2}$ around $\mathfrak q =0$.   The rest of the proof is now analogous to \cref{RN-with-CH-proof} and can be read off from \cref{fig:CH-gluing-proof}. We note that an isometric copy of the Reissner--Nordström exterior can be attached to the past of $\mathcal H^+$ in \cref{fig:CH-gluing-proof}.
\end{proof}

\begin{appendix}
\section{An isolated extremal horizon with nearby trapped surfaces}\label{app:A}

In this appendix we show that, in the context of the dominant energy condition, there is no local mechanism forcing a stationary extremal Killing horizon to have no trapped surfaces ``just inside'' of the horizon. We also refer back to \cref{rk:no-trapped-surfaces}.

\begin{prop}\label{prop:isolated-extremal-horizon}
There exists a $C^\infty$ spherically symmetric spacetime $(\mathcal M^4,g)$ with a complete null hypersurface $H\subset \mathcal M$ and a Killing vector field $T$ with the following properties. The Killing field $T$ is spherically symmetric, timelike in $I^-(H)$, spacelike in $I^+(H)$, null and tangent along $H$, where it also satisfies $\nabla_TT=0$, i.e., its integral curves are affinely parametrized null generators of $H$. Furthermore, $(\mathcal M,g)$ contains no antitrapped symmetry spheres, i.e., $\partial_u r <0$, and satisfies the dominant energy condition. Therefore, $H$ is an extremal Killing horizon and $I^+(H)$ is foliated by trapped symmetry spheres. 
\end{prop}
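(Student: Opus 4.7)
The plan is to exhibit an explicit static, spherically symmetric metric in ingoing Eddington--Finkelstein gauge whose stationary Killing field has norm vanishing to order exactly three on the horizon. Concretely, I would take $\mathcal{M} = \mathbb{R}_v \times I \times S^2$ for an open interval $I\subset(0,\infty)$ containing some $r_0>0$, and set
\[
g = -f(r)\,dv^2 + 2\,dv\,dr + r^2 g_{S^2}, \qquad f(r) = (r-r_0)^3,
\]
with Killing field $T = \partial_v$. Then $g(T,T) = -(r-r_0)^3$ vanishes to exactly third order on $H = \{r = r_0\}$ and changes sign, so $T$ is timelike in $\{r > r_0\}$, null on $H$, and spacelike in $\{r < r_0\}$. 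The identifications $\{r > r_0\} = I^-(H)$ and $\{r < r_0\} = I^+(H)$ follow from the fact that future-directed ingoing null geodesics (defined by $dv = 0$) have $r$ strictly decreasing.

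A short Christoffel computation gives $\nabla_T T = \tfrac{1}{2}f'(r)\,T + \tfrac{1}{2}f(r)f'(r)\,\partial_r$, which on $H$ reduces to $\tfrac{1}{2}f'(r_0)\,T = 0$; hence $H$ is an extremal Killing horizon whose null generators are affinely parametrized by $v$, and completeness is immediate since $v \in \mathbb{R}$. The future-directed ingoing null normal is $\underline{L} = -\partial_r$ and the outgoing null normal is $L = \partial_v + \tfrac{1}{2}f(r)\,\partial_r$, so the null expansions of a symmetry sphere at $(v_0,r)$ are proportional to $\underline L(r) = -1$ and $L(r) = \tfrac{1}{2}f(r)$ respectively. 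The ingoing expansion is strictly negative everywhere, so $\partial_u r < 0$ in any double-null parametrization and there are no antitrapped symmetry spheres; the outgoing expansion has the sign of $f$, so every symmetry sphere in $I^+(H)=\{r<r_0\}$ is trapped.

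The substantive verification is the dominant energy condition. Writing the Einstein tensor of this ansatz in the (spherically symmetric) orthonormal frame yields the standard formulas
\[
8\pi\rho = \tfrac{1}{r^2}\bigl(1 - f - rf'\bigr), \qquad p_r = -\rho, \qquad 8\pi p_\perp = -\tfrac{f''}{2} - \tfrac{f'}{r}.
\]
The identity $p_r = -\rho$ is forced by this metric ansatz, and the stress-energy tensor has a timelike eigenvector by continuity from the exterior, so DEC reduces to $\rho \geq 0$ together with $\rho \geq |p_\perp|$. Since $f(r_0)=f'(r_0)=f''(r_0)=0$, evaluating at $r=r_0$ yields $\rho = 1/(8\pi r_0^2) > 0$ and $p_\perp = 0$, so both inequalities are strict on some open neighborhood of $H$ by continuity. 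Restricting $I$ to this neighborhood yields the desired spacetime on $\mathcal{M}$.

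The main obstacle is precisely the DEC check: the metric ansatz saturates $|p_r| = \rho$ and one must control the tangential pressure by the energy density. The key observation that makes this automatic is that the triple vanishing $f = f' = f'' = 0$ at $r_0$ forces $p_\perp$ to vanish at $H$, while the $1/r^2$ term in the Misner--Sharp-type formula keeps $\rho$ bounded below by a positive constant near $r_0$, so the strict DEC inequality holds on a full radial neighborhood of the horizon. All other properties (spherical symmetry, smoothness, Killing equation for $T$, completeness of $H$, and $\nabla_T T = 0$ along $H$) are immediate from the explicit form of the metric.
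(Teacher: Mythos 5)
Your proposal is correct and is essentially the paper's own construction: after the coordinate change $u=r_0-r$, $t=v$ (with $r_0=1$), your metric $-(r-r_0)^3dv^2+2\,dv\,dr+r^2g_{S^2}$ is literally the paper's explicit example $u^3dt^2-2\,dt\,du+(1-u)^2g_{S^2}$, and your DEC verification (reduce to $\rho\ge 0$, $\rho\ge|p_\perp|$ using $p_r=-\rho$, then note $\rho=1/(8\pi r_0^2)>0$ and $p_\perp=0$ at the horizon by the triple vanishing of $f$, and restrict to a thin radial slab) matches the paper's $\theta>0$, $|\zeta|\ll\theta$ argument. The only blemish is a sign error in your formula for $p_\perp$ (it should be $8\pi p_\perp=+\tfrac{f''}{2}+\tfrac{f'}{r}$), which is harmless since only $|p_\perp|$ and its vanishing at $r_0$ enter the argument.
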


We recall that a spacetime $(\mathcal M,g)$ satisfies the \emph{dominant energy condition} 
if for all future directed causal vectors $X\in T\mathcal M$, $-G(\cdot,X)^\sharp$ is future directed causal or zero. Here $G$ denotes the Einstein tensor of $g$, 
\[G(g)\doteq \Ric(g)-\tfrac 12 R(g)g.\]

\begin{proof}
The spacetime is given by the spherically symmetric ansatz
\begin{align*}
    \mathcal M&=\mathcal Q\times S^2\\
    g&= g_\mathcal Q+r^2 g_{S^2},
\end{align*}
where 
\[\mathcal Q=\{(t,u)\in\Bbb R^2:t\in\Bbb R, -\ve <u<\ve\}\]
 for $\ve$ to be chosen later, and $r=r(u)$. Let $f=f(u)$ and set
\[g_\mathcal Q = fdt^2-2dtdu.\]
The vector field $\underline L=\partial_u$ is geodesic and null and we declare it to be future directed. The Killing vector field $T=\partial_t$ satisfies $g(T,T)=f$. Letting $f(u)=u^2F(u)$ for a smooth function $F(u)$ makes $H=\{u=0\}$ an extremal Killing horizon and $\partial_t$ is future directed where it is causal. The conjugate null vector to $\underline L$ is $L=\partial_t + \tfrac 12 f \partial_u$ such that $g(\underline L, L) = -1$. The symmetry spheres $S_{t_0,u_0}=\{t=t_0\}\cap\{u=u_0\}$ are trapped if 
\begin{align*}
    Lr&<0\\
    \underline Lr&<0,
\end{align*}
which can be more simply written as
\begin{align*}
 f(u)r'(u)    &<0\\
   r'(u) &<0.
\end{align*}
From this we see that $r'(u)<0$ implies no antitrapped spheres of symmetry and $f(u)<0$ for $u<0$ and $f(u)>0$ for $u>0$ implies the symmetry spheres to the past (respectively, future) of $H$ are untrapped (respectively, trapped). This also makes $T$ timelike to the past of $H$. Since we require $f(u)=u^2F(u)$ but also that $f$ changes sign, we in fact have $f(u)=u^3\tilde F(u)$.

We will now see which restrictions on $f$, $r$, and $\ve$ enforce the dominant energy condition. The Einstein tensor of $g$ is given by 
\begin{equation}
    G=-\theta g_\mathcal Q - \frac{2r''}{r}du^2 +\zeta r^2 g_{S^2},\label{A:Einstein}
\end{equation}
where
\begin{equation*}\theta\doteq \frac{1+(r')^2 f +rf'r'+2frr''}{r^2},\quad \zeta\doteq -\tfrac 12   f'' -  \frac{r'}{r} f'-\frac{r''}{r} f .
\end{equation*}
For $f(u)=u^3\tilde F(u)$ and $r(u)$ fixed  and $\ve>0$ sufficiently small, we have $\theta(u)>0$ and $|\zeta(u)|\ll \theta(u)$ for $|u|<\ve$. 

Let $X$ be a future causal vector, that is
\begin{equation} \label{eq:X-causal-vector} g_\mathcal Q (X,X) + r^2 g_{S^2}(X,X) \leq 0, \quad g_{\mathcal Q}(\underline L + L, X) < 0.\end{equation} To show that $-G(\cdot,X)^\sharp$ is causal or zero, it suffices to show that 
\begin{equation}
    g_{\mu\nu}G^\mu{}_\rho G^\nu{}_\sigma X^\rho X^\sigma \le 0 .\label{A:causal}
\end{equation}
To simplify the calculation, we assume $r''$ vanishes identically and then the left-hand side of \eqref{A:causal}, using \eqref{A:Einstein} and \eqref{eq:X-causal-vector}, can be estimated as 
\begin{equation*}
g_{\mu\nu}G^\mu{}_\rho G^\nu{}_\sigma X^\rho X^\sigma= \theta^2 g_\mathcal Q(X,X)+ \zeta^2 r^2 g_{S^2}(X,X)\leq    (\zeta^2 - \theta^2) r^2 g_{S^2}(X,X).\end{equation*}  
 Since   $ \zeta^2 -\theta^2 \le 0$, this proves that $-G(\cdot,X)^\sharp$ is causal. 
To show that $-G(\cdot,X)^\sharp$ is future directed we compute using \eqref{eq:X-causal-vector} \begin{equation*} g (\underline L + L , -G(\cdot,X)^\sharp )=-G(\underline L + L  ,X) = \theta g_{\mathcal Q} (\underline L+L ,X) < 0 . \end{equation*}
 
Finally, an explicit example of a metric satisfying all of our conditions is
\begin{equation*}
g=u^3dt^2 - 2dtdu + (1-u)^2g_{S^2}.\qedhere
\end{equation*}
\end{proof}

\begin{rk}\label{rk:App1}
Extremal Reissner--Nordstr\"om has $f(u)\sim -u^2$. One might say that an extremal horizon constructed in the above manner with $f(u)$ vanishing faster than $u^2$ is a \emph{degenerate extremal horizon}.   
\end{rk}

\section{General trapped and antitrapped surfaces in spherically symmetric spacetimes}\label{app-B}

In this appendix we infer the absence of nonspherically symmetric trapped or antitrapped surfaces from the absence of spherically symmetric trapped or antitrapped surfaces.

Our definition of trapped surface is completely standard, see \cref{def:trapped} below. (Note that we assume trapped surfaces to be closed and strictly trapped.) Our definition of antitrapped is as in \cite{TheBVPaper, Kommemi13}, i.e., an antitrapped surface is closed and past weakly outer trapped, see \cref{defn:antitrapped} below.

\begin{prop}\label{no-trapped}
Let $(\mathcal M^4,g)$ be a spherically symmetric spacetime as defined in \cref{subsec:einstein-maxwell-sph-symm}. Then there are no trapped surfaces contained in the sets
\begin{align}
A&\doteq \{p\in \mathcal M:\partial_u r \ge 0\},\\
B&\doteq \{p\in \mathcal M:\partial_v r\ge 0\}.
\end{align}

\end{prop}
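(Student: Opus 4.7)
The plan is to argue by contradiction using a maximum principle applied to the null coordinate $v$ (respectively $u$) restricted to the compact surface $S$, together with a rigidity coming from spherical symmetry which pins down one of the null expansions of $S$ at the extremum. Suppose $S$ is a trapped surface contained in $A = \{\partial_u r \ge 0\}$. I would first pick a point $p \in S$ at which $v|_S$ attains its maximum. Then $dv$ vanishes on $T_pS$, so $T_pS$ is contained in $T_p\{v = v(p)\}$, the tangent space of a null hypersurface whose null generator is $\partial_u$. Since $\partial_u$ is future-directed null, orthogonal to the whole hypersurface, and cannot lie in the spacelike $2$-plane $T_pS$, it supplies one of the two future-null normals of $S$ at $p$.

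The heart of the argument is then the identity
\[
\theta_{\partial_u}^{S}(p) \;=\; \frac{2\partial_u r(p)}{r(p)},
\]
valid at this $p$ regardless of how $T_p S$ sits inside the null hypersurface. To prove it I would first check by a brief Christoffel-symbol computation that the null second fundamental form $\chi(X,Y) \doteq g(\nabla_X \partial_u, Y)$ has $\partial_u$ in its kernel, so that $\chi$ descends to a well-defined symmetric bilinear form on the two-dimensional quotient $T_p\{v = v(p)\}/\Bbb R \partial_u$, and its trace over any spacelike $2$-plane transverse to $\partial_u$ inside the null hypersurface agrees with the trace of the descended tensor. The quotient inherits the round metric of area-radius $r(p)$ from the spacetime, and the descended $\chi$ must be invariant under the $SO(2)$ isotropy of $p$ (acting as standard rotations on the quotient); hence it is a scalar multiple of that metric, and fixing the scalar by computing on the symmetry sphere through $p$ yields $\chi = (\partial_u r /r)\,h$, whence the identity for $T_pS$.

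With the expansion identity in hand the contradiction is immediate: strict trappedness of $S$ forces $\theta_{\partial_u}^S(p) < 0$, hence $\partial_u r(p) < 0$, in contradiction with $p \in S \subset A$. The case $S \subset B$ is handled by the same argument after swapping $u \leftrightarrow v$ and $\partial_u \leftrightarrow \partial_v$, so picking $p$ at the maximum of $u|_S$ and using the null hypersurface $\{u = u(p)\}$ with generator $\partial_v$. I expect the only subtle step to be verifying cleanly the rigidity of $\chi$ on the screen (carefully using both that $\chi(\partial_u, \cdot)$ vanishes on the null hypersurface and that any $SO(2)$-invariant symmetric bilinear form on the screen is a pure trace); once it is in place both the maximum principle and the final contradiction become routine.
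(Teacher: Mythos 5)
Your overall strategy --- extremize a null coordinate over the compact surface $S$ and compare a null expansion of $S$ at the touching point with that of the symmetry cone it touches there --- is the same as the paper's, which for $S\subset B$ minimizes $u$ over $\pi(S)$ and invokes a comparison principle for tangent null hypersurfaces (\cref{comparison-principle}). However, there is a genuine gap at the step you call the heart of the argument: the identity $\theta^{\partial_u}_S(p)=2\partial_u r(p)/r(p)$ is false. What your screen-independence/$SO(2)$-rigidity argument actually computes is $\operatorname{tr}_{T_pS}\chi$ with $\chi(X,Y)=g(\nabla_X\partial_u,Y)$, i.e.\ the expansion of the ambient null hypersurface $\{v=v(p)\}$ evaluated on the screen $T_pS$; that quantity does equal $2\partial_u r(p)/r(p)$, and this part of your write-up is correct (it is the same fact the paper uses when asserting that $\theta^L$ is independent of the orthonormal pair). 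But the null expansion of $S$ is computed by differentiating a null field that is normal to $S$ \emph{along all of $S$}, and $\partial_u$ is normal to $S$ only at the single point $p$. Writing $N$ for the null normal field of $S$ with $N(p)=\partial_u(p)$ and using $g(\partial_u,Z)=-\tfrac{\Omega^2}{2}\,dv(Z)$, one finds at a critical point of $v|_S$
\[
\theta^{\partial_u}_S(p)=\frac{2\partial_u r(p)}{r(p)}+\frac{\Omega^2}{2}\,\Delta_S\bigl(v|_S\bigr)(p),
\]
where $\Delta_S(v|_S)(p)$ is the trace of the Hessian of $v|_S$ (well defined at a critical point). At your chosen point, a \emph{maximum} of $v|_S$, this correction is $\le 0$, so you only obtain $\theta^{\partial_u}_S(p)\le 2\partial_u r(p)/r(p)$ --- an upper bound by a nonnegative quantity, which gives no contradiction with strict trappedness $\theta^{\partial_u}_S(p)<0$.

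The fix is to take the \emph{minimum} of $v|_S$ (respectively of $u|_S$ for $S\subset B$): there the Hessian term is $\ge 0$ and you get $\theta^{\partial_u}_S(p)\ge 2\partial_u r(p)/r(p)\ge 0$ on $A$, which is the desired contradiction. Geometrically, the minimum is the point where $S$ touches the cone from its causal future, which is exactly the configuration in which the paper's comparison principle yields the inequality in the useful direction; the maximum puts $S$ on the wrong side. With this change --- plus the observation that the touching point cannot lie on the center $\Gamma$ (for $S\subset A$ because $\partial_u r<0$ on $\Gamma$ so $A\cap\Gamma=\emptyset$; for $S\subset B$ because the cone is not regular at $\Gamma$, as the paper notes) --- your argument becomes correct and essentially coincides with the paper's proof, with your explicit Hessian identity playing the role of \cref{comparison-principle}.
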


\begin{rk}
Note that there could be trapped surfaces contained in $A\cup B$. There might also be trapped surfaces which merely intersect $A$ or $B$.   
\end{rk}

\begin{prop}\label{no-antitrapped}
Let $(\mathcal M^4,g,F,A,\phi)$ be a spherically symmetric spacetime
arising as the maximal future globally hyperbolic development from one-ended asymptotically flat Cauchy data for the EMCSF system with no antitrapped spheres of symmetry as in \cite{Kommemi13}. Then:
\begin{enumerate}
\item If $S$ is a trapped surface in $\mathcal M$, then $S\cap J^-(\mathcal I^+)=\emptyset$. 

\item $\mathcal M$ does not contain any antitrapped surfaces. 
\end{enumerate}
\end{prop}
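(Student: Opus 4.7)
I will handle the two assertions separately: assertion (2) is a short corollary of \cref{no-trapped}, while assertion (1) combines \cref{no-trapped} with a Penrose-type focusing argument.

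For \emph{assertion (2)}, the absence of antitrapped spheres of symmetry on the initial Cauchy hypersurface, together with the transport equation \eqref{eq:CSF-Raychaudhuri-u} (which forces $\partial_u r/\Omega^2$ to be monotone nonincreasing in $u$), propagates to yield $\partial_u r<0$ globally on $\mathcal M$; this is part of the standing setup of the Kommemi framework. Thus $\mathcal M = A \doteq \{\partial_u r\le 0\}$. The time-reversed version of \cref{no-trapped}, obtained by the identical barrier argument applied at the $u$-minimum of a putative closed antitrapped $S$ (which at that point forces $\partial_u r>0$ at the symmetry sphere through it), implies that no antitrapped surface can be contained in $A$. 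Since $A=\mathcal M$, there are no antitrapped surfaces in $\mathcal M$.

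For \emph{assertion (1)}, let $S\subset\mathcal M$ be a closed trapped surface. I would first apply the pointwise version of the barrier argument underlying \cref{no-trapped} at the $v$-maximum point $p\in S$: trappedness of $S$ forces $\partial_v r(p)<0$ at the symmetry sphere $\Sigma_p$ through $p$. In the Kommemi framework, the domain of outer communication $J^-(\mathcal I^+)\cap\mathcal Q$ is contained in the regular region $\{\partial_v r>0\}$, so $p\in \mathcal{BH}\setminus\mathcal H^+$. To upgrade this pointwise containment to the full statement $S\subset\mathcal{BH}$, I argue by contradiction: if some $q\in S$ lay in $J^-(\mathcal I^+)$, then $\mathcal I^+\subset J^+(q)\subset J^+(S)$. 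On the other hand, trappedness of $S$ combined with the dominant energy condition for the EMCSF matter gives $\partial_\lambda \theta_+ \le -\tfrac12 \theta_+^2$ along the outgoing null congruence from $S$, so every generator develops a focal point in finite affine parameter. A classical Penrose focusing argument (see, e.g., \cite[Ch.~9]{HE73} or \cite[Ch.~12]{Wald-GR}) then implies $J^+(S)\cap\mathcal I^+=\emptyset$, a contradiction.

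The \emph{main obstacle} is the upgrade step in assertion (1): the pure barrier argument controls $\partial_v r$ only at the $v$-maximum of $S$ and does not by itself preclude the existence of points of $S$ on the exterior side of $\mathcal H^+$. The resolution uses Penrose's focusing theorem, which is classical in principle but must be carefully translated into the Kommemi framework---where $\mathcal I^+$ is defined through the explicit boundary decomposition of $\mathcal Q$ in \cite{Kommemi13} rather than via an abstract conformal completion---in order to conclude that $J^+(S)$ cannot reach $\mathcal I^+$ when $S$ is trapped. By contrast, assertion (2) is essentially automatic once global monotonicity $\partial_u r<0$ has been established.
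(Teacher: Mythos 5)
Your proof of assertion (2) is essentially the paper's: global $\partial_u r<0$ plus a barrier/comparison argument at an extremum of the projection $\pi(S)\subset\mathcal Q$. One correction: the relevant extremum is the $v$-\emph{maximum}, not the $u$-minimum. At the $v$-maximum $S$ is tangent from the past to the ingoing cone $\underline C_{v_0}$, and \cref{comparison-principle} bounds the \emph{ingoing} expansion of $S$ above by (a positive multiple of) $\partial_u r<0$, contradicting antitrappedness; at the $u$-minimum the comparison only relates the \emph{outgoing} expansion of $S$ to $\partial_v r$ and says nothing about $\partial_u r$.

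Assertion (1) is where there is a genuine gap. First, the same extremum confusion appears: tangency at the $v$-maximum does not force $\partial_v r(p)<0$; the statement "trappedness forces $\partial_v r<0$ at the symmetry sphere through the contact point" is obtained at the $u$-\emph{minimum}, where the outgoing null hypersurface of $S$ is tangent to $C_{u_0}$ from the future and \cref{comparison-principle} gives $\theta_{\mathrm{out}}(p)\ge \tfrac{2}{r}\partial_v r(p)$. Second, and more seriously, your "upgrade step" outsources the whole content of the claim to Hawking's focusing theorem ($S$ trapped $\Rightarrow J^+(S)\cap\mathcal I^+=\emptyset$). The paper explicitly remarks, immediately after the proposition, that this classical result would apply only "under stronger assumptions on $\mathcal I^+$": in the Kommemi framework $\mathcal I^+$ is merely a portion of the boundary of $\mathcal Q\subset\Bbb R^{1+1}_{u,v}$ characterized by $r\to\infty$, with no conformal completion, no asymptotic simplicity or future asymptotic predictability, and only $C^k$ regularity, so the compactness/achronality machinery behind Hawking's Proposition 9.2.1 is not available off the shelf. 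Acknowledging that the argument "must be carefully translated" is not a proof; that translation is precisely the difficulty. The paper avoids it entirely with a purely local argument that you have all the ingredients for but do not assemble: since $\mathcal H^+$ is an outgoing cone $\{u=u_{\mathcal H}\}$, the set $J^-(\mathcal I^+)$ is a $u$-sublevel set in $\mathcal Q$, so if $S\cap J^-(\mathcal I^+)\ne\emptyset$ then the $u$-minimum of $\pi(S)$ is itself attained at a sphere $S_{u_0,v_0}\subset J^-(\mathcal I^+)$, where $\partial_v r>0$ (Raychaudhuri plus $r\to\infty$ along outgoing cones reaching $\mathcal I^+$); the comparison principle at that very point then gives $\theta_{\mathrm{out}}(p)\ge\tfrac{2}{r}\partial_v r(p)>0$, so $S$ is not trapped. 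No global focusing argument, and no hypotheses on $\mathcal I^+$ beyond those already in \cite{Kommemi13}, are needed.
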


\begin{rk}
Under stronger assumptions on $\mathcal I^+$, the first part of the previous proposition would follow from a classical result of Hawking \cite[Proposition 9.2.1]{Hawking72BH, HE73}. 
\end{rk}

For the proofs, we recall some facts from Lorentzian geometry \cite{Galloway}. Let $H$ be a null hypersurface in a spacetime $(\mathcal M^4,g)$, i.e., $H$ is a 3-dimensional submanifold of $\mathcal M$ and admits a future-directed normal vector field $L$ which is null and whose integral curves can be reparametrized to be null geodesics. We say that $L$ is a \emph{(future-directed) null generator} of $H$.

 The \emph{second fundamental form of $H$ with respect to $L$} is given by 
 \begin{equation}
     B^L(X,Y)=g(\nabla_XL,Y)
 \end{equation}
 for $X,Y\in TH$. If $e_1$ and $e_2$ are an orthonormal pair of spacelike vectors at $p\in H$, we define the \emph{null expansion of $H$ with respect to $L$} by
 \begin{equation}
     \theta^L=B^L(e_1,e_1)+B^L(e_2,e_2)\label{theta-defn}
 \end{equation}
 at $p$, and this definition is independent of the pair $e_1$ and $e_2$. If $\tilde L$ is another future-directed null generator of $H$, then there is a positive function $f$ on $H$ such that $\tilde L=fL$. In this case, we have 
 \begin{equation}
     \theta^{\tilde L}= f\theta^L.\label{homogen}
 \end{equation}

 \begin{lem}[Comparison principle for null hypersurfaces]\label{comparison-principle}
Let $H_1$ and $H_2$ be null hypersurfaces in $(\mathcal M^4,g)$, with $H_1$ to the future of $H_2$ and generated by $L_1$ and $L_2$, respectively. If $H_1$ and $H_2$ are tangent at a point $p$, and $L_1(p)=L_2(p)$, then 
\begin{equation}
\theta_{H_1}^{L_1}(p)\ge \theta_{H_2}^{L_2}(p).\label{B-inequality}
\end{equation}
 \end{lem}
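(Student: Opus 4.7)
The plan is to prove this via a standard maximum-principle-type argument: locally represent $H_1$ and $H_2$ as zero sets of smooth defining functions whose gradients recover the null generators, and deduce the inequality by comparing their ambient Hessians on a common spacelike $2$-plane inside the coincident tangent spaces $T_pH_1 = T_pH_2$.

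Shrinking to a neighborhood $U$ of $p$, I would choose smooth defining functions $f_i : U \to \R$ with $H_i \cap U = \{f_i = 0\}$ and $L_i = -\nabla f_i$ pointwise along $H_i$. This normalization is available because $\nabla f_i|_{H_i}$ is automatically a null vector field tangent to $H_i$ (hence proportional to $L_i$), and one may multiply $f_i$ by a positive function whose boundary values on $H_i$ equal the reciprocal of that proportionality factor. The hypothesis $L_1(p) = L_2(p)$ then forces $df_1(p) = df_2(p)$, so $h \doteq f_2 - f_1$ satisfies $h(p) = 0$ and $dh(p) = 0$, making $p$ a critical point of $h$ on $U$ with a well-defined ambient Hessian $\nabla^2 h(p)$.

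With this normalization, $L_2 = -\nabla f_2$ is future-directed null on $H_2$, and the future side of $H_2$ is characterized by $\{f_2 > 0\}$ near $p$; one verifies this sign convention on, e.g., the outgoing light cone in Minkowski with $f_2 = t-r$. Consequently, the hypothesis that $H_1$ lies locally to the future of $H_2$ gives $f_2 \geq 0$ on $H_1$ near $p$, i.e., $h|_{H_1} \geq 0$ with equality at $p$. Since $dh(p)=0$, this makes $p$ a local minimum of $h|_{H_1}$, and the ambient Hessian $\nabla^2 h(p)$ restricted to $T_pH_1 \times T_pH_1$ is positive semidefinite.

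The argument closes by a direct computation of $\theta^{L_i}$. Since $T_pH_1 = T_pH_2$, I would pick a single orthonormal spacelike pair $\{e_1, e_2\}$ spanning a $2$-plane in $T_pH_1 = T_pH_2$ transverse to $L_1(p) = L_2(p)$. From $L_i = -\nabla f_i$ along $H_i$ and the definition \eqref{theta-defn}, together with $\nabla^2 f_i(X,Y) = g(\nabla_X \nabla f_i, Y)$, one obtains $\theta^{L_i}_{H_i}(p) = -\nabla^2 f_i(e_1, e_1) - \nabla^2 f_i(e_2, e_2)$, and subtracting yields
\[
\theta^{L_1}_{H_1}(p) - \theta^{L_2}_{H_2}(p) = \nabla^2 h(e_1, e_1) + \nabla^2 h(e_2, e_2) \geq 0,
\]
using $e_1, e_2 \in T_pH_1$ and the positive semidefiniteness established above. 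The only delicate point is the sign-convention bookkeeping identifying the future side of $H_2$ with $\{f_2 > 0\}$ under the normalization $L_2 = -\nabla f_2$; once that is settled, the rest is just linear algebra together with the elementary one-variable maximum principle.
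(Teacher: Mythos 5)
Your argument is correct and is essentially the paper's own proof in invariant language: the paper likewise represents $H_1,H_2$ as zero sets of defining functions (graphs $u=f_i(v,y,z)$ in normal coordinates at $p$, with generators $Z_i=\grd\zeta_i$), deduces $f_1\ge f_2$ from the causal ordering, and applies the second derivative test at the tangency point together with the identification $B^{Z_i}_{H_i}(\partial_y,\partial_y)(p)=\partial_y^2 f_i(p)$. The differences---your exact normalization $L_i=-\nabla f_i$ and restriction of $h=f_2-f_1$ to $H_1$, versus the paper's comparison of the two graph functions over a common base in normal coordinates---are presentational rather than mathematical.
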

\begin{proof}
By \eqref{homogen}, it suffices to prove \eqref{B-inequality} with respect to some choice of null generators of $H_1$ and $H_2$ which agree at $p$. Let $(t,x,y,z)$ be normal coordinates for $g$ based at $p$ so that $\partial_t$ is future-directed and $\{\tfrac 12(\partial_t +\partial_x),\partial_y,\partial_z\}$ spans $T_pH_1=T_pH_2$. We introduce approximate null coordinates $u=t-x$ and $v=t+x$, so that
\begin{equation*}
    \partial_u = \tfrac12(\partial_t-\partial_x), \quad
    \partial_v = \tfrac12(\partial_t+\partial_x).
\end{equation*}
Note that $\partial_u$ and $\partial_v$ are only guaranteed to be null at $p$. 

By the implicit function theorem, there exist functions $f_1(v,y,z)$ and $f_2(v,y,z)$ defined near $p$, so that, upon defining
\begin{equation*}
    \zeta_1(u,v,y,z)\doteq f_1(v,y,z)-u, \quad \zeta_2(u,v,y,z)\doteq f_2(v,y,z)-u,
\end{equation*}
we have $H_i=\{\zeta_i=0\}$ for $i=1,2$. Note that $f_1(p)=f_2(p)=0$ and that $p$ is a critical point for $f_1$ and $f_2$.  The vector fields $Z_i=\grd \zeta_i$ are null on $H_i$ and define there future-directed null generators. In particular, we have $Z_1(p)=Z_2(p)=\partial_v|_p$.

We first show that $f_1\ge f_2$ near $p$. If a point $q=(u,v,y,z)$ lies to the past of $H_1$, then $\zeta_1(q)\ge 0$. If $q\in H_2$, then $\zeta_2(q)=0$, so combining these inequalities yields 
\[f_1(v,y,z)= \zeta_1(q)+u\ge \zeta_2(q)+u=f_2(v,y,z),\] as claimed.

We now show that \begin{equation}
    B^{Z_1}_{H_1}(\partial_y,\partial_y)(p)\ge B^{Z_2}_{H_2}(\partial_y,\partial_y)(p),\label{proof-ineq}
\end{equation}
the corresponding statement and proof for $\partial_z$ being the same. By \eqref{theta-defn} this will complete the proof. Since $f_1\ge f_2$ near $p$, $p$ is a local minimum for $f_1-f_2$. It follows that 
\begin{equation}\partial_y^2 (f_1-f_2)(p)\ge 0\label{2nd-deriv}\end{equation}
by the second derivative test. Since we are working in a normal coordinate system, 
\[B^{Z_i}_{H_i}(\partial_y,\partial_y)(p)=g(\nabla_{\partial_y}\nabla\zeta_i,\partial_y)(p)=\partial^2_y f_i(p),\]
whence \eqref{2nd-deriv} proves \eqref{proof-ineq}, which completes the proof. 
\end{proof}

\begin{defn}\label{def:trapped}
A closed spacelike 2-surface $S$ in a spacetime $(\mathcal M^4,g)$ is always the intersection of two locally defined null hypersurfaces. We say that $S$ is \emph{trapped} if both of these hypersurfaces have negative future null expansion along $S$. 
\end{defn}

\begin{proof}[Proof of \cref{no-trapped}]
We show that there is no trapped surface $S\subset B$. The argument for $S\subset A$ is analogous after noting that $A\cap \Gamma=\emptyset$ by our definition of spherical symmetry and convention for $u$. 

Let $S\subset \{\partial_v r\ge 0\}$ be a closed 2-surface. Let $\pi:\mathcal M\to \mathcal Q$ be the projection of the spherically symmetric spacetime to its Penrose diagram. Then $\pi(S)$ is a compact subset of $\mathcal Q$ and hence $u$ attains a minimum $u_0$ on $\pi(S)$. 

Therefore, there exists a symmetry sphere $S_{u_0,v_0}$ on which $\partial_v r\ge 0$ such that $S$ lies to the future of $C_{u_0}$ and is tangent to this cone at a point $p\in S_{u_0,v_0}$. Note that $p\notin\Gamma$ because $C_{u_0}$ is not regular there. The condition $\partial_v r\ge 0$ means $C_{u_0}$ has nonnegative future expansion. By \cref{comparison-principle}, one of the two null hypersurfaces emanating from $S$ also has nonnegative future expansion, so $S$ is not trapped.   
\end{proof}

\begin{defn}\label{defn:antitrapped}
Let $(\mathcal M^4,g)$ be a spacetime satisfying the hypotheses of \cref{no-antitrapped}.
A closed spacelike 2-surface $S$ which bounds a compact spacelike hypersurface $\Omega$ is said to be \emph{antitrapped} if its future-directed inward null expansion is nonnegative. Here the (locally defined) inward null hypersurface $H_{\mathrm{in}}$ emanating from $S$ is chosen to be the one which smoothly extends the boundary of the causal past of $\Omega$. 
\end{defn}

\begin{proof}[Proof of \cref{no-antitrapped}]
1. Since $r\to \infty$ at $\mathcal I^+$ \cite{Kommemi13}, Raychaudhuri's equation \eqref{eq:CSF-Raychaudhuri-v} implies $\partial_v r>0$ in $J^-(\mathcal I^+)$. Let $S$ be a closed 2-surface such that $S\cap J^-(\mathcal I^+)\ne\emptyset$. Let $\pi:\mathcal M\to \mathcal Q$ be the projection to the Penrose diagram. Then $u$ attains a minimum $u_0$ on $\pi(S)$. By the causal properties of $J^-(\mathcal I^+)$, there exists a symmetry sphere $S_{u_0,v_0}\subset J^-(\mathcal I^+)$ such that $S$ lies to the future of $C_{u_0}$ and is tangent to the cone at $p\in S_{u_0,v_0}$. Arguing as in the proof of \cref{no-trapped}, we see that one of the null hypersurfaces emanating from $S$ has positive future expansion, so $S$ is not trapped. 

2. Let $\pi:\mathcal M\to \mathcal Q$ be again the projection. Then $v$ attains a maximum $v_0$ on $\pi(S)$ and again there exists a non-central symmetry sphere $S_{u_0,v_0}$ such that $\partial_u r(u_0,v_0)<0$, $S$ lies to the past of $C_{v_0}$, and is tangent to the cone at a point $p\in S_{u_0,v_0}$. Now $C_{v_0}$ is tangent to $H_\mathrm{in}$ at $p$ and lies to the future, so by \cref{comparison-principle}, $H_{\mathrm{in}}$ has negative null expansion at $p$. Therefore, $S$ is not antitrapped. 
\end{proof}

 \end{appendix}
\printbibliography[heading=bibintoc] 

\end{document}